\documentclass[12pt]{article}
\usepackage[letterpaper,margin=1in]{geometry}
\usepackage{amsmath,amssymb,amsthm,mathtools}
\usepackage{txfonts}
\usepackage{bm}
\usepackage{booktabs,array,tabularx,multirow}
\usepackage{enumitem}
\usepackage{graphicx}
\usepackage{url}
\usepackage[round,authoryear]{natbib}
\setcitestyle{aysep={}}
\usepackage[hidelinks,hyperfootnotes=false]{hyperref}
\usepackage{setspace}
\usepackage{float}
\makeatletter
\renewcommand{\maketitle}{%
  \begin{center}\normalfont\normalsize\bfseries\@title\par
  \normalfont\normalsize\@author\par\end{center}\@thanks%
  \vspace{0.5\baselineskip}
}
\makeatother
\renewenvironment{abstract}{%
  \begin{center}\normalfont\bfseries Abstract\end{center}%
  \normalsize\noindent
}{\par}

\hypersetup{
  pdftitle={Identifying and Measuring Treatment-Effect Heterogeneity across Sites through Within-Site Randomization},
  pdfauthor={Takahiro Hoshino},
  pdfsubject={site effect heterogeneity, explained and residual variation, hybrid randomization, inference, experimental design},
  pdfkeywords={cluster randomized trials, multisite experiments, covariate adjustment, measurement error, prediction risk},
  hypertexnames=false
}

\setlist[itemize]{label=\ensuremath{\bullet},leftmargin=2.0em,itemsep=0.15em,topsep=0.25em}
\setlist[enumerate]{leftmargin=2.25em,itemsep=0.15em,topsep=0.25em}
\newcolumntype{Y}{>{\raggedright\arraybackslash}X}
\newcolumntype{R}[1]{>{\raggedright\arraybackslash}p{#1}}

\newtheorem{theorem}{Theorem}
\newtheorem{proposition}{Proposition}
\newtheorem{corollary}{Corollary}

\theoremstyle{definition}
\newtheorem{assumption}{Assumption}

\theoremstyle{remark}

\newcommand{\E}{\mathbb{E}}

\newcommand{\Var}{\operatorname{Var}}
\newcommand{\Cov}{\operatorname{Cov}}

\newcommand{\tr}{\operatorname{tr}}

\newcommand{\diag}{\operatorname{diag}}

\newcommand{\one}{\mathbf{1}}

\newcommand{\trans}{\mathsf{T}}

\title{Identifying and Measuring Treatment-Effect Heterogeneity across Sites\\ through Within-Site Randomization}
\author{Takahiro Hoshino\thanks{\protect\normalsize\protect\setstretch{1}Corresponding author: 2-15-45 Mita, Minato-ku, Tokyo 108-8345, Japan Email: \href{mailto:bayesian@keio.jp}{bayesian@keio.jp}.}\\Keio University and RIKEN, Tokyo, Japan}
\date{}
\begin{document}
\maketitle
\begin{abstract}
Parallel cluster-randomized trials can identify mean effects by site characteristics and risk differences between fixed predictors, while leaving residual heterogeneity and absolute prediction risk (mean squared error of site-effect predictions) unidentified. We acquire missing same-site cross-arm information by randomizing participants within selected sites. Under randomized site selection with compatible participant reference distributions and intervention versions, we characterize necessary and sufficient observation patterns for quadratic targets in a model with unrestricted joint distributions of site response means. From finite independent participant samples, we construct unbiased moments and exact covariances. Under regularity conditions, we prove consistency as informative paired site counts grow, allowing fixed within-site sample sizes. For a fixed predictor, we optimize paired-site selection and unequal cell counts using corrected risk variances, pilot error, and costs. In an International Stroke Trial information-deletion experiment, pilot allocation reduces the conditional variance of reconstructing the complete-information corrected risk estimate by 34.3\% relative to uniform allocation with equal expected cost. In 40 Honduras villages, correction reduces estimated residual variation from 0.04081 to 0.00282 in squared log-consumption units for the same fixed explanatory basis; the explained share rises despite a smaller explained component, illustrating how participant sampling noise can inflate estimated residual variation.
\end{abstract}
\noindent\textbf{Keywords:} Cluster randomized trials; Multisite experiments; Covariate adjustment; Measurement error; Prediction risk.
\section{Introduction}
\label{sec:intro}
In parallel cluster-randomized trials, researchers can compare mean squared errors of fixed site-effect predictors and learn how average effects vary with site characteristics, yet remain unable to measure absolute prediction risk or variation among similar sites. This residual variation is the mean squared error of the population conditional-mean predictor. We determine which observations identify these targets, how to measure them with finite participant samples, and how to acquire observations under a fixed budget.

The missing information is the product of the same site's treatment and control response means. A parallel cluster-randomized trial observes each site under only one arm. Without restrictions on the joint distribution of site response means, site populations can therefore have identical arm-specific response distributions and mean effects but different residual variation, even with arbitrarily many participants and single-arm sites. Randomizing independent participants to different arms within selected sites recovers the missing products. We use known positive selection probabilities to extend these measurements to the target site population. We require compatible participant reference distributions and intervention versions across experimental patterns.

To acquire cross-arm information within a study's budget, we add within-site comparisons to a prespecified site network for an individually delivered intervention with unchanged content across assignment patterns. We account for additional staff and participant measurements through the incremental cost $c_n(w)=c_0(w)+c_1(w)n$, combining paired-site and participant costs. We include full pairing as a candidate when feasible and use positive pairing probabilities across target contexts when resources support only partial pairing. We choose sites and participant counts while holding the treatment contrast and target population fixed.

\citet{DingFellerMiratrix2019} decompose individual treatment-effect variation and bound its residual component. \citet{Hines2025} use identified prediction-risk differences while an absolute-risk term remains unidentified. \citet{Lu2023} standardize participant covariates for site comparisons. \citet{LiLuedtke2023} and \citet{LiGilbertDuanLuedtke2025} develop efficient estimation by fusing aligned and weakly aligned sources. We determine which same-site arm combinations the experiment must create and connect their observation requirements to corrected measurement and allocation for a specified quadratic target.

We establish three connected results. First, Theorem~\ref{thm:identification} characterizes necessary and sufficient observation patterns for quadratic targets in a model with unrestricted joint distributions and identified projected laws, including multiple arms and subgroups. For coordinate observation patterns, the condition specifies which arm-by-subgroup means must be observed together within a site. Second, we construct unbiased moments from finite independent within-site samples (Proposition~\ref{prop:pair}). We also obtain exact covariances for unequal arm-by-reference cells under finite second response moments (Supplement Proposition~S1). Under Theorem~\ref{thm:rates}'s regularity conditions, we can consistently estimate heterogeneity as informative paired sites accumulate even with fixed within-site sample sizes. The convergence rate for estimating explained variation depends on total site counts, whereas the rate for residual variation depends on paired site counts. A Gaussian submodel gives an information lower bound showing why more single-arm observations cannot replace the required paired information.

Third, we derive variance-minimizing paired-site probabilities and participant counts for a specified corrected risk estimator (Theorems~\ref{thm:allocation} and \ref{thm:jointdesign}), including unequal cell-count menus in the supplement. We base cost-adjusted allocation on measurement noise, between-site variation in squared prediction errors, pilot estimation error, and costs \citep{Neyman1934}. We use an independent pilot to fix the predictor and estimate these inputs before evaluation. Sites with the greatest predicted benefits need not be those most informative about residual error.

We evaluate corrected measurement and allocation in simulations and two applications. Simulations compare estimation and allocation against known population targets (Section~\ref{sec:simulation}). In the International Stroke Trial (IST), we evaluate placement of paired information within an observed hospital frame; in Honduras, we separate fixed-basis village-effect variation from finite-household noise.

In IST, we use 75 pilot hospitals to fix the predictor and allocation before evaluation in 151 hospitals with a common age reference (Section~\ref{sec:ist_application}). Pilot allocation favors small and middle-size hospitals, where corrected risk features have greater estimated variance. Across 1,000 information deletions, pilot allocation reduces conditional reconstruction variance by 34.26\% and root mean squared reconstruction error by 17.75\% relative to uniform allocation. Both designs reconstruct the complete-information corrected risk estimate in that fixed frame, with expected 45.3 paired evaluation hospitals, the same pilot cost, and fixed participant arrays. The budget covers expected incremental pairing costs; the gains quantify the precision of reconstructing the corrected risk estimate for the unchanged predictor.

In Honduras, we study the within-village treatment contrast in the same 40 intervention villages using a fixed basis of recorded population and distance (Section~\ref{sec:hondurasapplication}). Under independent household response sampling and nonselective outcome availability within village and arm, correction reduces the residual estimate from 0.04081 to 0.00282 in squared log-consumption units. Although the explained component also decreases, the point estimate of the explained share rises from 14.0\% to 62.1\% without changing the basis. We estimate finite-village projections using each village's participant distribution and report component intervals to quantify their substantial uncertainty. We use existing paired data in Honduras to evaluate corrected measurement and the IST hospital frame to evaluate where to acquire paired measurements more precisely.

\section{Framework and Identification}
\subsection{Site-level treatment effects}
\label{sec:setup}
Sites $j=1,\ldots,J$ are sampled from a target site population $Q$. Within site $j$, $Y_{ij}(a)$ is the potential outcome under individual treatment $a\in\{0,1\}$ and $\bm{X}_{ij}$ is a baseline covariate vector. We describe heterogeneity using the recorded site characteristic $W_j$, measured before assignment of the randomization pattern. Unless otherwise stated, expectations below are over $Q$, so each site receives equal weight. Known transport weights can instead define a different target population of sites.

For a specified participant reference distribution $P^*$, we define
\begin{equation}
 \theta_{aj}^*=\int m_{aj}(\bm{x})\,dP^*(\bm{x}),\qquad
 m_{aj}(\bm{x})=\E\{Y_{ij}(a)\mid \bm{X}_{ij}=\bm{x},j\},\qquad
 \tau_j=\theta_{1j}^*-\theta_{0j}^*.
 \label{eq:thetaag}
\end{equation}
Using a common reference, we standardize recorded participant characteristics while retaining institutional context. We can instead specify the same site-specific reference $\Pi_j$ in both arms before assignment of the randomization pattern, with $\theta_{aj}^{\Pi}=\int m_{aj}\,d\Pi_j$. When $\Pi_j=P_j$, the site's participant population law, the treatment effect refers to that site's own participant distribution. We call this a native-composition effect; its heterogeneity includes composition and context. The same moment-recovery argument applies under the conditional feature, independence, and pattern-invariance conditions (Supplement Proposition~S7). In Honduras, we study native-composition effects using recorded village attributes. In IST, we use a common age reference and reverse the sign of \eqref{eq:thetaag} so benefit is risk reduction.

We use treatment randomization to identify the causal contrast in \eqref{eq:thetaag}. Conditioning on $W$, we compare these treatment effects across existing settings under the specified participant reference. We use the recorded distribution of $W$ in the target site population to summarize this variation.

Write $m(W)=\E(\tau\mid W)$, $v(W)=\Var(\tau\mid W)$, and
\begin{equation}
 \mu=\E\tau,\qquad B=\Var\{m(W)\},\qquad
 R=\E\{v(W)\},\qquad V=\Var(\tau)=B+R.
 \label{eq:decomp}
\end{equation}
We use $B$ to measure variation in the conditional mean treatment effect explained by the recorded characteristics, and $R$ to measure variation among sites with the same characteristics. For a specified predictor $f$, we define its absolute prediction risk as
\begin{equation}
 \mathcal R(f)=\E\{(\tau-f(W))^2\}
 =R+\E\{(m(W)-f(W))^2\}.
 \label{eq:risk}
\end{equation}
The residual component is therefore the prediction risk of the population conditional-mean predictor. For a fitted predictor, we also account for the error in estimating that conditional mean. We report the proportion $B/V$ as a secondary summary because it becomes unstable when $V$ is close to zero.

With a limited number of sites, we use a fixed basis $\bm{b}(W)\in\mathbb R^p$ of interpretable functions to define finite-dimensional projection targets. The basis includes an intercept and satisfies $\bm{G}=\E\{\bm{b}\bm{b}^\trans\}\succ0$. Let
\begin{equation}
 \bm{t}=\E(\bm{b}\tau),\quad \bm{\beta}=\bm{G}^{-1}\bm{t},\quad f_b=\bm{b}^\trans\bm{\beta},
 \quad B_b=\bm{t}^\trans \bm{G}^{-1}\bm{t}-\mu^2,\quad
 R_b=\E(\tau^2)-\bm{t}^\trans \bm{G}^{-1}\bm{t}.
 \label{eq:projectiontargets}
\end{equation}
These quantities measure explained variation and residual risk for the best predictor in the chosen span. They satisfy $V=B_b+R_b$ and
\begin{equation}
 B-B_b=R_b-R=\E\{(m-f_b)^2\}\ge0.
 \label{eq:approxgap}
\end{equation}
We allow the true conditional mean to be unrestricted and define its projection onto the chosen span. For a finite partition $W\in\{1,\ldots,H\}$ and its indicator basis, $f_b(h)=m_h$ and $R_b=\sum_hq_hv_h$, where $q_h=\Pr(W=h)$. We use this partition formulation for the IST regional decomposition. For Honduras, we report the finite-site projection for a fixed village-attribute basis; for the IST design evaluation, we target the absolute risk of a fixed pilot predictor.

\subsection{Experimental design}
Let $C_j=1$ denote a site assigned within-site randomization, with known probability $\rho_J(W_j)$. Both arm means can then be estimated in that site. A site with $C_j=0$ is assigned wholly to arm $A_j$, with $\Pr(A_j=1\mid W_j,C_j=0)=\pi(W_j)$. Let $I_{aj}$ indicate that arm $a$ is observed; its marginal probability is
\begin{equation}
 p_{aJ}(w)=\rho_J(w)+\{1-\rho_J(w)\}\pi_a(w),\qquad
 \pi_1=\pi,\quad\pi_0=1-\pi.
 \label{eq:armprob}
\end{equation}
We call the sites with $C=1$ paired sites. Within-site randomization must be operationally feasible, and its intervention version must agree with that in whole-site assignment for the primary hybrid estimand. For example, this requirement is plausible for a patient-level treatment whose content is unchanged by the randomization level. A design for an intervention with substantial within-site spillovers needs a specified spillover or version model.

\begin{assumption}[Randomization, support, and site features]
\label{ass:main}
Sites are independent. Conditional on $W$, the experimental pattern is randomized independently of the site response population. Treatment is randomized with known probabilities within paired sites; consistency and absence of interference between sites hold. The reference $P^*$ is supported in every included site. Available first-moment features $S_a$ and second-moment features $U_{ab}$ satisfy
\[
 \E(S_a\mid\theta_0^*,\theta_1^*,W,I_a=1)=\theta_a^*,\qquad
 \E(U_{ab}\mid\theta_0^*,\theta_1^*,W,\text{pair }ab\text{ observed})
       =\theta_a^*\theta_b^*.
\]
The same target moments apply to each pattern in which a feature is used. Approximate features are permitted under the aggregate error conditions in the supplement.
\end{assumption}

\subsection{Identification}
\label{sec:identification}
First moments identify mean treatment effects conditional on site characteristics and their explained variation. To identify residual variation, we also need the cross-arm second moment from the same site. We characterize the observation patterns required for these targets in the following theorem.

\begin{theorem}[Information required for heterogeneity]
\label{thm:identification}
Under Assumption~\ref{ass:main} and finite second moments, suppose both arms have positive observation probability conditional on $W$.
\begin{enumerate}[label=(\roman*)]
\item A pure cluster trial identifies $m(W)$, $B$, and $(\bm{\beta},B_b)$. It also identifies $\mathcal R(f)-\mathcal R(g)$ for fixed square-integrable $f,g$.
\item Write $m_a(W)=\E(\theta_a^*\mid W)$, $v_a(W)=\Var(\theta_a^*\mid W)$, and $c(W)=\Cov(\theta_0^*,\theta_1^*\mid W)$. Then
\[
 R=\E\{v_0(W)+v_1(W)-2c(W)\}.
\]
Arm-specific second-moment features identify $v_0,v_1$. The additional cross-arm moment identifies $R$, $R_b$, and absolute prediction risk. These quantities generally remain unidentified without paired information in the unrestricted model.
\item If $\rho_J(W)>0$ almost surely and paired sites supply $U_{01}$, the missing second moment is identified by inverse-probability weighting. Finite within-site samples suffice when the displayed feature equations hold.
\item More generally, fix $W=w$ and let pattern $d$ reveal $\bm{M}_d\bm{\Theta}^*$, where $\bm{\Theta}^*\in\mathbb R^K$ collects arm-by-subgroup means. If each projected law is identified, then, over the unrestricted finite-second-moment model, for symmetric $\bm{A}$, the functional $\E(\bm{\Theta}^{*\trans}\bm{A}\bm{\Theta}^*\mid w)$ is identified if and only if
\[
 \bm{A}\in\operatorname{span}\{\bm{M}_d^\trans \bm{B}_d\bm{M}_d:\bm{B}_d=\bm{B}_d^\trans,
                         \ \Pr(d\mid w)>0\}.
\]
For coordinate selectors on an interior rectangular support, this is equivalent to observing every nonzero coordinate pair of $\bm{A}$ together in some available pattern, under the bounded-density model stated in the supplement.
\end{enumerate}
\end{theorem}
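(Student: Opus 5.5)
The argument treats (i)–(iii) as direct moment calculations and (iv) as a linear-algebra characterization, proved by matching moments for sufficiency and constructing perturbations for necessity.

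\textbf{Parts (i) and (ii).} For (i), Assumption~\ref{ass:main} gives $\E(S_a\mid W,I_a=1)=m_a(W)$. Pattern randomization conditional on $W$ and $p_{aJ}(W)>0$ then let $m_a(W)$, hence $m(W)=m_1(W)-m_0(W)$, be recovered by inverse-probability weighting, $\E\{I_aS_a/p_{aJ}(W)\mid W\}=m_a(W)$. From this, $B=\Var\{m(W)\}$ and $\bm t=\E\{\bm b\,m(W)\}$ are functionals of identified quantities and the law of $W$, so $\bm\beta$ and $B_b$ are identified. For the risk difference, expand
\[
\mathcal R(f)-\mathcal R(g)=\E\{f^2-g^2-2\tau(f-g)\}=\E\{f^2-g^2-2m(W)(f-g)\},
\]
in which the $\E\tau^2$ term cancels. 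For (ii), apply the law of total variance to $\tau=\theta_1^*-\theta_0^*$; the stated formula follows by bilinearity of conditional covariance. Arm-specific features $U_{aa}$ give $\E(\theta_a^{*2}\mid W)$, hence $v_a$. The cross moment $\E(\theta_0^*\theta_1^*\mid W)$ gives $c(W)$, and with it $\E\tau^2$, $R$, $R_b=\E\tau^2-\bm t^\trans\bm G^{-1}\bm t$, and $\mathcal R(f)$. Non-identification in the unrestricted model follows from a two-law example. Fix the marginal laws of $\theta_0^*$ and $\theta_1^*$ given $W$ (say Gaussian with equal variances), and compare the comonotone coupling with the independent one. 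Single-arm patterns reveal only marginals, so the observed data law is the same, but $c(W)$ differs, hence $R$ differs.

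\textbf{Part (iii).} Use $\E\{C\,U_{01}/\rho_J(W)\mid W\}=\E(\theta_0^*\theta_1^*\mid W)$, which follows by iterated expectations from the feature equation and from $C$ being independent of $(\theta_0^*,\theta_1^*)$ given $W$. Finite samples suffice: for example, $U_{01}=\bar Y_{0}\bar Y_{1}$ computed from independent arm samples satisfies the displayed equation, by conditional independence of the two arm samples given the site.

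\textbf{Part (iv).} Sufficiency: if $\bm A=\sum_d\bm M_d^\trans\bm B_d\bm M_d$, then
\[
\E(\bm\Theta^{*\trans}\bm A\bm\Theta^*\mid w)=\sum_d\E\{(\bm M_d\bm\Theta^*)^\trans\bm B_d(\bm M_d\bm\Theta^*)\mid w\},
\]
and each term is a functional of an identified projected law. Necessity is the main obstacle, and I argue it by duality. If $\bm A$ is not in the span $\mathcal S$, there is a symmetric $\bm D$ with $\tr(\bm D\bm M_d^\trans\bm B_d\bm M_d)=0$ for all $d$ and all symmetric $\bm B_d$, but $\tr(\bm D\bm A)\ne0$. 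The first condition means $\bm M_d\bm D\bm M_d^\trans=0$ for every $d$. Starting from a law with covariance $\bm\Sigma\succ0$ (interior support), perturb the second moment to $\bm\Sigma+\varepsilon\bm D$ while keeping every projected law fixed. Showing this can be done is the hard step. In the Gaussian case it is immediate, since each projected law depends only on the mean and on $\bm M_d\bm\Sigma\bm M_d^\trans$, which is unchanged. For bounded densities, I would multiply a base density $p$ by $1+\varepsilon h$, where $h$ is a quadratic-type function orthogonal to every function of $\bm M_d\bm\Theta$. The fact that $\bm M_d\bm D\bm M_d^\trans=0$ is what lets $h$ be built this way, and the supplement's bounded-density model keeps $1+\varepsilon h$ positive for small $\varepsilon$. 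Both laws then match on all projected laws but differ in $\E(\bm\Theta^{*\trans}\bm A\bm\Theta^*)$, so the functional is not identified. For coordinate selectors, $\mathcal S$ is spanned by the elementary symmetric matrices $\bm e_k\bm e_l^\trans+\bm e_l\bm e_k^\trans$ with $k,l$ jointly observed in some pattern. Hence $\bm A\in\mathcal S$ if and only if every nonzero entry $(k,l)$ of $\bm A$ is such a pair, which gives the stated equivalence.
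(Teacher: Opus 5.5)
\textbf{Parts that match the paper.} Your treatment of (i)--(iii) and of the unrestricted-model characterization in (iv) follows the paper's proof. This covers inverse-probability weighting of $S_a$ and $U_{aa}$, cancellation of $\E\tau^2$ in the risk difference, and a marginal-preserving coupling counterexample; your comonotone versus independent Gaussian pair works as well as the paper's $(U,U)$ versus $(U,1-U)$. It also covers the trace identity for sufficiency, and a dual symmetric $\bm D$ with $\bm M_d\bm D\bm M_d^\trans=0$ and $\tr(\bm D\bm A)\neq0$ used to perturb a Gaussian covariance.

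\textbf{The gap: the bounded-density claim in (iv).} Your closing step shows only that, for coordinate selectors, the span condition coincides with the pair condition. That is linear algebra and uses neither the rectangle nor the density bound. The theorem's claim is about identification \emph{within} the bounded-density model on a rectangle. That model excludes your Gaussian alternatives, so necessity must be re-proved there, and your sketch does not do so.

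Your assertion that $\bm M_d\bm D\bm M_d^\trans=0$ lets you build a bounded quadratic-type $h$ with $\E_p(h\mid\bm M_d\bm\Theta)=0$ for all $d$ rests on Gaussian conditional-mean structure. Under $N(0,\bm I)$, $h=\bm\Theta^\trans\bm D\bm\Theta-\tr\bm D$ works, but it is unbounded. For a base density on a rectangle and general $\bm M_d$, conditional means of quadratics given linear projections are not constant; the theorem does not assert that case.

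Even for coordinate selectors, a generic dual $\bm D$ need not move the target. The target change under $p(1+\varepsilon h)$ is $\varepsilon\Cov_p(h,\bm\Theta^\trans\bm A\bm\Theta)$. For non-Gaussian $p$ this involves fourth moments and is not proportional to $\tr(\bm D\bm A)$. With $p$ uniform on $[0,1]^K$ and $h$ the centered quadratic built from $\bm D$, off-diagonal entries enter with weight $1/72$ and diagonal ones with weight $1/180$. So some $\bm D$ with $\tr(\bm D\bm A)\neq0$ leave the target unchanged.

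\textbf{How to close it.} Use the paper's explicit construction, one missing entry at a time. Rescale the rectangle to $[0,1]^K$ and take the uniform product density.
\begin{itemize}
\item If $A_{k\ell}\neq0$ for a pair $k\neq\ell$ never observed together, multiply by $1+\varepsilon(x_k-\tfrac12)(x_\ell-\tfrac12)$. Each observed coordinate subset omits $k$ or $\ell$, so integrating out that coordinate leaves every projected law unchanged. The target changes by $\varepsilon A_{k\ell}/72\neq0$, because every other moment entering it contains an unpaired mean-zero factor.
\item If $A_{kk}\neq0$ with $k$ in no observed subset, perturb only the $k$th marginal by a bounded function orthogonal to $1$ and $x$ but not to $x^2$, such as $6x^2-6x+1$. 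This fixes $\E x_k$, hence all cross moments under the product structure, and changes only $\E x_k^2$.
\end{itemize}
Both perturbations give bounded positive densities for small $\varepsilon$. Sufficiency in this smaller model is inherited from the unrestricted one. In your notation, the repair amounts to taking $\bm D$ to be the elementary symmetric matrix of a single missing entry and specializing to a product base.
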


We express residual variation and recover the missing cross-arm moment as follows:
\begin{align}
 R&=\E(\theta_1^{*2}+\theta_0^{*2}-2\theta_0^*\theta_1^*)-\E\{m(W)^2\},
 \label{eq:missingmoment}\\
 \E\{CU_{01}/\rho_J(W)\mid W\}
 &=\E(\theta_0^*\theta_1^*\mid W).
 \label{eq:pairingrecovery}
\end{align}
We set the contribution in \eqref{eq:pairingrecovery} to zero when $C=0$ and estimate this moment by averaging observed cross-arm products over paired sites with inverse-probability weights. To see why the product matters, take constant $W$ and $U\sim\mathrm{Bernoulli}(1/2)$. The populations $(\theta_0^*,\theta_1^*)=(U,U)$ and $(U,1-U)$ have identical arm-specific laws and zero mean effect, but residual variances zero and one. They can produce the same pure-cluster records even with arbitrarily many participants per site. We distinguish these populations using products from different participants assigned different arms in the same site. We thereby recover the cross-arm moment of site response means; the product of one individual's two potential outcomes is a distinct estimand.

Risk differences depend only on first moments because
\begin{equation}
 \mathcal R(f)-\mathcal R(g)
 =\E\{f^2-g^2-2m(f-g)\}.
 \label{eq:riskdifference}
\end{equation}
As discussed by \citet{Hines2025}, the cancellation in \eqref{eq:riskdifference} permits comparison of two fixed predictors using only first moments. To estimate the absolute risk, we acquire the cross-arm moment through pairing. Table~\ref{tab:requiredpairs} summarizes the observations needed for several arm contrasts.

\begin{table}[htbp]
\centering\small
\caption{Information required beyond arm means and marginal second moments. Pairs are observed within the same site; all required patterns have positive probability.}
\label{tab:requiredpairs}
\begin{tabularx}{\textwidth}{Y R{0.30\textwidth}}
\toprule
Target & Additional cross-arm pairs\\
\midrule
$B$, $B_b$, or $\mathcal R(f)-\mathcal R(g)$ & None\\
$R$, $R_b$, or $\mathcal R(f)$ for $\tau=\theta_1^*-\theta_0^*$ & $01$\\
$\Var(\theta_1^*-\theta_0^*)$ & $01$\\
$\Var(\theta_2^*-\theta_0^*)$ & $02$\\
$\Cov(\theta_1^*-\theta_0^*,\theta_2^*-\theta_0^*)$ & $01$, $02$, and $12$\\
\bottomrule
\end{tabularx}
\end{table}

For the last row, the cross-moment is $\E(\theta_1^*\theta_2^*-\theta_0^*\theta_1^*-\theta_0^*\theta_2^*+\theta_0^{*2})$. When we observe $01$ and $02$ at different sites, we still require same-site observations of $12$ in the unrestricted model. In part (iv) and Supplement Section~S5, we derive the necessary and sufficient pattern condition, including subgroups. We first require positive observation probabilities for the necessary patterns, then choose an allocation to determine estimation precision. Our targets are moments of the site response means, which we recover from finite within-site samples under the feature conditions.

We prove Theorem~\ref{thm:identification} in Supplement Section~S1. For the general pattern condition, we recover the observable quadratic forms and construct site populations with identical projected laws but different target values whenever a required pattern is absent.

\subsection{Residual heterogeneity when pairing is feasible only in some contexts}
\label{sec:limitedpairing}
Some institutions can implement both arms, whereas others can implement only whole-site assignment. Let $\mathcal E$ be the set of contexts in which pairing has positive probability. The target site population remains $Q$, including its complement $\mathcal E^c$. Single-arm observations can identify the explained component throughout that population, whereas residual variation in $\mathcal E^c$ is partially identified.

For the sharp bounds, we consider an oracle experiment that identifies the conditional marginal distributions $F_a(\cdot\mid w)$ of the standardized site means. We assume this information is available either from observed site means or from an asymptotic experiment with sufficiently accurate site means. With finite participant samples, the two identified moments support the outer bounds described below; sharp quantile bounds additionally use the conditional marginal laws. Let $q_a(u,w)$ be their generalized quantile functions and assume finite second moments. Define
\begin{align}
 K_+(w)&=\int_0^1q_0(u,w)q_1(u,w)\,du,\nonumber\\
 K_-(w)&=\int_0^1q_0(u,w)q_1(1-u,w)\,du,\nonumber\\
 r_-(w)&=\E(\theta_0^{*2}+\theta_1^{*2}\mid w)-2K_+(w)-m(w)^2,\nonumber\\
 r_+(w)&=\E(\theta_0^{*2}+\theta_1^{*2}\mid w)-2K_-(w)-m(w)^2.
 \label{eq:quantilevariance}
\end{align}
These expressions bound the conditional variance of the site-effect difference by rearranging the marginal quantiles. The rearrangement argument is classical; see \citet{AronowGreenLee2014} for its use in randomization-variance bounds. We use it to bound the residual heterogeneity outside the pairing support.

\begin{theorem}[Identification with incomplete pairing support]
\label{thm:incomplete}
Suppose the conditional marginal laws above are identified throughout $Q$, the conditional joint law is identified on $\mathcal E$, and the model permits all measurable conditional couplings of the marginals on $\mathcal E^c$. Then the sharp identified set for $R$ is
\begin{equation}
 \left[\E\{\one_{\mathcal E}v(W)+\one_{\mathcal E^c}r_-(W)\},\
       \E\{\one_{\mathcal E}v(W)+\one_{\mathcal E^c}r_+(W)\}\right].
 \label{eq:partialR}
\end{equation}
The sets for $V$ and $\mathcal R(f)$ are obtained by adding, respectively, the identified quantities $B$ and $\|m-f\|_2^2$. The interval width is
\begin{equation}
 \mathcal W(\mathcal E)=2\E\{\one_{\mathcal E^c}[K_+(W)-K_-(W)]\}.
 \label{eq:identwidth}
\end{equation}
It vanishes exactly when $K_+=K_-$ almost surely on the unpaired contexts.
\end{theorem}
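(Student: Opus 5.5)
The argument works context by context and then integrates over $W$. Write $v(w)=\E(\theta_0^{*2}+\theta_1^{*2}\mid w)-2\E(\theta_0^*\theta_1^*\mid w)-m(w)^2$, which is the conditional form of \eqref{eq:missingmoment}. Because $\E(\theta_a^{*2}\mid w)$ and $m(w)$ depend only on the marginal laws $F_a(\cdot\mid w)$, they are identified throughout $Q$. On $\mathcal E$ the joint law is identified, so $v(w)$ is identified there. On $\mathcal E^c$ the only unknown is the cross moment $\E(\theta_0^*\theta_1^*\mid w)=\int xy\,d\Gamma_w(x,y)$, where $\Gamma_w$ ranges over the couplings of $F_0(\cdot\mid w)$ and $F_1(\cdot\mid w)$. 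Since $R=\E\{v(W)\}$, the identified set for $R$ is the set of values of $\E\{\one_{\mathcal E}v(W)\}+\E\{\one_{\mathcal E^c}[\,\cdots-2\int xy\,d\Gamma_W]\}$ as the couplings $W\mapsto\Gamma_W$ range over measurable selections.

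\textbf{Pointwise bounds.} For each $w$ I will invoke the classical rearrangement (Hoeffding--Fréchet) inequality: for any coupling with finite second moments,
\[
K_-(w)\le\int xy\,d\Gamma_w\le K_+(w).
\]
The upper value is attained by the comonotone coupling $(q_0(U,w),q_1(U,w))$ and the lower by the antitone coupling $(q_0(U,w),q_1(1-U,w))$, with $U$ uniform. Cauchy--Schwarz together with the finite second moments makes these integrals finite. This gives $r_-(w)\le v(w)\le r_+(w)$ on $\mathcal E^c$. Integrating yields the outer bounds \eqref{eq:partialR}, where integrability again follows from the finite second moments.

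\textbf{Sharpness.} This step needs the most care, and I expect it to be the main obstacle because of measurability. First, both extreme couplings are jointly measurable in $w$. Generalized quantile functions $q_a(u,w)=\inf\{x:F_a(x\mid w)\ge u\}$ are jointly measurable when $F_a(\cdot\mid w)$ is a regular conditional distribution. Hence $w\mapsto$ law of $(q_0(U,w),q_1(U,w))$ is a measurable kernel, and likewise for the antitone coupling; these attain the endpoints. For interior points, I will use the mixtures $\Gamma_w^\lambda=\lambda\Gamma_w^++(1-\lambda)\Gamma_w^-$ with a constant $\lambda\in[0,1]$. Each mixture is again a coupling with the same marginals, and $R$ is affine in $\lambda$, so every value in the interval is attained. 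Each such joint law combines the identified joint law on $\mathcal E$ and matches all identified marginals. The model permits all measurable couplings, so each is observationally equivalent to the truth, and the set is exactly the interval.

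\textbf{Remaining claims.} The statements for $V=B+R$ and $\mathcal R(f)=R+\|m-f\|_2^2$ follow by translation, since $B$ and $m$ are identified from first moments by Theorem~\ref{thm:identification}(i). The width is $\E\{\one_{\mathcal E^c}(r_+-r_-)\}=2\E\{\one_{\mathcal E^c}(K_+-K_-)\}$. Because $K_+-K_-\ge0$ pointwise, the width vanishes iff $K_+=K_-$ almost surely on $\mathcal E^c$.
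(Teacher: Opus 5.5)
Your proposal is correct and follows the paper's proof essentially step by step. Both arguments use the same ingredients: pointwise rearrangement bounds on the cross moment, with the conditional variance affine and decreasing in it; attainment of the endpoints by the comonotone and antitone quantile couplings, admissible because generalized quantiles are jointly measurable; mixtures of the two endpoint kernels for intermediate values; translation by the identified $B$ and $\|m-f\|_2^2$; and the nonnegative-integrand argument for the width. Your explicit check that $K_\pm(W)$ is integrable via Cauchy--Schwarz is left implicit in the paper.
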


We construct the endpoints using common and oppositely ordered conditional uniform variables. A mixture of the two couplings attains every intermediate scalar value while preserving all observed marginal laws. Using positive pairing probabilities on $\mathcal E$, we identify its contribution even when those probabilities differ greatly. We improve estimation precision by increasing those probabilities and change the identified set itself by expanding $\mathcal E$. With finite lower-level samples, we recover the moment targets in Theorem~\ref{thm:identification}; the sharp bounds in \eqref{eq:quantilevariance} additionally require the marginal quantiles. Where only conditional variances $v_0,v_1$ are available, $r_-$ and $r_+$ can be replaced by $(\sqrt{v_1}-\sqrt{v_0})^2$ and $(\sqrt{v_1}+\sqrt{v_0})^2$ to obtain valid outer bounds. Additional support and marginal information can tighten these outer bounds.

We use \eqref{eq:identwidth} to distinguish the choice of pairing contexts from the allocation of paired sites within them. If permitting pairing in a context has a fixed institutional cost, its reduction in identification width is $2[K_+-K_-]$. For finitely many context groups, we can choose the groups through a cost-constrained subset optimization with those weighted gains. Within the selected groups, we allocate paired sites using the precision criterion in Section~\ref{sec:design}. A pairing probability close to zero can identify the population target while leaving little information for estimation in a finite experiment.

\paragraph{Changes in intervention delivery.}
Whole-site and within-site assignment can change intervention delivery and thereby change the site effect being estimated. In Supplement Theorem~S2, we derive a joint sensitivity region for the mean and heterogeneity components, together with a bound for fixed-predictor risk, under a root-mean-square bound on delivery differences. For the primary hybrid estimand, we assume compatible intervention versions.

\section{Estimation and Inference}
\label{sec:inference}
\subsection{Correcting within-site sampling variation}
The expected squared site-effect estimate includes both the effect's second moment and the estimate's sampling variance. We remove the sampling-variance contribution using products from distinct independent evaluation units and use the corrected moment features to estimate the heterogeneity components.

\begin{proposition}[Products from independent evaluation units]
\label{prop:pair}
Conditional on the site and an independent training sample, let $\bm{S}_{jb}\in\mathbb R^K$, $b=1,\ldots,N_j$, be independent evaluation-unit scores with common mean $\bm{\Theta}_j^*$. For $N_j\ge2$,
\begin{equation}
 U_{k\ell,j}=\frac{1}{\binom{N_j}{2}}\sum_{b<b'}
 \frac{S_{kjb}S_{\ell jb'}+S_{\ell jb}S_{kjb'}}2
 \label{eq:pair}
\end{equation}
has conditional mean $\theta_{kj}^*\theta_{\ell j}^*$, including when $k=\ell$. The evaluation units can be independent subclusters; their distributions need not be identical if their score means agree.
\end{proposition}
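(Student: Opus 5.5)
The plan is a direct computation from linearity of conditional expectation together with independence across evaluation units. Throughout, condition on the site $j$ (equivalently on $\bm{\Theta}_j^*$ and the site's response population) and on the independent training sample that fixes the scoring functions. Expectations below are under this conditioning, so the scores $\bm{S}_{j1},\ldots,\bm{S}_{jN_j}$ are independent with $\E(\bm{S}_{jb})=\bm{\Theta}_j^*$ for every $b$.

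First, fix an unordered pair $b<b'$. Because $b\neq b'$, the scores $S_{kjb}$ and $S_{\ell jb'}$ come from distinct independent units. Provided the products are integrable, we therefore have
\[
\E(S_{kjb}S_{\ell jb'})=\E(S_{kjb})\,\E(S_{\ell jb'})=\theta_{kj}^*\theta_{\ell j}^*.
\]
For integrability, it suffices that each score has a finite first moment. The product of two independent integrable variables is integrable, so no second-moment condition is needed for unbiasedness. By symmetry, $\E(S_{\ell jb}S_{kjb'})=\theta_{\ell j}^*\theta_{kj}^*$. The symmetrized summand in \eqref{eq:pair} therefore has conditional mean $\theta_{kj}^*\theta_{\ell j}^*$. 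This step uses only that the two marginal means equal the common $\bm{\Theta}_j^*$. It does not require identical distributions, which gives the final sentence of the proposition: subclusters with heterogeneous score distributions but equal means are allowed.

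Next, average over the $\binom{N_j}{2}$ pairs. This requires $N_j\ge2$ so that the index set is nonempty. Each term has the same mean $\theta_{kj}^*\theta_{\ell j}^*$, and linearity gives the claim. The case $k=\ell$ needs no separate argument. The summand reduces to $S_{kjb}S_{kjb'}$, whose mean is $\theta_{kj}^{*2}$ because $b\ne b'$. The point to stress is that the diagonal terms $S_{kjb}^2$ are excluded from \eqref{eq:pair}. Those terms would contribute $\theta_{kj}^{*2}+\Var(S_{kjb})$, which is exactly the sampling-variance inflation that the correction removes.

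There is no real obstacle. The only step needing care is making the conditioning explicit: the training sample must be independent of the evaluation units, so that conditionally the scores are fixed functions of independent units and keep their independence and means. I would close by noting that this identity verifies the second-moment feature condition $\E(U_{ab}\mid\theta_0^*,\theta_1^*,W,\text{pair }ab\text{ observed})=\theta_a^*\theta_b^*$ in Assumption~\ref{ass:main}. In a paired site with $K$ arm coordinates, $U_{01,j}$ is then a valid feature for part (iii) of Theorem~\ref{thm:identification}, after applying the tower property over the conditioning on $W$ and the observation pattern, which is independent of the response population by randomization.
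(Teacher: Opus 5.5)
Your proposal is correct and follows the paper's proof: independence of distinct evaluation units gives $\E(S_{kjb}S_{\ell jb'})=\theta_{kj}^*\theta_{\ell j}^*$ for $b\ne b'$, and linearity over the symmetrized pairs finishes the argument, with $k=\ell$ and non-identical unit distributions handled by the same step. Your integrability remark (finite first moments suffice, since $\E|XY|=\E|X|\,\E|Y|$ for independent $X,Y$) makes explicit a point the paper leaves implicit.
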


For independent observations with sample mean $\bar Y_a$, we use the same-arm feature $\{(\sum_iY_{ia})^2-\sum_iY_{ia}^2\}/\{n_a(n_a-1)\}$. For independent arm samples, we use $\bar Y_0\bar Y_1$ as the cross-arm feature. Combining these products gives the corrected feature for the squared site effect,
\begin{equation}
 Q_j=U_{11,j}+U_{00,j}-2U_{01,j},\qquad
 \widetilde\tau_j=S_{1j}-S_{0j},\qquad
 \E(Q_j\mid\bm{\Theta}_j^*,W_j,C_j=1)=\tau_j^2.
 \label{eq:Q}
\end{equation}
For a predictor $f$ fixed by independent training, we construct the corrected risk feature from the same products:
\begin{equation}
 L_f=Q-2f(W)\widetilde\tau+f(W)^2,
 \qquad \E(L_f\mid j,C=1)=\{\tau_j-f(W_j)\}^2.
 \label{eq:pairedrisk}
\end{equation}
Its population mean is $\mathcal R(f)$; $f=m$ gives $R$, and $f=f_b$ gives $R_b$. We retain signed realizations when averaging this feature. Supplement Proposition~S1 gives its exact variance for unequal arm-by-reference cells. With at least four observations per cell, Proposition~S2 constructs unbiased covariance estimators using distinct observations. Section~\ref{sec:jointdesign} uses measurement variance to choose paired sites and participant counts.

For fitted scores, we train each product's score functions on observations outside its two evaluation units. To assess product bias, we account for both conditional mean errors and covariances between scores. Shared fitted reference distributions or regressions can create covariance even between scores based on different arms. In Supplement Section~S2, we derive the bias identity and state the conditions for transferring the corrected features to the site-level analysis. For IST, we construct the features entirely from cell means and distinct-patient products with fixed age weights.

\subsection{Augmented moment signals}
We estimate $g_a(W)$ to predict $\theta_a^*$ and $g_2(W)$ to predict $\tau^2$, using separate sites or site-level cross-fitting. With the known design probabilities, we define
\begin{align}
 H_{aj}&=g_a(W_j)+\frac{I_{aj}}{p_{aJ}(W_j)}\{S_{aj}-g_a(W_j)\},\nonumber\\
 Z_{1j}&=H_{1j}-H_{0j},\qquad
 Z_{2j}=g_2(W_j)+\frac{C_j}{\rho_J(W_j)}\{Q_j-g_2(W_j)\}.
 \label{eq:signals}
\end{align}
We set the contribution of an unobserved residual to zero and train the predictors using observed outcomes from the training sites. Conditional on independent training, these signals satisfy $\E(Z_1\mid W)=m(W)$ and $\E(Z_2\mid W)=\E(\tau^2\mid W)$ regardless of the accuracy of $g_a,g_2$. We retain the lower-level sampling variation in $S_a,Q$ when calculating the site-level covariance. Thus $Z_2-2f(W)Z_1+f(W)^2$ has mean $\mathcal R(f)$ for a fixed independently trained $f$. Additional marginal second moments can serve as control variates; any design using them must optimize the covariance of that full target score.

For a cross-fitted mean predictor $\widehat f$, we define
\begin{align}
 \widehat\mu&=\mathbb P_J Z_1,\nonumber\\
 \widehat B&=\mathbb P_J(2\widehat fZ_1-\widehat f^2)-\widehat\mu^2,\qquad
 \widehat R=\mathbb P_J(Z_2-2\widehat fZ_1+\widehat f^2),\nonumber\\
 \widehat V&=\mathbb P_JZ_2-\widehat\mu^2=\widehat B+\widehat R.
 \label{eq:scoreest}
\end{align}
Here $\mathbb P_J$ denotes the average over evaluation sites. For a fitted $f$ independent of an evaluation site,
\begin{equation}
 \E(2fZ_1-f^2)-\mu^2=B-\|f-m\|_2^2,\quad
 \E(Z_2-2fZ_1+f^2)=R+\|f-m\|_2^2.
 \label{eq:learningerror}
\end{equation}
These expectation identities show that prediction error reduces the explained component captured by the predictor relative to the population $B$ and increases its residual risk. We therefore interpret a small estimate in light of the predictor's accuracy. For fits in the fixed span of $\bm{b}$, the corresponding prediction error is $\|f-f_b\|_2^2$ and the targets are $B_b,R_b$.

\begin{theorem}[Joint inference at the two information rates]
\label{thm:rates}
Let $\rho_J(w)=r_J\bar\rho(w)$, with $0<c\le\bar\rho(w)\le C<\infty$ and $Jr_J\to\infty$. Consider independent oracle signal records satisfying the conditional equations above, $\widehat\mu-\mu=O_p(J^{-1/2})$, and define
\[
 \phi_B=2(m-\mu)(Z_1-m)+(m-\mu)^2-B,
 \qquad \phi_R=Z_2-2mZ_1+m^2-R.
\]
Suppose the triangular array $(\phi_B,\sqrt{r_J}\phi_R)$ satisfies a multivariate Lindeberg condition, its covariance converges to $\bm{\Sigma}\succ0$, and $\E\{(Z_1-m)^2\mid W\}$ is uniformly bounded. On a fixed number of site folds suppose
\[
 \|\widehat f-m\|_4=o_p(1),\qquad
 \|\widehat f-m\|_2^2=o_p(J^{-1/2}).
\]
Then
\begin{equation}
 \begin{pmatrix}\sqrt J(\widehat B-B)\\ \sqrt{Jr_J}(\widehat R-R)\end{pmatrix}
 =\frac1{\sqrt J}\sum_{j=1}^J
 \begin{pmatrix}\phi_{Bj}\\\sqrt{r_J}\phi_{Rj}\end{pmatrix}+o_p(1)
 \ \rightsquigarrow\ N(0,\bm{\Sigma}).
 \label{eq:rates}
\end{equation}
For $R$ alone the squared prediction-error requirement is $o_p\{(Jr_J)^{-1/2}\}$. For a fixed span, replace $m,B,R$ by $f_b,B_b,R_b$; sufficient additional conditions are root-$J$ consistency of each training-fold coefficient estimate, $\E\|\bm{b}\|^4<\infty$, and $\E\{\|\bm{b}\|^2(Z_1-f_b)^2\}<\infty$. Estimated signals give the same limit if their scaled aggregate remainders vanish. Consistent covariance estimation additionally requires vanishing empirical squared error of the estimated scaled influence vectors and a law of large numbers for their oracle second moments.
\end{theorem}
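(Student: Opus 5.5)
The plan is to decompose each estimator into an oracle average plus remainders and to bound each remainder at the appropriate rate. Start with $\widehat R$, which is linear in the signals given $\widehat f$:
\[
\widehat R-R=\mathbb P_J\phi_R+\mathbb P_J\{-2(\widehat f-m)(Z_1-m)\}+\mathbb P_J\{(\widehat f-m)^2\}.
\]
This follows by writing $Z_2-2\widehat fZ_1+\widehat f^2=(Z_2-2mZ_1+m^2)-2(\widehat f-m)Z_1+(\widehat f^2-m^2)$ and rearranging. Do the same for $\widehat B$. We have
\[
\widehat B-B=\mathbb P_J\{2\widehat fZ_1-\widehat f^2\}-\widehat\mu^2-B .
\]
Substituting $\widehat f=m+\delta$ gives $2mZ_1-m^2+2\delta(Z_1-m)-\delta^2$. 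Expanding $\widehat\mu^2=\mu^2+2\mu(\widehat\mu-\mu)+(\widehat\mu-\mu)^2$ and using $\widehat\mu-\mu=\mathbb P_J(Z_1-\mu)$, the leading term becomes
\[
\mathbb P_J\{2mZ_1-m^2-2\mu Z_1+\mu^2\}-B=\mathbb P_J\phi_B .
\]
The last equality uses the identity $2mZ_1-m^2-2\mu Z_1+\mu^2=2(m-\mu)(Z_1-m)+(m-\mu)^2$. What remains is $2\mathbb P_J\delta(Z_1-m)-\mathbb P_J\delta^2-(\widehat\mu-\mu)^2$. The term $(\widehat\mu-\mu)^2=O_p(J^{-1})$ is negligible after multiplying by $\sqrt J$.

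Next, control the remainders foldwise. On each of the fixed number of folds, $\widehat f$ is independent of the evaluation sites in that fold.

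\emph{Squared-error term.} Conditional on training, $\mathbb P_J\delta^2$ has mean $\|\delta\|_2^2=o_p(J^{-1/2})$. Its conditional variance is at most $J^{-1}\|\delta\|_4^4=o_p(J^{-1})$. So $\mathbb P_J\delta^2=o_p(J^{-1/2})$.

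\emph{Cross term.} $\mathbb P_J\delta(Z_1-m)$ has conditional mean zero because $\E(Z_1\mid W)=m(W)$. Its conditional variance is $J^{-1}\E\{\delta^2\E[(Z_1-m)^2\mid W]\}\le CJ^{-1}\|\delta\|_2^2=o_p(J^{-1})$, using the uniform bound on $\E\{(Z_1-m)^2\mid W\}$. Conditional Chebyshev then gives $o_p(J^{-1/2})$ for $\widehat B$.

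\emph{Rate for $\widehat R$.} The remainders must be $o_p\{(Jr_J)^{-1/2}\}$. Since $r_J\le 1/\inf\bar\rho$ is bounded, $(Jr_J)^{-1/2}\gtrsim J^{-1/2}$. The same bounds therefore suffice, and even the weaker requirement $\|\delta\|_2^2=o_p\{(Jr_J)^{-1/2}\}$ suffices for $R$ alone, which gives the remark in the statement. Importantly, the cross term involves $Z_1$ only, not the high-variance $Z_2$. That is why the residual remainder needs no extra conditions at the slow rate.

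The main step is the joint central limit theorem. Multiplying by $\sqrt J$ and $\sqrt{Jr_J}$ turns the leading terms into $J^{-1/2}\sum_j(\phi_{Bj},\sqrt{r_J}\phi_{Rj})$. This is a triangular array of independent, mean-zero vectors. The mean is zero because $\E(Z_2\mid W)=\E(\tau^2\mid W)$ makes $\E\phi_R=R-R$, and likewise $\E\phi_B=0$. The vectors have covariance tending to $\bm\Sigma$, so Lindeberg–Feller with the Cramér–Wold device gives $N(0,\bm\Sigma)$. The delicate point is the scaling: $\Var(\phi_R)$ grows like $1/r_J$ through the $C/\rho_J$ weights, so $\sqrt{r_J}\phi_R$ is the right normalization. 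I expect verifying that the remainders stay negligible after this differential scaling to be the real obstacle, and the bounded-$r_J$ observation above handles it.

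Finally, for the fixed span, replace $\delta$ by $\widehat f-f_b=\bm b^\trans(\widehat{\bm\beta}-\bm\beta)$. Then $\|\delta\|_2^2\le\|\widehat{\bm\beta}-\bm\beta\|^2\E\|\bm b\|^2=O_p(J^{-1})$ and $\|\delta\|_4=o_p(1)$ via $\E\|\bm b\|^4<\infty$. The cross term no longer has conditional mean zero, since $\E(Z_1-f_b\mid W)=m-f_b$. Instead $\E\{\bm b(m-f_b)\}=0$ by the normal equations, and the variance is controlled by $\E\{\|\bm b\|^2(Z_1-f_b)^2\}<\infty$. For estimated signals and covariance estimation, the stated assumptions are exactly the vanishing of the aggregate remainders and the law of large numbers for the oracle second moments, so those claims follow by Slutsky.
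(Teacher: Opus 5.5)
Your proposal is correct and follows essentially the same route as the paper's proof. Both arguments decompose the score change foldwise as $2(\widehat f-m)(Z_1-m)-(\widehat f-m)^2$, apply conditional Chebyshev bounds using the uniform bound on $\E\{(Z_1-m)^2\mid W\}$ and the $L_4$ condition, expand $\widehat\mu^2$, invoke the triangular-array Lindeberg limit, and use the normal equations for the fixed span. The only step you state too loosely is covariance consistency, which needs more than Slutsky. The paper first bounds the cross terms between the estimated-minus-oracle influence vectors and the oracle vectors by Cauchy--Schwarz, $\bigl(J^{-1}\sum_j\|\widehat{\bm\xi}_j-\bm\xi_j\|^2\bigr)^{1/2}\bigl(J^{-1}\sum_j\|\bm\xi_j\|^2\bigr)^{1/2}=o_p(1)$, and only then applies the oracle law of large numbers.
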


The theorem gives different information rates for the explained and residual components: $\sqrt J$ for the former and $\sqrt{Jr_J}$ for the latter. Positive pairing probability identifies the residual component, and a growing number of paired sites permits consistent estimation. Under the exact feature and moment conditions, these results allow finite lower-level counts. For the fixed-span targets, we allow a nonlinear true conditional mean and estimate its projection on the specified basis. In the reduced site-feature model, we derive a local minimax lower bound using the following Gaussian submodel with known marginal laws. The feature-error conditions specify when the site-level inference applies to individual-level hierarchical records.

\paragraph{Information bound for residual variation.}
\label{sec:lowerbound}
For constant $W$, consider an oracle Gaussian site vector with known marginal means, positive standard deviations $\sigma_0,\sigma_1$, and unknown correlation $\gamma\in(-1,1)$. Single-arm records carry no information about $\gamma$; each paired record has Fisher information $I_1(\gamma)=(1+\gamma^2)/(1-\gamma^2)^2$. The local asymptotic minimax risk for $R=V=\sigma_0^2+\sigma_1^2-2\gamma\sigma_0\sigma_1$, on the $Jr_J$ scale, is at least $4\sigma_0^2\sigma_1^2/I_1(\gamma_0)$. Thus the rate in Theorem~\ref{thm:rates} cannot be improved uniformly over a model containing this submodel. If $Jr_J$ remains bounded, uniform consistency on a fixed nondegenerate interior neighborhood is impossible. Supplement Section~S3 gives the likelihood expansion and two-point argument.

\subsection{Finite strata and boundary inference}
For finitely many strata, we estimate the components directly using unbiased products across independent sites. Conditional on stratum counts $J_h\ge2$, we fix the target masses at $q_h=J_h/J$, write $\bar Z_{1h},\bar Z_{2h}$ for the stratum means, and let $s_{1h}^2$ be the sample variance of $Z_1$ in stratum $h$. We define
\begin{equation}
 M_{2h}=\bar Z_{1h}^2-s_{1h}^2/J_h,\quad
 \widehat\mu=\sum_hq_h\bar Z_{1h},\quad
 M_{2\mu}=\widehat\mu^2-\sum_hq_h^2s_{1h}^2/J_h.
 \label{eq:finitemeans}
\end{equation}

\begin{proposition}[Exact stratum decomposition]
\label{prop:finite}
Conditional on independent pilot fits and the stratum counts, suppose site signal records are independent and identically distributed within each stratum. Then
\begin{equation}
 \widehat B_b=\sum_hq_hM_{2h}-M_{2\mu},\quad
 \widehat R_b=\sum_hq_h(\bar Z_{2h}-M_{2h}),\quad
 \widehat V=\sum_hq_h\bar Z_{2h}-M_{2\mu}
 \label{eq:finiteest}
\end{equation}
are unbiased for the corresponding fixed-mixture targets and satisfy $\widehat B_b+\widehat R_b=\widehat V$ exactly.
\end{proposition}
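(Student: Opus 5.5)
The plan is to reduce everything to unbiased estimation of squared means from i.i.d.\ samples, stratum by stratum, and then combine linearly with the fixed weights $q_h=J_h/J$. First I will pin down the fixed-mixture targets. Write $m_h=\E(Z_1\mid W=h)$, $\kappa_h=\E(Z_2\mid W=h)$ and $\sigma_h^2=\Var(Z_1\mid W=h)$. By the signal equations following \eqref{eq:signals}, conditional on the independent pilot fits, $m_h$ equals the stratum mean effect and $\kappa_h=\E(\tau^2\mid W=h)$. For the indicator basis, $f_b(h)=m_h$, so with the mixture $(q_h)$ held fixed the targets are
\[
\mu=\sum_hq_hm_h,\qquad B_b=\sum_hq_hm_h^2-\mu^2,\qquad R_b=\sum_hq_h(\kappa_h-m_h^2),\qquad V=\sum_hq_h\kappa_h-\mu^2 .
\]
These satisfy $V=B_b+R_b$. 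I will assume finite second moments of $Z_1$ and a finite first moment of $Z_2$ within each stratum, as is implicit in the signal construction.

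Next I will establish the stratum-level moments. Conditional on the stratum counts and the pilot fits, the $J_h\ge2$ records in stratum $h$ are i.i.d.\ draws from the stratum conditional law, and different strata are independent. Hence $\E\bar Z_{1h}=m_h$, $\E\bar Z_{2h}=\kappa_h$, $\E\bar Z_{1h}^2=m_h^2+\sigma_h^2/J_h$ and $\E s_{1h}^2=\sigma_h^2$, so $\E M_{2h}=m_h^2$. For the grand mean, $\widehat\mu$ is a fixed linear combination of independent stratum means. It is therefore unbiased for $\mu$ with $\Var\widehat\mu=\sum_hq_h^2\sigma_h^2/J_h$. Unbiasedness of each $s_{1h}^2$ then gives $\E M_{2\mu}=\mu^2+\sum_hq_h^2\sigma_h^2/J_h-\sum_hq_h^2\sigma_h^2/J_h=\mu^2$.

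Linearity of expectation now gives $\E\widehat B_b=\sum_hq_hm_h^2-\mu^2=B_b$, $\E\widehat R_b=\sum_hq_h(\kappa_h-m_h^2)=R_b$ and $\E\widehat V=\sum_hq_h\kappa_h-\mu^2=V$. The exact identity $\widehat B_b+\widehat R_b=\widehat V$ is pure algebra: in the sum, the terms $\sum_hq_hM_{2h}$ cancel, leaving $\sum_hq_h\bar Z_{2h}-M_{2\mu}$.

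The only delicate step is the conditioning. I must justify that, after conditioning on the realized counts $J_h$ and on pilot fits independent of the evaluation sites, the within-stratum records remain i.i.d.\ with conditional mean $m_h$. Here $Z_1$ and $Z_2$ already average over the design indicators $I_a$ and $C$ and over the lower-level sampling noise. Given independent sites, conditioning on the multinomial counts yields independent draws from the stratum conditional laws; this is the standard exchangeability argument. Proposition~\ref{prop:pair} and Assumption~\ref{ass:main} guarantee that the within-site products have the correct conditional means, so no additional within-site bias enters.
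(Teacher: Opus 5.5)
Your proposal is correct and matches the paper's proof essentially step for step. Both arguments show $\E M_{2h}=m_h^2$ and $\E M_{2\mu}=\mu^2$ from $\E\bar Z_{1h}^2=m_h^2+\sigma_h^2/J_h$, unbiasedness of $s_{1h}^2$, and $\Var(\widehat\mu)=\sum_hq_h^2\sigma_h^2/J_h$, then conclude by linearity of expectation and cancellation of $\sum_hq_hM_{2h}$. Your closing paragraph justifying the conditional i.i.d. structure is harmless but unnecessary, because the proposition assumes it as a hypothesis.
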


These estimators are exactly unbiased conditional on the pilot under independent sampling of sites. For predictors fitted on overlapping folds, we use the asymptotic result. Conditioning on the observed stratum counts fixes the mixture weights and permits independent sampling of sites within strata; inference for a random population composition also includes variation in estimated $q_h$.

We retain signed estimates in tables and impose nonnegative parameter constraints through joint confidence regions. We first construct a region for the first and second moments $(m_h,q_{2h})$, where $q_{2h}=\E(\tau^2\mid h)$, intersect it with $q_{2h}\ge m_h^2$, and project it onto the heterogeneity components. We obtain coverage for a zero explained component by projecting the moment region, including cases where the normal limit for $\widehat B_b$ is degenerate at zero. For the binary IST outcomes, simultaneous Bonferroni intervals for the $2H$ primitive means produce a conservative joint rectangle, intersected with $-1\le m_h\le1$ and $m_h^2\le q_{2h}\le1$. In the supplement, we give explicit bounds for $B_b,R_b,V$ and the group variances. Their asymptotic coverage follows from the simultaneous intervals for the moments. For unbounded continuous outcomes, we retain $q_{2h}\ge m_h^2$ and use a confidence region with the outcome's appropriate support; the constraint $q_{2h}\le1$ applies to the binary-outcome analysis. If the intersection is empty, we report the full feasible range and record that event. We assess precision through the confidence region and interpret a signed estimate below zero as part of this uncertainty assessment rather than as evidence of identical site effects.

\subsection{Inference near zero explained variation}
\label{sec:weakexplanation}
At a constant conditional mean, the first derivative of explained variation vanishes. We then obtain a centered quadratic limit for the finite-stratum estimator, reflecting the vanishing first-order derivative. In Supplement Theorem~S3, we derive the zero and local limits and the mean-contrast test. We project a mean-contrast confidence ellipsoid onto explained variation to obtain intervals covering zero and positive values. For the explained fraction, we project a joint moment region after specifying a positive minimum total variation; permitting arbitrarily small total variation can yield an uninformative interval for this ratio. We report signed component estimates and use the asymptotic coverage and testing results. Calibration when strata contain few sites requires a separate finite-sample analysis.

When we estimate a common participant reference, its shared estimation error contributes to the covariance. In Supplement Section~S11.1, we derive this contribution and the conditions for combining it with the site influence expansion. We fix the age mixture for IST and use the specified native participant law for Honduras.

\section{Optimal Allocation}
\label{sec:design}
\subsection{Paired-site selection}
We fix the site population, participant reference, and predictor $f$ before comparing allocations. For the feature in \eqref{eq:pairedrisk}, we define $a_f(W)=\E(L_f\mid W)$ and $\nu_f^2(W)=\Var(L_f\mid W)$. This variance includes sampling variation in arm means and their products from finite within-site samples. We minimize the variance of the estimator of $\mathcal R(f)$ across allocations with the same predictor and risk target. We construct the direct augmented estimator by averaging
\begin{equation}
 a_f(W)+\frac{C}{\rho(W)}\{L_f-a_f(W)\}.
 \label{eq:riskAIPW}
\end{equation}
We can use the same construction for a finite vector of group risks, with design criteria based on the estimator covariance. The optimization over observation patterns in the supplement also accommodates marginal control variates and joint targets.

\begin{theorem}[Conditional allocation for absolute prediction risk]
\label{thm:allocation}
Condition on independent training and pilot data, fix $f$, and suppose $L_f$ has finite second moment. Let $C\mid(W,L_f)$ be Bernoulli with known probability $\rho(W)>0$. The oracle-augmented record in \eqref{eq:riskAIPW} has mean $\mathcal R(f)$ and variance
\begin{equation}
 \Var\{a_f(W)\}+\E\{\nu_f^2(W)/\rho(W)\}.
 \label{eq:riskvariance}
\end{equation}
Let $c(W)>0$ be the additional cost of a paired site. Over $\ell(W)\le\rho(W)\le u(W)$, with $0<\ell\le u\le1$, and $\E\{c(W)\rho(W)\}\le b_0$, an interior feasible, binding-budget optimum is
\begin{equation}
 \rho^*(W)=\min\left[u(W),\max\left\{\ell(W),
       \sqrt{\frac{\nu_f^2(W)}{\lambda c(W)}}\right\}\right],
 \label{eq:allocation}
\end{equation}
where $\lambda>0$ enforces the budget. At a zero conditional variance the lower bound is optimal. If the pilot supplies a fixed predictor $\widetilde a_f$, replace $a_f$ by $\widetilde a_f$ in the record \eqref{eq:riskAIPW}. Its mean remains $\mathcal R(f)$ and its variance, conditional on the pilot, is
\begin{equation}
 \Var(a_f)+\E(\nu_f^2/\rho)
   +\E\bigl[\{\rho^{-1}-1\}(\widetilde a_f-a_f)^2\bigr].
 \label{eq:pilotpenalty}
\end{equation}
Thus the variance-minimizing allocation for that fixed predictor replaces $\nu_f^2$ in \eqref{eq:allocation} by $\nu_f^2+(\widetilde a_f-a_f)^2$.
\end{theorem}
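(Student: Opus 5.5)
The argument has three parts, all conditional on the training and pilot data: first the mean and variance of the augmented record \eqref{eq:riskAIPW} with the oracle $a_f$, then the constrained minimization giving \eqref{eq:allocation}, then a repeat of the variance computation with a pilot predictor $\widetilde a_f$ in place of $a_f$.

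\textbf{Mean and variance.} Write $D=a_f(W)+\{C/\rho(W)\}\{L_f-a_f(W)\}$. Since $\E(C\mid W,L_f)=\rho(W)$, iterated expectations give $\E(D\mid W,L_f)=L_f$. Hence $\E(D)=\E(L_f)$. By \eqref{eq:pairedrisk} and the definition of $a_f$, this equals $\E\{a_f(W)\}=\mathcal R(f)$.

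For the variance I use the law of total variance, conditioning on $(W,L_f)$:
\[
\Var(D)=\Var(L_f)+\E\{\Var(D\mid W,L_f)\}.
\]
The inner variance is $(L_f-a_f)^2\rho(1-\rho)/\rho^2=(L_f-a_f)^2(\rho^{-1}-1)$. Taking the expectation over $L_f$ given $W$ gives $\nu_f^2(W)(\rho^{-1}-1)$. Decomposing $\Var(L_f)=\Var(a_f)+\E(\nu_f^2)$ and adding, the $-\E(\nu_f^2)$ cancels, which leaves \eqref{eq:riskvariance}.

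\textbf{Allocation.} The first term of \eqref{eq:riskvariance} does not depend on $\rho$. So the problem is to minimize $\E\{\nu_f^2/\rho\}$ subject to $\E(c\rho)\le b_0$ and the box constraints $\ell\le\rho\le u$. Both the objective and the constraints are convex in $\rho$, since $x\mapsto\nu^2/x$ is convex on $x>0$.

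I form the Lagrangian $\E\{\nu_f^2/\rho+\lambda c\rho\}$ and note that it separates across $W$. At each $W$ the scalar problem is to minimize $\nu^2/x+\lambda c x$ over $x\in[\ell,u]$. This function is convex with unconstrained minimizer $\sqrt{\nu^2/(\lambda c)}$, so clipping to the box gives \eqref{eq:allocation}. When $\nu^2=0$ the function is increasing in $x$, so $x=\ell$ is optimal.

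To justify optimality rather than only stationarity, I use the standard sufficiency argument. For any feasible $\rho$, pointwise minimization gives $\E\{\nu_f^2/\rho^*+\lambda c\rho^*\}\le\E\{\nu_f^2/\rho+\lambda c\rho\}$. A binding budget means $\E(c\rho^*)=b_0\ge\E(c\rho)$, so the $\lambda$ terms can be dropped and the objective at $\rho^*$ is no larger. Existence of $\lambda>0$ with a binding budget is part of the hypothesis ("interior feasible, binding-budget optimum"). If I wanted to establish it, I would use monotonicity and continuity of $\lambda\mapsto\E(c\rho^*_\lambda)$.

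\textbf{Pilot penalty.} Now replace $a_f$ by the fixed $\widetilde a_f$ in $D$. Conditional unbiasedness given $(W,L_f)$ still holds, because the pilot is independent and fixed. The conditional variance becomes $(L_f-\widetilde a_f)^2(\rho^{-1}-1)$. Write $L_f-\widetilde a_f=(L_f-a_f)+(a_f-\widetilde a_f)$. Its conditional second moment given $W$ is $\nu_f^2+(\widetilde a_f-a_f)^2$, because the cross term vanishes since $\E(L_f-a_f\mid W)=0$.

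Adding $\Var(L_f)$ as before yields \eqref{eq:pilotpenalty}. The $\rho$-dependent part is $\E[\{\nu_f^2+(\widetilde a_f-a_f)^2\}/\rho]$, and rerunning the allocation argument with this weight gives the stated replacement. The main care point is the conditioning: $\widetilde a_f$ must be treated as deterministic given the pilot, and the Bernoulli draw must be independent of $L_f$ given $W$.
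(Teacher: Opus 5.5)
Your proposal is correct and follows the paper's proof. The allocation step is the same: pointwise minimization of $\nu_f^2/r+\lambda c r$, clipping to $[\ell,u]$, and integrating the pointwise inequality against a binding budget with $\lambda>0$. The pilot case uses the same replacement of $\nu_f^2$ by $\nu_f^2+(\widetilde a_f-a_f)^2$.

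The one minor difference is where you condition in the law of total variance. You condition on $(W,L_f)$ and obtain $\Var(L_f)+\E\{(L_f-\widetilde a_f)^2(\rho^{-1}-1)\}$. The paper conditions on $W$ alone and splits the residual into the uncorrelated pieces $C(L_f-a_f)/\rho$ and $(1-C/\rho)(\widetilde a_f-a_f)$. In both versions the cross term vanishes because $\E(L_f-a_f\mid W)=0$.
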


Under \eqref{eq:allocation}, we prioritize sites according to the conditional variance of the corrected risk feature relative to cost. This variance can be large because of residual effect variation or small within-site samples. Consequently, sites with the largest predicted treatment benefits need not be the most informative sites for estimating residual variation. For multiple risk targets with positive-semidefinite importance matrix $\bm{H}$, replace $\nu_f^2$ by the conditional trace $\tr\{\bm{H}\Var(\bm{L}\mid W)\}$. A regional target requires positive pairing probability in every region whose residual variation is to be estimated.

For the corrected residual-risk feature, variance minimization gives the cost-adjusted square-root allocation principle \citep{Neyman1934}. We determine the allocation inputs and eligibility restrictions from the hybrid experiment and the risk estimand. We set the lower probability bound to support inference as well as allocation: positive probabilities identify the target, consistency requires $Jq_h\rho_h\to\infty$ for each target stratum, and the resulting covariance determines finite-sample precision. With $J_h$ eligible sites in stratum $h$, the expected paired count is $J_h\rho_h$. To implement fixed counts, we sample sites at random within strata. Conditional finite-frame variances then have finite-population corrections, whereas inference for a population of sites includes sampling variation of the frame itself. We give both formulas in the supplement.

To calculate the allocation, we use second moments of the full corrected risk feature, including variation in its latent quadratic terms. A cross-arm covariance bound constrains only part of this variation. We can estimate conditional variances from an independent paired pilot and fix both the allocation and the risk predictor before observing the evaluation sites. Alternatively, known outcome and weight bounds provide conservative upper bounds, or we can optimize the worst-case variance over a pilot confidence set. In the numerical evaluation, we use estimated pilot inputs to assess the implementable rule.

For a fixed predictor, we quantify the variance contribution of error in the augmentation function exactly through \eqref{eq:pilotpenalty}. If $f$ is fitted to estimate the best projection, its squared prediction error must also meet the rate in Theorem~\ref{thm:rates} for inference about $R_b$. An independent evaluation can instead estimate $\mathcal R(\widehat f_{\rm pilot})$ conditional on the pilot without that convergence requirement. The IST design evaluation uses this absolute prediction risk as its target.

To implement the design, we first specify the site population, reference, recorded characteristics, and residual or absolute-risk target. We then identify the required arm pairs, the feasible randomization patterns, and the independent units from which to construct products. Using an independent paired pilot, we fix $f$, the augmentation functions, and the allocation inputs, then choose participant counts and pairing probabilities under cost and support constraints. We estimate the target using the corrected features and recorded probabilities and construct a confidence set to quantify precision. We next derive the role of participant counts in this allocation.

\subsection{Joint site and participant allocation}
\label{sec:jointdesign}
We choose paired sites and participant counts using the variance of the corrected risk feature. We decompose this variance into within-site sampling variation, which decreases with participant counts, and between-site variation in squared prediction errors.

Condition on independent training and pilot data, and fix $f$. Conditional on a site, let $T_{j1},\ldots,T_{jn}$ be independent and identically distributed effect scores with mean $\tau_j$ and variance $s_j^2$. We choose $n$ using $W$ and the pilot before observing evaluation outcomes. For example, under Bernoulli individual randomization with probability $e_j$, the score $AY/e_j-(1-A)Y/(1-e_j)$ has this mean for a native-composition effect. Under the feature conditions, we can use known reference weights or independently trained augmentation to construct a standardized score. We may take an independently randomized subcluster as the lower-level unit. We define
\begin{equation}
 L_{f,n}=\frac1{n(n-1)}\sum_{i\ne k}(T_{ji}-f(W_j))(T_{jk}-f(W_j)).
 \label{eq:Ln}
\end{equation}
We obtain \eqref{eq:pairedrisk} with $\widetilde\tau_n=n^{-1}\sum_iT_{ji}$ and $Q_n=\{n(n-1)\}^{-1}\sum_{i\ne k}T_{ji}T_{jk}$: its conditional mean is $\{\tau_j-f(W_j)\}^2$ for every $n$. We use the stated i.i.d. condition to derive the participant-count formula for general score distributions, including non-Gaussian scores; the general feature equations require fewer conditions. We write
\begin{align}
 A_f(w)&=\Var\{(\tau-f(w))^2\mid W=w\},\nonumber\\
 D_f(w)&=4\E\{(\tau-f(w))^2s_j^2\mid W=w\},\qquad
 E_f(w)=2\E(s_j^4\mid W=w).
 \label{eq:ADE}
\end{align}
Here $s_j^4$ denotes the square of a conditional variance.

\begin{theorem}[Information per site and joint resource allocation]
\label{thm:jointdesign}
Assume the moments in \eqref{eq:ADE} and the displayed design criteria are finite, and $n\ge2$. Then
\begin{equation}
 \nu_{f,n}^2(w)=\Var(L_{f,n}\mid W=w)
 =A_f(w)+D_f(w)/n+E_f(w)/\{n(n-1)\}.
 \label{eq:exact_n_variance}
\end{equation}
Let $\widetilde a_f(w)$ be one fixed pilot augmentation common to all candidate counts, let $a_f(w)=\E\{(\tau-f(w))^2\mid W=w\}$, and put
\[
 \eta_{f,n}^2(w)=\nu_{f,n}^2(w)+\{a_f(w)-\widetilde a_f(w)\}^2.
\]
For the actual pilot-augmented record, \eqref{eq:pilotpenalty} makes the design-dependent variance
\[
 \E\{\eta_{f,n(W)}^2(W)/\rho(W)\}.
\]
Oracle augmentation is the case $\widetilde a_f=a_f$. Let $c_n(w)=c_0(w)+c_1(w)n$, with positive coefficients, and let $n$ range over a prespecified finite set $\mathcal N(w)$. Subject to $\E\{c_{n(W)}(W)\rho(W)\}\le b_0$ and the stated probability bounds, for any $\lambda>0$ minimize
\begin{equation}
 \eta_{f,n}^2(w)/r+\lambda c_n(w)r
 \label{eq:jointlagrange}
\end{equation}
over $n\in\mathcal N(w)$ and $\ell(w)\le r\le u(w)$ at each $w$. If these choices satisfy the budget with equality, they are globally optimal. For each candidate $n$, the minimizing $r$ is the clipped square-root rule. If the minimizing rules are interior for every candidate $n$, the optimal participant count minimizes
\begin{equation}
 n^*(w)\in\arg\min_{n\in\mathcal N(w)}
       c_n(w)\eta_{f,n}^2(w).
 \label{eq:bestn}
\end{equation}
\end{theorem}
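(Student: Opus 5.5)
The plan has three parts: the exact variance formula, the pilot-augmented design criterion, and the Lagrangian argument.

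\textbf{Variance formula.} I would apply the law of total variance conditional on the site, and it is convenient to center the scores first. Write $e_{ji}=T_{ji}-\tau_j$ and $\delta_j=\tau_j-f(W_j)$. Then
\[
 L_{f,n}=\delta_j^2+\frac{2\delta_j}{n}\sum_i e_{ji}+\frac{1}{n(n-1)}\sum_{i\ne k}e_{ji}e_{jk}.
\]
Conditional on the site, the three terms are mutually uncorrelated. The cross-covariance of the linear and degenerate quadratic terms involves $\E(e_i^2e_k)=0$ for $i\ne k$, so it vanishes. The conditional mean is $\delta_j^2$, which gives $\Var(\delta^2\mid W)=A_f$. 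The expected conditional variance has two parts. The linear term contributes $4\E(\delta^2 s_j^2\mid W)/n=D_f/n$. For the degenerate U-statistic term, there are $n(n-1)/2$ unordered pairs, each appearing twice. Its variance is
\[
 \frac{4}{n^2(n-1)^2}\cdot\frac{n(n-1)}{2}\,s_j^4=\frac{2s_j^4}{n(n-1)},
\]
which after averaging gives $E_f/\{n(n-1)\}$. Only finite fourth-order conditional moments of the form $s_j^4$ are needed; no Gaussianity is used. This establishes \eqref{eq:exact_n_variance}.

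\textbf{Pilot-augmented criterion.} Here I would invoke Theorem~\ref{thm:allocation}, namely \eqref{eq:pilotpenalty}, with $L_f$ replaced by $L_{f,n(W)}$. Because $n(W)$ is fixed from $W$ and the pilot, the record's conditional mean is still $a_f(W)$, which does not depend on $n$. The variance is then
\[
 \Var(a_f)+\E(\nu^2_{f,n}/\rho)+\E\{(\rho^{-1}-1)(\widetilde a_f-a_f)^2\}.
\]
Dropping the terms that do not depend on the design, $\Var(a_f)$ and $-\E(\widetilde a_f-a_f)^2$, leaves $\E\{\eta^2_{f,n}/\rho\}$.

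\textbf{Global optimality.} I would use the standard Lagrangian sufficiency argument pointwise over $w$. Suppose the pointwise choices $(n^*,r^*)$ minimize \eqref{eq:jointlagrange} at each $w$ and meet the budget with equality. Take any feasible $(n,\rho)$. Then
\[
 \E(\eta^2_{n}/\rho)+\lambda\E(c_n\rho)\ge \E(\eta^2_{n^*}/r^*)+\lambda\E(c_{n^*}r^*)=\E(\eta^2_{n^*}/r^*)+\lambda b_0,
\]
and $\E(c_n\rho)\le b_0$ gives $\E(\eta^2_n/\rho)\ge\E(\eta^2_{n^*}/r^*)$. Measurability of the pointwise argmin over a finite set $\mathcal N(w)$ should be handled by choosing a measurable selection.

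\textbf{Clipped rule and optimal count.} For fixed $n$, the function $r\mapsto\eta^2/r+\lambda c_n r$ is convex on $(0,\infty)$. Its minimizer over $[\ell,u]$ is the projection of $\sqrt{\eta^2/(\lambda c_n)}$ onto that interval, which gives the clipped square-root rule. If the minimizer is interior, the minimal value is $2\sqrt{\lambda c_n\eta^2_{f,n}}$. Minimizing over $n$ is therefore the same as minimizing $c_n\eta^2_{f,n}$, which yields \eqref{eq:bestn}.

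\textbf{Main obstacle.} I expect the variance expansion to be the most delicate step, specifically verifying the vanishing cross terms and the exact pair count, since that is the only calculation; the rest is routine.
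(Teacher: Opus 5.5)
Your proposal is correct and follows the paper's proof essentially step for step. The paper uses the same centered expansion $L_{f,n}=d^2+2dn^{-1}\sum_i\varepsilon_i+\{n(n-1)\}^{-1}\sum_{i\ne k}\varepsilon_i\varepsilon_k$ with total variance over sites, the same rewriting of \eqref{eq:pilotpenalty} as $\Var(a_f)-\E e^2+\E(\eta^2_{f,n}/\rho)$, and the same pointwise Lagrangian, clipping, and $2\sqrt{\lambda c_n\eta^2_{f,n}}$ substitution; one wording fix is that within a site only finite second moments of the scores are needed, because $s_j^4$ is the square of a conditional variance rather than a fourth moment of $T_{ji}$, and finiteness of $\E(s_j^4\mid W)$ is simply part of the assumption that $E_f$ is finite.
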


We obtain the variance formula from the order-two product decomposition \citep{Hoeffding1948}. Its linear part is $2(\tau-f)n^{-1}\sum_i(T_i-\tau)$; its degenerate part averages $(T_i-\tau)(T_k-\tau)$ over distinct units. Using their zero covariance, we obtain \eqref{eq:exact_n_variance}. By increasing $n$, we reduce the two measurement terms while retaining $A_f$ and the fixed pilot augmentation error. If $A_f+(a_f-\widetilde a_f)^2>0$, the product in \eqref{eq:bestn} eventually increases with $n$ and has a finite integer minimizer when $c_1>0$. If this term is zero, the preferred count may reach the allowed maximum.

To compute the allocation, we enumerate feasible participant counts and solve for the multiplier. When integer choices switch, we can randomize between tied count--probability pairs to attain an intermediate expected budget and record the selected pair. A fixed realized budget requires the fixed-count implementation in the supplement. When probability bounds bind, we optimize \eqref{eq:jointlagrange} directly; the boundary example in Supplement Section~S12.5 shows why clipping can change the preferred count. Supplement Proposition~S6 extends the optimization to unequal arm-by-reference cell counts at a common target.

We use a common augmentation across participant counts because a count-specific function would make the subtracted squared error in \eqref{eq:pilotpenalty} depend on $n$ and change the joint objective. The optimum uses true inputs conditional on the pilot; for estimated pilot inputs, we use the pilot-error guarantees in Supplement Section~S6. We can estimate the between-site component and the two measurement components by subsampling participant counts within the paired pilot sites. We use the exact variance identity for expected-cost optimization. Finite-sample interval coverage and compliance with a fixed realized budget require separate analyses.

For several regional risks or outcomes, we optimize allocation using the covariance of the corrected features. In Supplement Theorem~S4, we characterize the allocation that minimizes the variance of the least precise standardized contrast and give a pilot-error bound for the vector of targets.

\section{Simulation Studies}
\label{sec:simulation}
We examine mean-effect modification, residual-variance modification, and prognostic prediction using three equally frequent site strata with arm sizes 80, 40, and 20. Three bounded-response mechanisms share mean site effect .04 and site-effect variance .00827; a fourth has constant effects. We estimate augmentation functions from an independent 60-site-per-stratum pilot. We evaluate $J=300$ or 900 and pairing probabilities .2 or .5 in 1,000 replications per setting, correcting within-site sampling variation by distinct-response products. Supplement Section~S12 gives generating equations, seeds, and comparisons.

The corrected estimators distinguish explained and residual components in mechanisms with identical total site-effect variance. In the residual-variance mechanism, $B=0$ while conditional residual variances are .0016, .0064, and .0168. Uncorrected squared site-effect estimates overestimate $R$ by about .014 at $J=300$, pairing probability .2, exceeding the true residual component. In the mean-modifier setting at this evaluation size, complete-pair estimation has lower residual RMSE than fusion. Single-arm information improves estimation of the explained component; its contribution to precision depends on the target.

\begin{figure}[htbp]
\centering
\includegraphics[width=.90\textwidth]{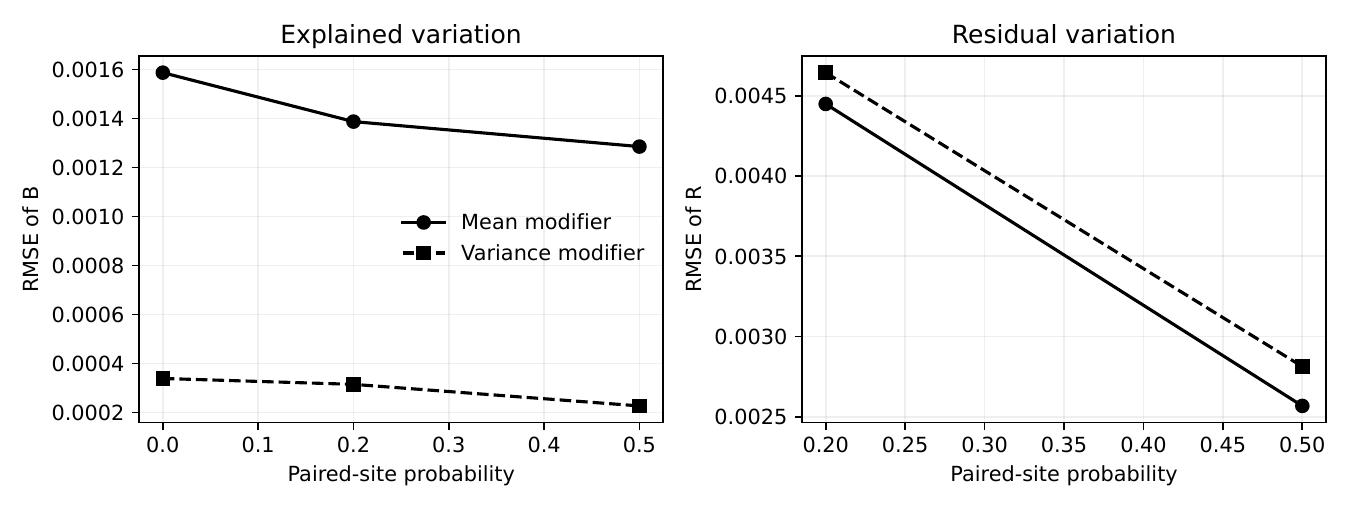}
\caption{Simulation RMSE at $J=300$. Whole-site data identify the explained component; residual variation requires paired information. Lines join only evaluated settings.}
\label{fig:newrates}
\end{figure}

Increasing the pairing probability or the number of sites reduces residual RMSE in these simulations. In the residual-variance mechanism, residual RMSE is .00464 at $(J,\rho)=(300,.2)$, .00281 at $(300,.5)$, and .00264 at $(900,.2)$. Coverage of residual Wald intervals is .918--.928 in the four $(300,.2)$ settings, and explained-variance Wald inference is nonregular at zero. Projecting the confidence region for the first and second moments gives joint coverage .991--1.000 in a separate grid, with conservative widths (Supplement Table~S1).

We conduct a separate allocation experiment with 600 evaluation sites, a 180-site pilot, expected pairing fraction .30, and bounds [.10,.90]. We evaluate the same fixed pilot predictor under both allocations and include the same pilot cost. Pilot allocation reduces the absolute-risk estimator’s RMSE from .00255 to .00222 at the common budget (Table~\ref{tab:designeval}). Supplement Section~S12.5 gives a numerical check of joint site--participant allocation under a common measurement budget. 

\section{Applications}
\label{sec:applications}
\subsection{International Stroke Trial}
\label{sec:ist_application}
The International Stroke Trial randomized 19,435 patients to aspirin or avoidance, crossed with heparin assignment \citep{IST1997,ISTdata,Sandercock2011}. We use the within-hospital trial records to delete one arm and assess where paired measurements improve absolute-risk estimation. We study aspirin benefit---risk reduction in 14-day mortality and six-month death/dependency---averaged over the trial's factorial heparin-assignment law. In the allocation experiment for absolute prediction risk of 14-day benefit, we fix the predictor, hospital frame, participant reference, and expected pairing budget.

\paragraph{Common reference and explanatory basis.}
We use the same age mixture for both arms and endpoints: 75\% age 80 or younger and 25\% older than 80, retaining other characteristics' within-age distributions and weighting eligible hospitals equally. We require at least two observed responses per arm-by-age cell for both endpoints, retaining 226 of 466 public hospital identifiers and 16,047 baseline patients. Eligibility uses availability counts without response values; complete-case means assume response-independent availability within hospital, arm, and age cell. We use United Kingdom, other Europe, and other-country indicators to describe how hospital effects vary across settings.

\paragraph{The missing information and its recovery.}
When we retain one aspirin arm per evaluation hospital, we preserve arm-specific information and remove the cross-arm product $U_{01,j}=S_{0j}S_{1j}$. Theorem~\ref{thm:identification} establishes absolute-risk nonidentification in the unrestricted site-population model. In the deletion experiment, we fix the observed frame and evaluate how precisely paired measurements reconstruct its complete-information corrected estimate. We fix prediction and design inputs from the pilot and estimate the evaluation frame's second moment from its own retained records. Paired hospitals provide the missing product and distinct-patient squares $U_{00,j},U_{11,j}$. With benefit score $\widetilde\tau_j=S_{0j}-S_{1j}$, the theory's corrected feature is
\[
 Q_j=U_{00,j}+U_{11,j}-2U_{01,j},\qquad
 L_{f,j}=Q_j-2f(W_j)\widetilde\tau_j+f(W_j)^2.
\]
We calculate the cell means and products directly from disjoint patients using fixed age weights. Under independent patient response sampling, we recover the stated target moments with finite cell counts, taking expectations over patient responses rather than treatment assignment alone in a fixed patient roster. Supplement Section~S13 gives the exact products, outcome coding, availability assumptions, and allocation protocol.

\paragraph{An independent pilot selects the paired hospitals.}
We select 75 paired pilot hospitals by an outcome-independent stratified random draw, leaving 151 for evaluation. We fix the regional benefit predictor from the pilot and use baseline enrollment counts 38 and 77 to define measurement-precision strata. We use region to explain hospital effects and enrollment strata to estimate the precision of corrected squared-error measurements.

We fix augmentation and allocation inputs from the pilot's $L_f$. With one incremental cost unit per paired hospital, the probabilities are .401, .403, and .100 in small, middle, and large hospitals (Supplement Table~S6), versus uniform .30. The pilot variance estimates for corrected risk features are smaller in larger hospitals; we retain their information through the .10 probability floor. We choose pairing probabilities to improve the precision of the absolute-risk estimator for the fixed benefit predictor.

In each of 1,000 deletions, we retain both arms at paired hospitals and one randomly selected aspirin arm elsewhere. Hidden outcomes are removed before feature computation and cannot reconstruct missing cross-arm products. Both allocations use the same 151 evaluation hospitals, fixed predictor and augmentation, corrected risk feature, and expected 45.3 paired evaluation hospitals, and include the same 75-hospital pilot cost. We measure the budget by expected incremental hospital cost, holding participant arrays fixed and allowing realized paired counts to vary; recruitment costs are excluded.

\begin{table}[htbp]
\centering\small
\caption{Acquiring information for the same fixed pilot predictor's absolute risk. The single-arm row denotes unavailable same-hospital cross-arm products; no RMSE is assigned. Nonidentification in the unrestricted site-population model is established in Theorem~\ref{thm:identification}. Numeric rows compare placement at the same expected paired count within each experiment (1,000 replications). Simulation errors use known risk conditional on the pilot; IST errors reconstruct the full-information corrected estimate in the fixed evaluation frame without pilot extrapolation. Bias and RMSE have squared risk-difference units. Mean pairs are realized averages; expected counts are 180 and 45.3, respectively.}
\label{tab:designeval}
\begin{tabular}{llrrr}
\toprule
Evaluation & Allocation & Bias & RMSE & Mean pairs \\
\midrule
IST single-arm & No paired evaluation & \multicolumn{2}{c}{Products unavailable} & 0 \\
\midrule
Simulation & Uniform & 0.00004 & 0.00255 & 179.7 \\
Simulation & Pilot allocation & 0.00003 & 0.00222 & 180.1 \\
IST deletion & Uniform & 0.00005 & 0.00419 & 45.0 \\
IST deletion & Pilot allocation & 0.00013 & 0.00345 & 45.2 \\
\bottomrule
\end{tabular}

\end{table}

We use the full-information corrected risk estimate, .00322, as the IST reference conditional on the fixed evaluation frame and pilot. With pilot allocation, deletion RMSE falls from .004192 to .003448 (17.75\%); exact conditional variance falls from $1.711\times10^{-5}$ to $1.125\times10^{-5}$ (34.26\%). Both deletion estimators are conditionally unbiased for this reference; the small empirical biases in Table~\ref{tab:designeval} are Monte Carlo averages. These deletion errors measure reconstruction uncertainty for the estimated reference conditional on the frame and pilot. By changing pairing probabilities, we change the uncertainty of estimating $\mathcal R(f)$ while holding $f$ and its true population risk fixed.

\paragraph{Estimating regional and residual hospital-effect variation.}
We treat the 226 hospitals as independent response-population draws within region, conditional on stratum counts 70, 129, and 27 and weights $J_h/226$. Applying Proposition~\ref{prop:finite}'s finite-stratum correction, we estimate mean benefits of $-.0008$ and $.0174$ (Table~\ref{tab:istdecomp}). For 14-day mortality, the regional component intervals have upper limits .00127 for $B_b$ and .00957 for $R_b$. We retain the negative signed regional estimate and six-month variance components and assess heterogeneity through intervals that admit both small and larger residual differences. For the best regional projection, we quantify uncertainty over response-population sampling of hospitals rather than exact treatment randomization for a fixed vector of 226 hospital effects; for deletion, we quantify reconstruction uncertainty for the fixed pilot predictor's risk estimate conditional on the realized 151-hospital frame.

\begin{table}[htbp]
\centering\small
\caption{Aspirin benefit across the same 226 hospitals. Components are signed noise-corrected estimates, in squared risk-difference units. Brackets are nominal simultaneous 95\% intervals obtained by projecting a primitive moment confidence region within each endpoint, with asymptotic coverage from the primitive approximation and no joint claim across endpoints.}
\label{tab:istdecomp}
\begin{tabular}{lrrrr}
\toprule
Outcome & Mean & $B_b$ & $R_b$ & $V$ \\
\midrule
14-day mortality & $-0.0008$ & $-0.00003$ & $0.00261$ & $0.00258$ \\
95\% interval & & $[0,0.00127]$ & $[0,0.00957]$ & $[0,0.00957]$ \\
6-month composite & $0.0174$ & $-0.00018$ & $-0.00117$ & $-0.00135$ \\
95\% interval & & $[0,0.00230]$ & $[0,0.00901]$ & $[0,0.00901]$ \\
\bottomrule
\end{tabular}

\end{table}

We report regional residual estimates and their upper limits in Supplement Table~S5. When we change the participant reference, we compare variation in different targets and interpret their variance difference as a target comparison rather than an additive composition contribution. We retain negative corrected estimates and assess the intervals using the asymptotic normal approximation for the primitive moments, with particular care for the small geographic stratum.

\begin{figure}[htbp]
\centering
\includegraphics[width=\textwidth]{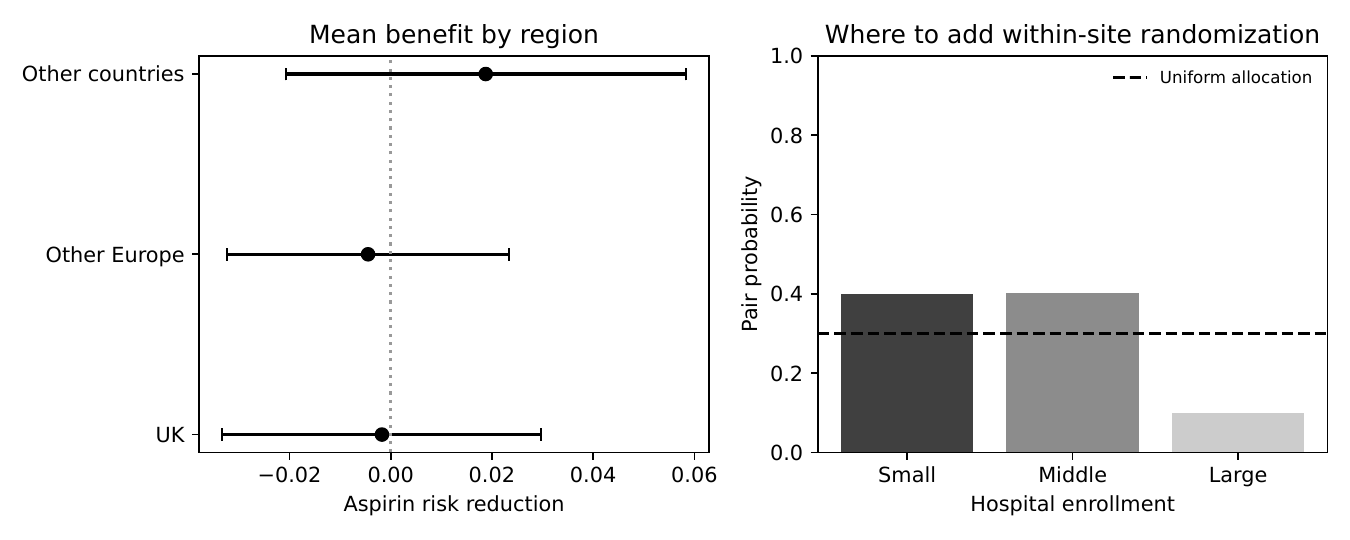}
\caption{The two roles of hospital information. Left: regional mean-benefit estimates and simultaneous intervals for the primitive moments of 14-day mortality in the 226-hospital frame. Right: enrollment-stratum pairing probabilities for measuring the fixed pilot predictor's absolute error in the 151 evaluation hospitals. The allocations change measurement precision, not the predictor or its true risk.}
\label{fig:istdesign}
\end{figure}

\subsection{Honduras Graduation Trial}
\label{sec:hondurasapplication}
We analyze the Honduras component of the Graduation study in \citet{Banerjee2015} to estimate explained and residual effect variation after removing finite-participant noise. Eligible households in 40 intervention villages were randomized to a package of productive assets, training, support, and coaching. We compare assigned and unassigned households within the intervention villages, retaining their trial environment and excluding the 40 pure-control villages from this contrast. The outcome is endline log monthly per-capita consumption. Of 1,609 households, 1,529 have an observed positive outcome: 767 assigned and 762 unassigned. All 40 intervention villages remain, with at least seven observations in each arm. We weight villages equally and use the same native participant-population reference in both arms, including participant composition and village context in effect variation.

\paragraph{Village-effect projections onto recorded attributes.}
We fix the explanatory basis---an intercept, log recorded village population, and the recorded binary distance indicator---and village weights across uncorrected and corrected measurements. We use recorded village population and distance as village descriptors; neither variable is constructed from household outcomes. Let $\bm{\tau}_F$ be the vector of the 40 native village effects, $\bm{P}$ the projection onto this three-dimensional basis, and $\bm{P}_0$ the intercept projection. The targets are
\[
 B_F=J^{-1}\bm{\tau}_F^\trans(\bm{P}-\bm{P}_0)\bm{\tau}_F,\qquad
 R_F=J^{-1}\bm{\tau}_F^\trans(\bm{I}-\bm{P})\bm{\tau}_F,\qquad V_F=B_F+R_F.
\]
We summarize the village-effect vector by its finite-frame projection onto the recorded basis. The targets allow departures from a linear effect model and differ from unrestricted conditional-mean explanation and random-effect variance. We fix the recorded descriptors retrospectively and interpret the coefficients as associations with existing village characteristics; their original measurement timing remains unverified (Section~S13.2).

\paragraph{Correcting finite-household sampling noise.}
For $Z_j=\bar Y_{1j}-\bar Y_{0j}$, the distinct-household products supply $Q_j=Z_j^2-\widehat v_j$, where $\widehat v_j=s_{1j}^2/n_{1j}+s_{0j}^2/n_{0j}$. Under independent household response sampling and nonselective outcome availability within village and arm, we use these native first- and second-moment features to average over household response sampling. We therefore interpret the correction within the response-population model, whose expectations differ from those obtained by fixing household potential outcomes and averaging only treatment randomization. Proposition~S8 corrects projection noise by subtracting $J^{-1}\tr\{(\bm{P}-\bm{P}_0)\bm{\widehat\Omega}\}$ from the explained quadratic and $J^{-1}\tr\{(\bm{I}-\bm{P})\bm{\widehat\Omega}\}$ from the residual quadratic, with $\bm{\widehat\Omega}=\operatorname{diag}(\widehat v_j)$. Conditional on this frame and basis, the corrected point estimators are unbiased for $B_F$ and $R_F$ without a Gaussian outcome assumption.

\begin{table}[htbp]\centering\small
\caption{The same 40-village frame and additive basis before and after finite-household correction. The uncorrected column projects observed village differences, which include sampling noise; the corrected column measures variation in the village-effect vector. Components have squared log-consumption units. Working bounds are approximate simultaneous 95\% containing bounds from a normal confidence region for the primitive moments under the response-population conditions. They are neither finite-sample guarantees nor confidence bounds for the ratio.}
\label{tab:hondurasmain}
\begin{tabular}{lrrr}
\toprule
Target & Uncorrected & Corrected & Working bounds \\
\midrule
$B_F$: specified explanation & 0.00664 & 0.00463 & [0.00001, 0.03318] \\
$R_F$: remaining variation & 0.04081 & 0.00282 & [0.00000, 0.05784] \\
$V_F=B_F+R_F$ & 0.04745 & 0.00745 & [0.00001, 0.05786] \\
$B_F/V_F$ (point ratio only) & 0.140 & 0.621 & --- \\
\bottomrule
\end{tabular}

\end{table}

Holding the village frame and basis fixed, we estimate residual variation of .04081 before correction and .00282 after correction (Table~\ref{tab:hondurasmain}). Under the stated response model, we attribute most apparent residual dispersion to finite-household sampling noise. The total estimate falls from .04745 to .00745 and the explained component from .00664 to .00463, while its point share rises from 14.0\% to 62.1\%.

We estimate the additive association with log population as $-.0707$ log-consumption units per unit log population, with working 95\% interval $[-.1497,.0084]$, and the distance-indicator association as $.1560$, with interval $[.0104,.3016]$. The positive distance coefficient indicates larger village effects for the recorded category coded one, conditional on log population. We interpret the 62.1\% point explanation ratio together with component bounds that include zero residual variation and allow substantial uncertainty. When we add the population-by-distance interaction, we obtain 49.5\%, with greater leverage and similarly weak precision. Supplement Section~S13.3 assesses the working intervals at these village sizes under normal and nonnormal responses. Section~S13.2 gives the protocol and sensitivity analyses. We interpret the working ranges under this fixed reanalysis specification, with earlier application and endline screening unadjusted.

In Honduras, we estimate explained and residual effect variation; in IST, we select hospitals to measure a fixed pilot predictor's absolute risk. In both analyses, paired products, equivalently sampling-variance corrections, recover the same-site cross-arm moments required for measurement and acquisition design.

\section{Discussion}
\label{sec:discussion}
We identified the same-site observations required for corrected measurement of residual heterogeneity and absolute prediction risk and derived allocation rules for acquiring them under a specified budget. Researchers can compare fixed predictors using risk differences even when their absolute risks remain unidentified. In the unrestricted whole-site model, increasing the number of single-arm sites or participants leaves the missing cross-arm information unresolved. Randomizing participants within selected sites supplies that information for corrected inference and resource allocation.

In IST, we improved reconstruction precision while fixing the predictor, age reference, hospital frame, and expected cost by allocating more paired measurements to small and middle-size hospitals with greater estimated variance of corrected risk features. In Honduras, we used both within-village arms to measure fixed-basis explained and residual variation for each village's participant population. Sampling correction reduced both components, but reduced residual variation more sharply, raising the explained share for the same basis. We report this share as a descriptive point estimate and summarize uncertainty with component intervals.

Under the feature conditions for the participant response-population model, we derived exact moment, covariance, and finite-projection corrections. We can consistently estimate population heterogeneity even when individual site-effect estimates remain noisy, provided informative paired site counts grow under the regularity conditions for consistency and asymptotic inference; within-site sample sizes need not diverge. We use the unequal-cell formulas to account for reference weights and arm imbalance when choosing participant counts and sites. For the specified corrected estimator, we base allocation on variation in corrected squared prediction errors, pilot error, and costs. In simulations, we evaluate estimation error against known prediction risk; in IST, we reconstruct a complete-information corrected estimate in the observed frame. We assess Honduras working intervals under explicit response distributions at recorded village sizes. In both applications, we describe how treatment effects vary with existing site characteristics $W$.

To apply these results, we require valid participant reference distributions, nonselective outcome availability, independent units for constructing moment features, and compatible intervention versions. Effects defined for each site's participant population include both participant composition and site context. For a prospective trial, investigators can use the joint design with recruitment costs and feasible participant counts.

\section*{Funding}
This work was supported by JSPS KAKENHI [grant numbers 22H03552 and 25K21651] and JST CREST [grant number JPMJCR2665].

\section*{Disclosure Statement}
The author declares no competing interests.

\section*{Declaration of Generative AI Use}
The author used ChatGPT6 to assist with checking mathematical proofs, translating text into English, and proofreading the manuscript. Claude Fable 5 was used to assist with writing program code. The author reviewed the resulting text, mathematical arguments, and code and takes responsibility for the final manuscript and reported results.

\section*{Data Availability and Supplementary Materials}
This study reanalyzes existing trial data and involves no new participant recruitment or treatment assignment. The supplementary material below provides complete proofs, numerical protocols, and IST and Honduras measurement and design details.

\bibliographystyle{apalike}
\bibliography{references}

@article{LiLuedtke2023,
  author  = {Li, Sijia and Luedtke, Alex},
  title   = {{Efficient Estimation under Data Fusion}},
  journal = {Biometrika},
  year    = {2023},
  volume  = {110},
  number  = {4},
  pages   = {1041--1054},
  doi     = {10.1093/biomet/asad007},
  note = {DOI: \url{https://doi.org/10.1093/biomet/asad007}}
}

@article{LiGilbertDuanLuedtke2025,
  author  = {Li, Sijia and Gilbert, Peter B. and Duan, Rui and Luedtke, Alex},
  title   = {{Data Fusion Using Weakly Aligned Sources}},
  journal = {Journal of the American Statistical Association},
  year    = {2025},
  volume  = {120},
  number  = {552},
  pages   = {2569--2579},
  doi     = {10.1080/01621459.2025.2476780},
  note = {DOI: \url{https://doi.org/10.1080/01621459.2025.2476780}}
}

@article{Lu2023,
  author  = {Lu, Benjamin and Ben-Michael, Eli and Feller, Avi and Miratrix, Luke},
  title   = {{Is It Who You Are or Where You Are? {A}ccounting for Compositional Differences in Cross-Site Treatment Effect Variation}},
  journal = {Journal of Educational and Behavioral Statistics},
  year    = {2023},
  volume  = {48},
  number  = {4},
  pages   = {420--453},
  doi     = {10.3102/10769986231155427},
  note = {DOI: \url{https://doi.org/10.3102/10769986231155427}}
}

@article{IST1997,
  author  = {{International Stroke Trial Collaborative Group}},
  title   = {{The International Stroke Trial (IST): A Randomised Trial of Aspirin, Subcutaneous Heparin, Both, or Neither among 19,435 Patients with Acute Ischaemic Stroke}},
  journal = {The Lancet},
  year    = {1997},
  volume  = {349},
  number  = {9065},
  pages   = {1569--1581},
  doi     = {10.1016/S0140-6736(97)04011-7},
  note = {DOI: \url{https://doi.org/10.1016/S0140-6736(97)04011-7}}
}

@article{Sandercock2011,
  author  = {Sandercock, Peter A. G. and Niewada, Maciej and Cz{\l}onkowska, Anna and {International Stroke Trial Collaborative Group}},
  title   = {{The International Stroke Trial Database}},
  journal = {Trials},
  year    = {2011},
  volume  = {12},
  pages   = {101},
  doi     = {10.1186/1745-6215-12-101},
  note = {DOI: \url{https://doi.org/10.1186/1745-6215-12-101}}
}

@misc{ISTdata,
  author       = {Sandercock, Peter and Niewada, Maciej and Czlonkowska, Anna},
  title        = {{International Stroke Trial Database (Version 2)}},
  year         = {2011},
  howpublished = {Edinburgh DataShare},
  doi          = {10.7488/ds/104},
  url          = {https://datashare.ed.ac.uk/handle/10283/124},
  note = {DOI: \url{https://doi.org/10.7488/ds/104}. Available at \url{https://datashare.ed.ac.uk/handle/10283/124}}
}

@article{Neyman1934,
  author  = {Neyman, Jerzy},
  title   = {{On the Two Different Aspects of the Representative Method: The Method of Stratified Sampling and the Method of Purposive Selection}},
  journal = {Journal of the Royal Statistical Society},
  year    = {1934},
  volume  = {97},
  number  = {4},
  pages   = {558--625},
  doi = {10.2307/2342192},
  note = {DOI: \url{https://doi.org/10.2307/2342192}}
}

@article{DingFellerMiratrix2019,
  author={Ding, Peng and Feller, Avi and Miratrix, Luke},
  title={{Decomposing Treatment Effect Variation}},
  journal={Journal of the American Statistical Association},
  year={2019}, volume={114}, number={525}, pages={304--317},
  doi={10.1080/01621459.2017.1407322},
  note = {DOI: \url{https://doi.org/10.1080/01621459.2017.1407322}}
}

@article{Hines2025,
  author={Hines, Oliver J. and Diaz-Ordaz, Karla and Vansteelandt, Stijn},
  title={{Variable importance measures for heterogeneous treatment effects}},
  journal={Biometrics}, year={2025}, volume={81}, number={4}, pages={ujaf140},
  doi={10.1093/biomtc/ujaf140},
  note = {DOI: \url{https://doi.org/10.1093/biomtc/ujaf140}}
}

@article{Hoeffding1948,
 author={Hoeffding, Wassily},
 title={{A Class of Statistics with Asymptotically Normal Distribution}},
 journal={Annals of Mathematical Statistics},year={1948},volume={19},number={3},pages={293--325},
 doi={10.1214/aoms/1177730196},
  note = {DOI: \url{https://doi.org/10.1214/aoms/1177730196}}
}

@article{AronowGreenLee2014,
 author={Aronow, Peter M. and Green, Donald P. and Lee, Donald K. K.},
 title={{Sharp Bounds on the Variance in Randomized Experiments}},
 journal={Annals of Statistics},year={2014},volume={42},number={3},pages={850--871},
 doi={10.1214/13-AOS1200},
  note = {DOI: \url{https://doi.org/10.1214/13-AOS1200}}
}

@article{Banerjee2015,
 author = {Banerjee, Abhijit and Duflo, Esther and Goldberg, Nathanael and Karlan, Dean and Osei, Robert and Parient{\'e}, William and Shapiro, Jeremy and Thuysbaert, Bram and Udry, Christopher},
 title = {{A Multifaceted Program Causes Lasting Progress for the Very Poor: Evidence from Six Countries}},
 journal = {Science}, year = {2015}, volume = {348}, number = {6236}, pages = {1260799}, doi = {10.1126/science.1260799},
  note = {DOI: \url{https://doi.org/10.1126/science.1260799}}
}

\clearpage
\setcounter{section}{0}
\setcounter{subsection}{0}
\setcounter{subsubsection}{0}
\setcounter{equation}{0}
\setcounter{table}{0}
\setcounter{figure}{0}
\setcounter{theorem}{0}
\setcounter{proposition}{0}
\setcounter{corollary}{0}
\setcounter{lemma}{0}
\setcounter{assumption}{0}
\setcounter{definition}{0}
\setcounter{remark}{0}
\setcounter{footnote}{0}
\setcounter{page}{1}
\renewcommand{\thepage}{S\arabic{page}}
\setlength{\bibsep}{0pt}
\widowpenalty=10000
\renewcommand{\thesection}{S\arabic{section}}
\renewcommand{\theequation}{S.\arabic{equation}}
\renewcommand{\thetable}{S\arabic{table}}
\renewcommand{\thefigure}{S\arabic{figure}}
\renewcommand{\thetheorem}{S\arabic{theorem}}
\renewcommand{\theproposition}{S\arabic{proposition}}
\renewcommand{\thecorollary}{S\arabic{corollary}}
\renewcommand{\thelemma}{S\arabic{lemma}}

\section*{Supplementary Material}

This supplement provides proofs and implementation details for acquiring and measuring same-site cross-arm information. It contains feature and reference conditions, identification and information results, allocation proofs, numerical protocols, and supporting details for the IST and Honduras applications. References without an S prefix point to the main article.
\section{Proofs of decomposition and identification}
\label{sec:proofs}
For $r_a=\theta_a^*-m_a(W)$, write $\tau=m(W)+r_1-r_0$. Because the residual has conditional mean zero, it is orthogonal to the centered conditional mean. We therefore obtain
\[
 \Var(\tau)=\Var(m)+\E(r_1-r_0)^2
 =\Var(m)+\E\{v_1+v_0-2c\}.
\]
Likewise $\tau-f=(\tau-m)+(m-f)$ has orthogonal terms, proving the risk decomposition. Subtracting the two prediction risks gives the risk difference in the main text. We use the identification result to determine which experimental records identify each term in these population variance and prediction identities.

For the fixed span, the normal equations are $\E\{\bm{b}(\tau-\bm{b}^\trans\bm{\beta})\}=0$. The intercept implies $\E f_b=\mu$, and $\E f_b^2=\bm{t}^\trans \bm{G}^{-1}\bm{t}$. Thus $B_b=\Var(f_b)$ and $R_b=\E(\tau-f_b)^2$. Since $\E(\tau\mid W)=m$, the same normal equations project $m$ onto the span of $\bm{b}$. Orthogonality gives
\[
 \E(m-f_b)^2=\Var(m)-\Var(f_b)=R_b-R.
\]
These identities follow from the normal equations and orthogonality and do not require residual normality.

\begin{proof}[Proof of Theorem~1]
Under randomized arm observation, $\E\{I_aS_a/p_a(W)\mid W\}=m_a(W)$. This identifies $m$, $\mu$, $\bm{t}$, and therefore the explained components. The displayed risk-difference identity proves part (i). Similarly, $\E\{I_aU_{aa}/p_a(W)\mid W\}=\E(\theta_a^{*2}\mid W)$ identifies each conditional arm variance. The only remaining second-order term is $\E(\theta_0^*\theta_1^*\mid W)$. At constant $W$, the two Bernoulli couplings in the main text preserve both entire marginal distributions and change this cross moment. Hence the residual and absolute-risk functionals fail to be identified in the unrestricted model. Finally, $\E\{CU_{01}/\rho(W)\mid W\}=\E(\theta_0^*\theta_1^*\mid W)$ whenever the paired feature is unbiased and $\rho(W)>0$. Substitution proves part (iii).

For part (iv), we write the standardized site vector without its superscript star. Fix $W=w$. A pattern $d$ reveals $\bm{M}_d\bm{\Theta}$ and occurs with positive probability. Let $\bm{\mu}_\Theta=\E(\bm{\Theta}\mid w)$, $\bm{S}_\Theta=\E(\bm{\Theta}\bm{\Theta}^\trans\mid w)$.

For the quadratic result, sufficiency follows from $\tr(\bm{A}\bm{S}_\Theta)=\sum_d\tr\{\bm{B}_d\E(\bm{M}_d\bm{\Theta}\bm{\Theta}^\trans \bm{M}_d^\trans)\}$. If $\bm{A}$ lies outside the stated finite-dimensional span, there is a symmetric matrix $\bm{H}$ orthogonal to that span with $\tr(\bm{A}\bm{H})\ne0$. Orthogonality implies $\bm{M}_d\bm{H}\bm{M}_d^\trans=0$ for every $d$. Two centered Gaussian laws with covariances $\bm{I}$ and $\bm{I}+\epsilon \bm{H}$, for sufficiently small nonzero $\epsilon$, have identical projected distributions and different target values. This proves necessity over the unrestricted model.

For bounded coordinate support, rescale an interior rectangle to $[0,1]^K$. Suppose an off-diagonal pair $(k,\ell)$ is missing. Perturb its uniform product density by $1+\epsilon h(x_k)h(x_\ell)$, with $h(x)=x-1/2$ and sufficiently small $\epsilon$. Every observed coordinate subset omits at least one member of the pair, so integration over that member removes the perturbation. Only the $(k,\ell)$ second moment changes among the moments contributing to the target: all other relevant integrals contain an unpaired mean-zero factor. Its change is nonzero because $\int xh(x)\,dx\ne0$. If an unobserved diagonal $(k,k)$ contributes, perturb the $k$th marginal by a bounded mean-zero function orthogonal to $x$ but not to $x^2$. No observed subset contains $k$. This proves necessity. Sufficiency follows by reading the required pair moment from an observed subset.

The proof identifies conditional moments up to $Q$-null sets; estimation at a continuous $w$ additionally requires appropriate regularity or smoothing.
\end{proof}

\paragraph{Conditional bounds without pairing.}
Conditional Cauchy--Schwarz yields
\[
 \E(\sqrt{v_1}-\sqrt{v_0})^2\le R\le
 \E(\sqrt{v_1}+\sqrt{v_0})^2.
\]
Since $\E\sqrt{v_0v_1}\le\sqrt{\E v_0\E v_1}$, the conditional bounds lie within the corresponding bounds formed from aggregate residual arm variances. We use these inequalities as outer bounds. Researchers can compute sharp bounds under specified marginal laws and support restrictions by solving the corresponding coupling problem, which also determines whether a Cauchy--Schwarz endpoint is attainable.

\section{Moment features and estimation error}
\label{sec:featuredetails}
\begin{proof}[Proof of Proposition~1]
Condition on the site and independent training data. Independence of evaluation units implies $\E(S_{kb}S_{\ell b'})=\theta_k^*\theta_\ell^*$ when $b\ne b'$. Each symmetrized product has this expectation, so their average does as well. This expectation includes sampling independent lower-level units from the site's response population. Randomization alone in a fixed finite list of individual potential outcomes does not make an individual cross-potential-outcome product observable.
\end{proof}

For scalar i.i.d. scores $S_b$ with mean $\theta$ and variance $\sigma^2$, the same-arm statistic has variance
\begin{equation}
 \Var(U)=\frac{4\theta^2\sigma^2}{N}
       +\frac{2\sigma^4}{N(N-1)}.
 \label{eq:Uvariance}
\end{equation}
To verify this, write $S_b=\theta+e_b$ and expand
$U-\theta^2=2\theta\bar e+\{N(N-1)\}^{-1}\sum_{b\ne b'}e_be_{b'}$.
The two terms are orthogonal; only identical or reversed pairs contribute to the second variance. We compute this variance from distinct-score products using finite second moments of the individual scores. Fourth moments of the latent means generally enter the variance across sites of a quadratic feature.

\paragraph{Independent cells with unequal participant counts.}
Within a paired site, treatment arms and participant-reference strata
can have different sample sizes. We derive the covariance of the
corrected moment features using products of distinct observations
within and across cells, retaining the fixed reference weights and
the observed cell counts.

\begin{proposition}[Corrected moments from independent cells]
\label{prop:independent_cell_covariance}
Condition on the site, independent training data, fixed reference
weights, and cell counts. In cell $h=1,\ldots,K$, let
$Y_{h1},\ldots,Y_{hn_h}$ be independent and identically distributed
with mean $\mu_h$ and finite variance $\sigma_h^2$, where $n_h\ge2$.
Assume independence across cells. For fixed signed weights $w_h$,
define
\[
 \tau=\sum_h w_h\mu_h,\qquad
 Z=\sum_h w_h\bar Y_h,\qquad
 \widehat v=\sum_h\frac{w_h^2s_h^2}{n_h},\qquad
 Q=Z^2-\widehat v,
\]
where $s_h^2$ is the sample variance with denominator $n_h-1$.
Write
\begin{align}
 v(\mathbf n)&=\sum_h\frac{w_h^2\sigma_h^2}{n_h},\nonumber\\
 d(\mathbf n)&=
 2\sum_h\frac{w_h^4\sigma_h^4}{n_h(n_h-1)}
 +4\sum_{h<k}\frac{w_h^2w_k^2\sigma_h^2\sigma_k^2}{n_hn_k}.
 \label{eq:cell_variance_terms}
\end{align}
All expectations and covariances below condition on the information
specified above. Then $Z$ and $Q$ are square integrable and satisfy
\begin{align}
 \E Z&=\tau,&\E Q&=\tau^2,\nonumber\\
 \Var(Z)&=v(\mathbf n),&
 \Cov(Z,Q)&=2\tau v(\mathbf n),&
 \Var(Q)&=4\tau^2v(\mathbf n)+d(\mathbf n).
 \label{eq:cell_moment_covariance}
\end{align}
For a predictor value $f$ fixed by the conditioning information,
$L_f=Q-2fZ+f^2$ is square integrable and obeys
\begin{equation}
 \E L_f=(\tau-f)^2,\qquad
 \Var(L_f)=4(\tau-f)^2v(\mathbf n)+d(\mathbf n).
 \label{eq:cell_risk_variance}
\end{equation}
Here $\sigma_h^4=(\sigma_h^2)^2$. These identities require finite
second response moments; Gaussian responses and finite fourth
response moments are unnecessary.
\end{proposition}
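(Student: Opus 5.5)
The plan is to decompose $Z$ and $Q$ into independent centered cell components, in the same way as the scalar identity \eqref{eq:Uvariance}. Write $Y_{hi}=\mu_h+e_{hi}$, with the errors mean zero and independent across $(h,i)$. Put $\bar e_h=n_h^{-1}\sum_ie_{hi}$ and $L=\sum_hw_h\bar e_h$, so that $Z=\tau+L$. Then $\E Z=\tau$ and $\Var Z=v(\mathbf n)$ follow at once from independence.

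For $Q$ I will use the U-statistic identity for each cell:
\[
\bar Y_h^2-\frac{s_h^2}{n_h}=\frac{1}{n_h(n_h-1)}\sum_{i\ne i'}Y_{hi}Y_{hi'}.
\]
This rewrites
\[
Q=\sum_hw_h^2\,\frac{1}{n_h(n_h-1)}\sum_{i\ne i'}Y_{hi}Y_{hi'}+\sum_{h\ne k}w_hw_k\bar Y_h\bar Y_k,
\]
which is a sum of products of distinct, independent observations. Substituting $Y=\mu+e$ then gives
\[
Q-\tau^2=2\tau L+D,
\]
where
\[
D=\sum_hw_h^2\,\frac{1}{n_h(n_h-1)}\sum_{i\ne i'}e_{hi}e_{hi'}+\sum_{h\ne k}w_hw_k\bar e_h\bar e_k .
\]
To check this, note that the cross terms $\mu_h e$ collect to $2\tau L$, and the $\mu\mu$ terms collect to $\tau^2$.

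The core observations are about $D$. First, $D$ is degenerate: every summand is a product of two errors with distinct indices, so it has mean zero. Second, $D$ is uncorrelated with every linear term. The expectation $\E(e_{gj}e_{hi}e_{ki'})$ with $(h,i)\ne(k,i')$ vanishes, because some index appears exactly once and the errors are independent with mean zero. This needs only finite second moments, since each such product is integrable as a product of independent $L^2$ and $L^1$ factors. It also shows that $Q$ is square integrable, because each summand is a product of independent square-integrable variables. Hence $\E Q=\tau^2$, $\Cov(Z,Q)=2\tau v$, and $\Var Q=4\tau^2v+\Var D$.

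The main bookkeeping step is computing $\Var D$. The distinct-pair products $e_{hi}e_{h'i'}$, indexed by unordered pairs of distinct observations, are pairwise uncorrelated. Each has variance $\sigma_h^2\sigma_{h'}^2$, and no fourth moments enter because the two factors are independent. So $\Var D$ is the sum of squared coefficients times $\sigma\sigma$ over unordered pairs.

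For a within-cell pair in cell $h$, the coefficient is $2w_h^2/\{n_h(n_h-1)\}$. There are $n_h(n_h-1)/2$ such pairs, which contributes $2w_h^4\sigma_h^4/\{n_h(n_h-1)\}$.

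For a cross-cell pair with $h<k$, the coefficient is $2w_hw_k/(n_hn_k)$. There are $n_hn_k$ such pairs, which contributes $4w_h^2w_k^2\sigma_h^2\sigma_k^2/(n_hn_k)$.

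These two contributions together give $d(\mathbf n)$. Finally, $L_f=Q-2fZ+f^2=(\tau-f)^2+2(\tau-f)L+D$. Its mean is therefore $(\tau-f)^2$, and by the same orthogonality its variance is $4(\tau-f)^2v+d$.
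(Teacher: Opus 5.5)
Your proposal is correct and is essentially the paper's proof. It uses the same decomposition $Q=\tau^2+2\tau L+D$ into a linear part and a degenerate distinct-pair part, the same argument that a product containing some centered observation exactly once has mean zero under finite second moments, and the same pair-counting calculation giving $\Var(D)=d(\mathbf n)$. The only difference is cosmetic: you apply the distinct-pair identity to the raw responses and then center, whereas the paper centers first and uses $\bar\varepsilon_h^2-s_h^2/n_h=D_h$.
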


\begin{proof}
Let $\varepsilon_{hi}=Y_{hi}-\mu_h$ and
$E_0=\sum_h w_h\bar\varepsilon_h$. Within cell $h$, define
\[
 D_h=\frac1{n_h(n_h-1)}
       \sum_{i\ne i'}\varepsilon_{hi}\varepsilon_{hi'}.
\]
The algebraic identity
$\bar\varepsilon_h^2-s_h^2/n_h=D_h$ gives
\[
 Q=\tau^2+2\tau E_0+D_0,\qquad
 D_0=\sum_h w_h^2D_h
       +2\sum_{h<k}w_hw_k\bar\varepsilon_h\bar\varepsilon_k.
\]
Each summand is a product of distinct independent centered
observations. Its second moment factors into finite response
variances; for example,
$\E(\varepsilon_{hi}^2\varepsilon_{hi'}^2)=\sigma_h^4$ for
$i\ne i'$. Hence $D_0$, $Q$, and $L_f$ are square integrable even
when $Z^2$ and $\widehat v$ separately lack finite second moments.

Every summand in $D_0$ has expectation zero. In a product of a
summand from $E_0$ and a summand from $D_0$, at least one distinct
centered observation appears only once. Conditioning on all other
observations gives expectation zero. The products are integrable
by Cauchy--Schwarz, so $\Cov(E_0,D_0)=0$. The same conditioning
argument makes two different unordered index-pair products
uncorrelated. In the variance of $D_h$, only identical or reversed
ordered pairs remain, giving
\[
 \Var(D_h)=\frac{2\sigma_h^4}{n_h(n_h-1)}.
\]
Across cells, independence gives
\[
 \Var(\bar\varepsilon_h\bar\varepsilon_k)
 =\frac{\sigma_h^2\sigma_k^2}{n_hn_k},\qquad h\ne k.
\]
The within-cell terms and different cross-cell terms in $D_0$ are
mutually uncorrelated. Therefore
$\Var(E_0)=v(\mathbf n)$ and $\Var(D_0)=d(\mathbf n)$.
Since $Z=\tau+E_0$, the decomposition of $Q$ yields
\eqref{eq:cell_moment_covariance}. Finally,
\[
 L_f=(\tau-f)^2+2(\tau-f)E_0+D_0,
\]
which proves \eqref{eq:cell_risk_variance}.
\end{proof}

For the Honduras native-composition contrast, we use the two arm
cells in control--treatment order with weights $(-1,1)$.
By applying the proposition, we obtain the village features $Z_j,Q_j$ and
their covariance for the observed unequal arm counts.
For the IST benefit contrast, we order the four cells as control
age at most 80, control age above 80, treatment age at most 80,
and treatment age above 80. The fixed weights
$(.75,.25,-.75,-.25)$ give the standardized risk reduction.
In both applications, we assume response-sampling independence
conditional on the site and cell counts. Treatment randomization alone in a fixed cohort does not imply
these response-sampling conditions.

\subsection{Unbiased covariance estimation from independent cell products}
\label{sec:cellcovarianceinference}
Using the independent-cell identities in Proposition~\ref{prop:independent_cell_covariance}, we construct a covariance estimator from products of distinct response units. With at least four observations in each cell, this covariance estimator permits inference with fixed lower-level counts as the number of independent sites grows, under the covariance and central limit conditions stated below.

For cell $h$ in site $j$, write $\mu_{hj}=\E(Y_{hij})$ and $m_{2,hj}=\E(Y_{hij}^2)$. For nonnegative integers $r,s$ with $r+s\le4$, define
\begin{equation}
 U_{hj;r,s}=\frac{1}{(n_{hj})_{r+s}}
 \sum_{i_1,\ldots,i_{r+s}\ {\rm distinct}}
 \prod_{a=1}^{r}Y_{hi_a j}
 \prod_{a=r+1}^{r+s}Y_{hi_a j}^{2},\qquad U_{hj;0,0}=1,
 \label{eq:cell_raw_product}
\end{equation}
where $(n)_k=n(n-1)\cdots(n-k+1)$. Every polynomial used below has $r+2s\le4$ in each cell. We expand each such polynomial $p$ in the raw moments $(\mu_{hj},m_{2,hj})$, and define the linear operator $\mathcal U_j$ by replacing each monomial $\prod_h\mu_{hj}^{r_h}m_{2,hj}^{s_h}$ with $\prod_hU_{hj;r_h,s_h}$. In particular, substitute $\sigma_{hj}^2=m_{2,hj}-\mu_{hj}^2$ before expanding the expressions for $\tau_j$, $v_j$, and $d_j$ in Proposition~\ref{prop:independent_cell_covariance}.

\begin{proposition}[Unbiased cell covariance estimator]
\label{prop:unbiased_cell_covariance}
Under the independent-cell model of Proposition~\ref{prop:independent_cell_covariance}, suppose $n_{hj}\ge4$ and $\E(Y_{hij}^2)<\infty$ in every cell. Conditional on the site, counts, fixed weights, and independent training, the matrix
\begin{equation}
 \bm{\widehat K}_j=
 \begin{pmatrix}
  \mathcal U_j(v_j)&\mathcal U_j(2\tau_jv_j)\\
  \mathcal U_j(2\tau_jv_j)&\mathcal U_j(4\tau_j^2v_j+d_j)
 \end{pmatrix}
 \label{eq:unbiased_cell_covariance}
\end{equation}
has integrable entries and satisfies $\E(\bm{\widehat K}_j)=\Cov\{(Z_j,Q_j)^\trans\}$. The same operator gives the unbiased estimator $\mathcal U_j\{4(\tau_j-f_j)^2v_j+d_j\}$ of $\Var(L_{f_j,j})$ for a fixed independently trained predictor $f_j$.
\end{proposition}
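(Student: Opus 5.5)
The plan rests on the U-statistic principle: a product of functions of distinct i.i.d. observations is unbiased for the corresponding product of moments. We first record the elementary fact for a single cell. For $r+s\le4$ and distinct indices $i_1,\ldots,i_{r+s}$, independence gives
\[
 \E\Bigl(\prod_{a\le r}Y_{hi_a j}\prod_{a>r}Y_{hi_a j}^2\Bigr)=\mu_{hj}^r m_{2,hj}^s .
\]
The average $U_{hj;r,s}$ therefore has mean $\mu_{hj}^r m_{2,hj}^s$. The requirement $n_{hj}\ge4$ guarantees that every index set with $r+s\le4$ used below exists. Next, independence across cells makes $\prod_h U_{hj;r_h,s_h}$ unbiased for $\prod_h\mu_{hj}^{r_h}m_{2,hj}^{s_h}$, provided the product is integrable. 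Because $\mathcal U_j$ is linear and each entry of $\bm{\widehat K}_j$ is a finite linear combination of such monomials, unbiasedness for $\mathcal U_j(p)$ follows for any polynomial $p$ of the stated degree. Applying it to the entries of \eqref{eq:cell_moment_covariance}, written as polynomials in $(\mu_{hj},m_{2,hj})$, gives $\E\bm{\widehat K}_j=\Cov\{(Z_j,Q_j)^\trans\}$ by Proposition~\ref{prop:independent_cell_covariance}. The $L_f$ statement follows the same way from \eqref{eq:cell_risk_variance}, since $f_j$ is a constant under the conditioning.

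We then check that every monomial in the expansions has cell degree $r_h+2s_h\le4$. For $v_j$, each term $w_h^2(m_{2,h}-\mu_h^2)/n_h$ has degree $2$. For $\tau_jv_j$, the products $\mu_k(m_{2,h}-\mu_h^2)$ reach degree $3$ within a single cell. For $\tau_j^2v_j$, the degree reaches $4$, for instance $\mu_h^2\mu_h^2$ or $\mu_h^2m_{2,h}$. For $d_j$, the terms $\sigma_h^4=(m_{2,h}-\mu_h^2)^2$ expand into $m_{2,h}^2$, $m_{2,h}\mu_h^2$, and $\mu_h^4$, each of degree $4$, and the cross-cell terms have degree $2$ per cell. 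This step only requires carefully expanding \eqref{eq:cell_moment_covariance} with $\sigma^2=m_2-\mu^2$ substituted first, as the operator definition prescribes.

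The main obstacle is integrability under only finite second moments. Products of different observations are harmless, since $\E|\prod_a g_a(Y_{i_a})|=\prod_a\E|g_a(Y_{i_a})|$. Each factor is $|Y|$ or $Y^2$, and both are integrable when $\E Y^2<\infty$. So each summand of $U_{hj;r,s}$, and hence $U_{hj;r,s}$ itself, is integrable with no fourth-moment condition. This works precisely because squares appear only on distinct units. Products across cells are integrable by cross-cell independence and the same factorization. We do not need the variance of $\bm{\widehat K}_j$, only its mean, so the argument never requires second moments of the $U$'s.

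Finally, we conclude by linearity of expectation over the finitely many terms. The estimator is well defined because the weights and counts are fixed by the conditioning information.
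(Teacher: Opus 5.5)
Your proposal is correct and follows essentially the same route as the paper: distinct-index products within a cell are unbiased for $\mu_{hj}^{r}m_{2,hj}^{s}$, products factor across independent cells, linearity of $\mathcal U_j$ gives unbiasedness, the expanded polynomials need at most four distinct indices per cell, and the covariance and risk-variance identities of Proposition~\ref{prop:independent_cell_covariance} then finish the argument. Your explicit factorization of absolute moments spells out the paper's brief remark that each summand has finite absolute expectation under finite second moments.
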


\begin{proof}
Distinct indices within a cell are independent, and different cells are independent. Thus each summand in \eqref{eq:cell_raw_product} has expectation $\mu_{hj}^{r}m_{2,hj}^{s}$ and finite absolute expectation. Products across cells have the corresponding product expectation. Linearity gives $\E\{\mathcal U_j(p)\}=p$ for every polynomial in the assertion. Expanding $\tau_j^2v_j$ and $d_j$ requires at most four distinct indices in any cell; for example, $(\sigma_{hj}^2)^2=m_{2,hj}^2-2\mu_{hj}^2m_{2,hj}+\mu_{hj}^4$. The covariance and risk-variance identities in Proposition~\ref{prop:independent_cell_covariance} then prove the claims.
\end{proof}

Unbiasedness concerns the entries of the matrix; individual realized matrices can be indefinite. For example, one cell with weight one, count four, and observed responses $(0,0,1,1)$ gives the entries $(\widehat K_{11},\widehat K_{12},\widehat K_{22})=(1/12,1/12,1/36)$ and a negative determinant. The finite-second-moment condition guarantees integrability, whereas covariance consistency requires an additional law of large numbers.

\paragraph{Conditional finite-frame inference.}
Condition on $J$ sites and an outcome-independent matrix $\bm{H}_J\in\mathbb R^{J\times r}$ with orthonormal columns, fixed $r$, and first column $\one/\sqrt J$. Write
\[
 \bm{\beta}_J=(\bm{\gamma}_J^\trans,q_J)^\trans,
 \quad \bm{\gamma}_J=\bm{H}_J^\trans\bm{\tau}/\sqrt J,
 \quad q_J=\bm{\tau}^\trans\bm{\tau}/J,
 \quad \bm{\widehat\beta}_J=(\bm{Z}^\trans \bm{H}_J/\sqrt J,\bar Q)^\trans.
\]
Let $\bm{h}_j$ be the transpose of row $j$ of $\bm{H}_J$, and define
\[
 \bm{D}_j=\begin{pmatrix}\bm{h}_j&\bm{0}_{r\times1}\\0&J^{-1/2}\end{pmatrix},\quad
 \bm{\xi}_j=\bm{D}_j\begin{pmatrix}Z_j-\tau_j\\Q_j-\tau_j^2\end{pmatrix},\quad
 \bm{\Omega}_J=\sum_j\bm{D}_j\bm{K}_j\bm{D}_j^\trans,\quad
 \bm{\widehat\Omega}_J=\sum_j\bm{D}_j\bm{\widehat K}_j\bm{D}_j^\trans.
\]
Here $\bm{K}_j=\Cov\{(Z_j,Q_j)^\trans\}$ and $\sqrt J(\bm{\widehat\beta}_J-\bm{\beta}_J)=\sum_j\bm{\xi}_j$. By Proposition~\ref{prop:unbiased_cell_covariance}, we obtain $\bm{\widehat\Omega}_J$ as an unbiased estimator of this covariance.

\begin{corollary}[Conditional projection coverage]
\label{cor:finiteframe_projectioncoverage}
Suppose the site records are independent and $\bm{\Omega}_J\succ0$. For every $\epsilon>0$, assume the whitened Lindeberg condition
\[
 \sum_j\E\!\left[\|\bm{\Omega}_J^{-1/2}\bm{\xi}_j\|^2
 \one\{\|\bm{\Omega}_J^{-1/2}\bm{\xi}_j\|>\epsilon\}\right]\longrightarrow0
\]
and the relative covariance consistency condition
\[
 \|\bm{\Omega}_J^{-1/2}(\bm{\widehat\Omega}_J-\bm{\Omega}_J)
       \bm{\Omega}_J^{-1/2}\|_{\rm op}\longrightarrow_p0.
\]
When $\bm{\widehat\Omega}_J\succ0$, form the $(1-\alpha)$ confidence ellipsoid for $\bm{\beta}$ using
\[
 J(\bm{\widehat\beta}_J-\bm{\beta})^\trans\bm{\widehat\Omega}_J^{-1}
        (\bm{\widehat\beta}_J-\bm{\beta})\le\chi^2_{r+1,1-\alpha},
\]
intersect it with $q\ge\|\bm{\gamma}\|^2$, and project onto
$B_F=\sum_{k=2}^r\gamma_k^2$, $R_F=q-\|\bm{\gamma}\|^2$, and $V_F=q-\gamma_1^2$. If the estimated covariance is not positive definite or the intersection is empty, report the full feasible component ranges. The projected joint region has limiting coverage at least $1-\alpha$, including when a component is zero.
\end{corollary}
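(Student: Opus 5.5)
The plan is to reduce the coverage claim to joint asymptotic normality of the self-normalized moment vector and then use monotonicity of projection. The argument has three steps: a central limit statement, covariance substitution, and a containment argument for the projected region.

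First, write $\sqrt J(\bm{\widehat\beta}_J-\bm{\beta}_J)=\sum_j\bm{\xi}_j$, as displayed before the corollary. Conditional on the frame and $\bm{H}_J$, the summands are independent. Each has mean zero because $\E Z_j=\tau_j$ and $\E Q_j=\tau_j^2$ by Proposition~\ref{prop:independent_cell_covariance}. By that proposition, $\Cov(\bm{\xi}_j)=\bm{D}_j\bm{K}_j\bm{D}_j^\trans$, so the sum has covariance $\bm{\Omega}_J$. Let $\bm{T}_J=\bm{\Omega}_J^{-1/2}\sum_j\bm{\xi}_j$. This is a triangular array of independent centered vectors with identity covariance. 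The whitened Lindeberg condition, combined with the multivariate Lindeberg--Feller theorem (applied through Cramér--Wold to each fixed direction $\bm{a}$, since $\bm{a}^\trans\bm{\Omega}_J^{-1/2}\bm{\xi}_j$ inherits the Lindeberg bound), then gives $\bm{T}_J\rightsquigarrow N(\bm{0},\bm{I}_{r+1})$. Because the dimension $r+1$ is fixed, no uniformity issues arise.

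Second, replace $\bm{\Omega}_J$ by $\bm{\widehat\Omega}_J$. Set $\bm{\Delta}_J=\bm{\Omega}_J^{-1/2}\bm{\widehat\Omega}_J\bm{\Omega}_J^{-1/2}$. By the relative consistency assumption, $\bm{\Delta}_J\to_p\bm{I}$ in operator norm. This has two consequences: $\Pr(\bm{\widehat\Omega}_J\succ0)\to1$, and the statistic can be written as
\[
 J(\bm{\widehat\beta}_J-\bm{\beta}_J)^\trans\bm{\widehat\Omega}_J^{-1}(\bm{\widehat\beta}_J-\bm{\beta}_J)
 =\bm{T}_J^\trans\bm{\Delta}_J^{-1}\bm{T}_J .
\]
I still need to check that the scaling convention in the corollary's ellipsoid matches the definition of $\bm{\Omega}_J$; the $J$ factor must be consistent with $\sqrt J(\bm{\widehat\beta}_J-\bm{\beta}_J)=\sum_j\bm{\xi}_j$. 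Since $\bm{T}_J=O_p(1)$ and $\|\bm{\Delta}_J^{-1}-\bm{I}\|_{\rm op}=o_p(1)$, the statistic equals $\|\bm{T}_J\|^2+o_p(1)$, which converges to $\chi^2_{r+1}$. The $\chi^2$ distribution function is continuous, so $\Pr\{\bm{\beta}_J\in\mathcal C_J\}\to1-\alpha$, where $\mathcal C_J$ is the ellipsoid. On the event where $\bm{\widehat\Omega}_J$ is singular or indefinite, the procedure reports the full range, and that range trivially contains the truth. That event therefore only helps coverage.

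Third, project. The true parameter satisfies $q_J=\|\bm{\tau}\|^2/J\ge\|\bm{H}_J^\trans\bm{\tau}\|^2/J=\|\bm{\gamma}_J\|^2$, because $\bm{H}_J\bm{H}_J^\trans$ is an orthogonal projection (Bessel's inequality). So whenever $\bm{\beta}_J\in\mathcal C_J$, it also lies in the intersected set $\mathcal C_J\cap\{q\ge\|\bm{\gamma}\|^2\}$, and that set is nonempty. We also need $B_F=\sum_{k\ge2}\gamma_k^2$, $R_F=q-\|\bm{\gamma}\|^2$ and $V_F=q-\gamma_1^2$ to hold as identities at the true values. These follow from $\bm{P}-\bm{P}_0=\sum_{k\ge2}\bm{h}_{(k)}\bm{h}_{(k)}^\trans$, where $\bm{h}_{(k)}$ is the $k$th column of $\bm{H}_J$, together with the first column being $\one/\sqrt J$. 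Hence the image of the intersected set under the map $(\bm{\gamma},q)\mapsto(B_F,R_F,V_F)$ contains the true component triple. The joint coverage is therefore at least $\Pr\{\bm{\beta}_J\in\mathcal C_J\}+o(1)$, giving limiting coverage at least $1-\alpha$.

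No delta method is used at any point, so derivatives vanishing at $\bm{\gamma}=\bm{0}$ or boundary values $R_F=0$ are irrelevant. This is why coverage holds even when a component is zero.

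The main obstacle is the second step. It requires verifying that the quadratic form converges under only the relative operator-norm condition, without assuming that $\bm{\Omega}_J$ converges. The self-normalized form handles this, since $\bm{\Omega}_J$ only ever appears through $\bm{\Delta}_J$.
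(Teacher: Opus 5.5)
Your proposal is correct and follows the paper's proof step for step. The steps are: the whitened multivariate Lindeberg CLT; relative covariance consistency giving $\Pr(\bm{\widehat\Omega}_J\succ0)\to1$ and a $\chi^2_{r+1}$ limit for the studentized form; orthonormality of $\bm{H}_J$ placing the true primitive vector in $\{q\ge\|\bm{\gamma}\|^2\}$; and containment of the projection, with the full-range fallback preserving coverage. The scaling you flagged does check out: $\bm{\Omega}_J=\sum_j\bm{D}_j\bm{K}_j\bm{D}_j^\trans=\Cov(\sum_j\bm{\xi}_j)=\Cov\{\sqrt J(\bm{\widehat\beta}_J-\bm{\beta}_J)\}$, so the factor $J$ in the ellipsoid is exactly the one that makes the form equal $\bm{T}_J^\trans\bm{\Delta}_J^{-1}\bm{T}_J$.
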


\begin{proof}
The multivariate Lindeberg theorem gives $\bm{\Omega}_J^{-1/2}\sum_j\bm{\xi}_j\rightsquigarrow N(0,\bm{I}_{r+1})$. Relative covariance consistency implies that the estimated matrix is positive definite with probability tending to one and that the studentized quadratic form has a $\chi^2_{r+1}$ limit. The true primitive vector belongs to $q\ge\|\bm{\gamma}\|^2$ because $\bm{H}_J$ has orthonormal columns. Whenever the primitive ellipsoid contains that vector, its feasible intersection and its component projection contain the true targets. Returning the full feasible ranges on the specified exceptional events preserves this inclusion.
\end{proof}

One sufficient setting allows fixed cell counts: the number of cells and their absolute weights are uniformly bounded, $n_{hj}\ge4$, the response fourth moments are uniformly bounded, $\max_j\|\bm{h}_j\|\to0$, and the eigenvalues of $\bm{\Omega}_J$ are uniformly bounded above and away from zero. The distinct-index representations give uniformly bounded fourth moments of $Z_j,Q_j$ and second moments of the entries of $\bm{\widehat K}_j$. Thus $\sum_j\E\|\bm{\xi}_j\|^4$ is bounded by a constant times $\sum_j\|\bm{h}_j\|^4+J^{-1}\to0$, because $\sum_j\|\bm{h}_j\|^4\le r\max_j\|\bm{h}_j\|^2$. This proves the Lindeberg condition. Independence across sites bounds the variances of entries in the three blocks of $\bm{\widehat\Omega}_J-\bm{\Omega}_J$ by constant multiples of $\sum_j\|\bm{h}_j\|^4$, $J^{-1}\sum_j\|\bm{h}_j\|^2$, and $J^{-1}$, respectively, establishing relative covariance consistency. Under these conditions, we obtain asymptotic projection coverage for the specified finite-frame targets. For the empirical Honduras intervals, we use the sample-variance plug-in calculation described in Section~\ref{sec:hondurasprotocol} and evaluate its approximation separately.

\begin{proposition}[Mean and covariance errors in products]
Let evaluation scores obey $\E(S_{kb}\mid\mathcal G)=\theta_k+\delta_{kb}$ and let $\Gamma_{kb,\ell b'}=\Cov(S_{kb},S_{\ell b'}\mid\mathcal G)$. Then the bias of a cross-unit product is
\[
 \E(S_{kb}S_{\ell b'}\mid\mathcal G)-\theta_k\theta_\ell
 =\theta_k\delta_{\ell b'}+\theta_\ell\delta_{kb}
  +\delta_{kb}\delta_{\ell b'}+\Gamma_{kb,\ell b'}.
\]
For weighted averages of symmetrized pairs, average these four terms with the same weights as the estimator.
\end{proposition}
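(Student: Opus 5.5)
Add and subtract the true means. Write $e_{kb}=S_{kb}-\E(S_{kb}\mid\mathcal G)$, so that
\[
S_{kb}=\theta_k+\delta_{kb}+e_{kb},\qquad \E(e_{kb}\mid\mathcal G)=0,\qquad \E(e_{kb}e_{\ell b'}\mid\mathcal G)=\Gamma_{kb,\ell b'}.
\]
The argument then has three steps.

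\textbf{Step 1: expand the product.} Multiplying out gives
\[
S_{kb}S_{\ell b'}=(\theta_k+\delta_{kb})(\theta_\ell+\delta_{\ell b'})+(\theta_k+\delta_{kb})e_{\ell b'}+(\theta_\ell+\delta_{\ell b'})e_{kb}+e_{kb}e_{\ell b'}.
\]

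\textbf{Step 2: take conditional expectations.} The $\theta$'s and $\delta$'s are $\mathcal G$-measurable. I will assume this is implicit in the statement, since $\theta_k$ is the fixed target and $\delta$ is defined as a conditional-mean error given $\mathcal G$. Given that, the two linear terms in $e$ have conditional mean zero. The last term has conditional mean $\Gamma_{kb,\ell b'}$ by the definition of conditional covariance. Expanding the first product and subtracting $\theta_k\theta_\ell$ leaves $\theta_k\delta_{\ell b'}+\theta_\ell\delta_{kb}+\delta_{kb}\delta_{\ell b'}$, which is the displayed bias. Integrability follows from finite conditional second moments of the scores, via Cauchy--Schwarz. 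This is needed for the covariance to exist at all, so I will state it as a standing condition.

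\textbf{Step 3: extend to weighted averages.} The claim for weighted averages of symmetrized pairs follows from linearity of conditional expectation, provided the weights are $\mathcal G$-measurable (for example, fixed reference weights and counts included in $\mathcal G$). Each symmetrized pair $(S_{kb}S_{\ell b'}+S_{\ell b}S_{kb'})/2$ has bias equal to the average of the two four-term expressions obtained by swapping the roles of $k$ and $\ell$. The weighted sum therefore averages the four terms with the same weights.

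\textbf{Main obstacle.} The only delicate point is ensuring that the weights and the $\delta$'s are fixed by $\mathcal G$, so that they can be pulled out of the expectation; I would state this explicitly. As a consistency check, when $\delta=0$ and the units are independent (so $\Gamma=0$), the result recovers Proposition~\ref{prop:pair}.
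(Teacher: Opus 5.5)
Your proof is correct and matches the paper's argument. Writing each score as its conditional mean plus a centered error and expanding is exactly the identity $\E(XY\mid\mathcal G)=\E(X\mid\mathcal G)\E(Y\mid\mathcal G)+\Cov(X,Y\mid\mathcal G)$ that the paper invokes, followed by expanding $(\theta_k+\delta_{kb})(\theta_\ell+\delta_{\ell b'})$; your remarks on the $\mathcal G$-measurability of $\theta$, $\delta$, and the weights, and on square integrability, make explicit conditions the paper leaves implicit.
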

\begin{proof}
Use $\E(XY)=\E X\E Y+\Cov(X,Y)$ and expand the product of the two conditional means.
\end{proof}
We exclude both evaluation units from nuisance estimation to prevent either outcome from entering the estimated function used in its own product. We account separately for density-ratio estimation error, unequal conditional score means, and uncertainty in a shared estimated reference through the product-bias identity.

\paragraph{Standardized mean scores.}
With fixed $P^*$, density ratio $r_j=dP^*/dP_j$, treatment propensity $e_a$, and independent training estimates $(\widehat r_j,\widehat m_{aj})$, we construct the standardized mean score
\[
 S_{aij}=P^*\widehat m_{aj}
  +\frac{\one(A_i=a)\widehat r_j(\bm{X}_i)}{e_a(\bm{X}_i)}
        \{Y_i-\widehat m_{aj}(\bm{X}_i)\}.
\]
Conditional on training and the site,
\[
 \E S_{aij}-\theta_{aj}^*
 =P_j\{(\widehat r_j-r_j)(m_{aj}-\widehat m_{aj})\}.
\]
Cauchy--Schwarz bounds this bias by the product of the two nuisance $L_2(P_j)$ errors. We can include independent missingness weights under the corresponding conditional response assumption and apply the product-bias identity to products of these scores. With fixed small $n_j$, site-specific nuisance estimates need not become accurate as $J$ grows; an exact or negligible-error argument then requires known-weight constructions or justified pooling across sites.

\paragraph{Shared references and covariance.}
For a fixed symmetric residual-maker $\bm{M}$, condition on the latent site means and covariate matrix, and write $\bm{\widehat\theta}_a=\bm{\theta}_a+\bm{\varepsilon}_a$ with mean-zero errors. For $\bm{\Omega}_{ab}=\E(\bm{\varepsilon}_a\bm{\varepsilon}_b^\trans)$,
\[
 \E(\bm{\widehat\theta}_a^\trans \bm{M}\bm{\widehat\theta}_b)
 =\bm{\theta}_a^\trans \bm{M}\bm{\theta}_b+\tr(\bm{M}\bm{\Omega}_{ba}).
\]
We obtain this expectation by expanding the bilinear form and applying the trace identity, including cross-site and cross-arm covariance from a common fit. If $\bm{M}$ is learned from outcomes, conditioning must also justify its independence or account for the learning error.

If an external reference sample of size $n_*$ supplies an additional expansion $n_*^{-1}\sum_r\bm{\zeta}(\bm{X}_r^*)$, that contribution is added to the site expansion. Independence gives covariance $\bm{V}_{\mathrm{site}}/J+\bm{V}_*/n_*$, where $\bm{V}_{\mathrm{site}}$ and $\bm{V}_*$ are the covariance matrices of the site and reference influence contributions. An internal reference generally shares observations with the site scores and requires their covariance as well. The IST analysis uses the fixed weights $(.75,.25)$ and therefore has no estimated common-reference term.

\paragraph{Transfer conditions used in the main theorem.}
Let $\widehat Z_k=Z_k+e_k$ denote actual and oracle signal records. Sufficient first-order conditions are
\[
 \sqrt J\,\mathbb P_Je_1=o_p(1),\quad
 \sqrt J\,\mathbb P_J(\widehat f e_1)=o_p(1),\quad
 \sqrt{Jr_J}\,\mathbb P_Je_2=o_p(1).
\]
For example a quadratic proxy remainder $r_{2j}$ contributes
$\sqrt{r_J/J}\sum_j C_jr_{2j}/\rho_J(W_j)$ to the last condition. Nuisance errors of augmentation predictors under known probabilities are centered conditional on independent training; their $L_2$ consistency controls their rate-scaled variance. For feasible covariance estimation also require
$J^{-1}\sum_j\|\bm{\widehat\xi}_j-\bm{\xi}_j\|^2=o_p(1)$, where $\bm{\xi}_j$ is the vector of rate-scaled influence contributions. An aggregate mean condition alone cannot imply this empirical squared-error condition.

\section{Proofs of joint inference}
\label{sec:proofrates}
\begin{proof}[Proof of Theorem~3]
First consider the oracle signals. On an evaluation fold condition on its training sample and put $e=\widehat f-m$. The change in the explained score is
\[
 (2\widehat fZ_1-\widehat f^2)-(2mZ_1-m^2)
 =2e(Z_1-m)-e^2.
\]
Its conditional mean is $-\|e\|_2^2$. The conditional variance of the first term is bounded by a constant times $\|e\|_2^2$, and the second term has second moment at most $\|e\|_4^4$. Hence the centered evaluation average is $o_p(J^{-1/2})$. The conditional mean is negligible under the stated squared-error condition. The residual score changes by the negative of this quantity, for which the weaker $\sqrt{Jr_J}$ rate requires only the weaker condition stated for $R$ alone. There are finitely many folds, so adding these bounds proves the expansion for all sites without asserting independence between training folds.

Since $\widehat\mu=\mathbb P_JZ_1$,
\[
 \widehat\mu^2-\mu^2=2\mu\mathbb P_J(Z_1-\mu)+O_p(J^{-1}).
\]
Combining the preceding expansions gives the displayed influence functions. Their expectations are zero. The multivariate triangular-array Lindeberg theorem gives the limit with covariance $\bm{\Sigma}$.

For the fixed-span case let $e=\bm{b}^\trans(\bm{\widehat\beta}-\bm{\beta})$. Its extra conditional mean term $2\E\{e(m-f_b)\}$ is zero by the population normal equations. Root-$J$ coefficient consistency gives $\E(e^2)=O_p(J^{-1})$. The moment conditions on $\bm{b}$ and $Z_1-f_b$ bound the corresponding centered evaluation terms by $o_p(J^{-1/2})$. This proves the expansion with $f_b$ without assuming $m=f_b$.

The stated aggregate conditions absorb differences between estimated and oracle signals. To prove covariance consistency, let $\bm{\Delta}_j=\bm{\widehat\xi}_j-\bm{\xi}_j$. Each cross term in the difference of empirical covariance matrices is bounded by
\[
 \left(J^{-1}\sum_j\|\bm{\Delta}_j\|^2\right)^{1/2}
 \left(J^{-1}\sum_j\|\bm{\xi}_j\|^2\right)^{1/2}=o_p(1).
\]
The quadratic error is also negligible, and the oracle second-moment law of large numbers supplies the limit. Centering the influence vectors changes neither limit. Slutsky's theorem gives studentized inference for nondegenerate contrasts.
\end{proof}

For regular i.i.d. records with fixed observation probabilities, the bounded conditional variance and $L_4$ assumptions are convenient sufficient conditions. In a rare-pair array it is the rate-scaled influences that need a triangular-array moment condition. For bounded paired features, $\,Jr_J\to\infty$ and probabilities proportional to $r_J$ imply a Lyapunov condition of any fixed order above two. The lower-level feature bounds must themselves remain controlled; increasing inverse standardization weights can defeat that conclusion.

\paragraph{Information bound for residual variation.}
\label{sec:lowerboundproof}
Take constant $W$ and an oracle Gaussian site vector with known marginal means, positive standard deviations $\sigma_0,\sigma_1$, and unknown correlation $\gamma\in(-1,1)$. Single-arm records have laws that do not depend on $\gamma$. Each paired record has Fisher information
\[
 I_1(\gamma)=\frac{1+\gamma^2}{(1-\gamma^2)^2}.
\]
For independent Bernoulli selection of paired sites with probability $r_J$, total information is $Jr_JI_1(\gamma)$. On a compact interior neighborhood, the Gaussian log likelihood has bounded score moments and differentiable quadratic-mean structure. If $Jr_J\to\infty$, its local expansion at $\gamma_0+h/\sqrt{Jr_J}$ is
\[
 h\Delta_J-\tfrac12h^2I_1(\gamma_0)+o_p(1),\qquad
 \Delta_J\rightsquigarrow N\{0,I_1(\gamma_0)\}.
\]
To verify the expansion, only paired records enter. Their score sum has variance $Jr_JI_1$; Lyapunov's bound is of order $(Jr_J)^{-\delta/2}$. The normalized Hessian converges by the same argument, and the third-order remainder is $O_p\{(Jr_J)^{-1/2}\}$ for bounded local $h$.

The local asymptotic minimax risk for $\gamma$ on the $Jr_J$ scale is at least $I_1(\gamma_0)^{-1}$. Since
\[
 R=V=\sigma_0^2+\sigma_1^2-2\gamma\sigma_0\sigma_1,
\]
the corresponding variance-target bound is $4\sigma_0^2\sigma_1^2/I_1(\gamma_0)$. If $Jr_J$ remains bounded, two nearby correlations have bounded Kullback--Leibler distance. Choosing their separation small enough gives a testing error bounded away from zero; an estimator uniformly consistent on a fixed nondegenerate neighborhood would contradict this two-point bound. This proves the rate restriction for a model containing the submodel. This information bound applies at interior correlations; we handle a zero-variance boundary separately by projection.

\begin{proof}[Proof of Proposition~2]
Within stratum $h$, $\E\bar Z_{1h}^2=m_h^2+\Var(Z_1\mid h)/J_h$ and $s_{1h}^2$ is unbiased for the conditional variance. Thus $\E M_{2h}=m_h^2$. Independent stratum samples give
$\Var(\widehat\mu)=\sum_hq_h^2\Var(Z_1\mid h)/J_h$, so $\E M_{2\mu}=\mu^2$. The second-moment signal has expectation $q_{2h}$. Substitution proves all three unbiasedness statements. Their recordwise equality follows by cancellation.
\end{proof}

We can also write $M_{2h}$ as the average $Z_{1j}Z_{1k}$ over distinct sites in stratum $h$. Independence between these sites gives the stated unbiasedness. When a shared outcome-trained function induces dependence between sites, this argument requires conditioning on an independent pilot. For i.i.d. sites and fixed $H$, the order-$J^{-1}$ corrections do not alter a nondegenerate first-order expansion.

\section{Confidence sets at the boundary}
\label{sec:projectioninference}
Let $[l_h,u_h]$ and $[L_h,U_h]$ be simultaneous confidence intervals for $m_h$ and $q_{2h}$. For the application calculation, we use the critical value $z_{1-\alpha/(4H)}$ for the $2H$ two-sided intervals. Intersect them with $m_h\in[-1,1]$ and $m_h^2\le q_{2h}\le1$. If $U_h<0$, $L_h>U_h$, or the clipped mean interval is empty in any stratum, record an empty region and return the full feasible range $[0,1]$ for the three variance components.

Otherwise, we replace $L_h$ by $\max(0,L_h)$, $U_h$ by $\min(1,U_h)$, and intersect $[l_h,u_h]$ with $[-\sqrt{U_h},\sqrt{U_h}]$. Let
\[
 a_h=\min_{x\in[l_h,u_h]}x^2,\qquad d_h=\max(l_h^2,u_h^2).
\]
Then valid projections for the conditional residual variance are
\[
 v_h\in[\max(0,L_h-d_h),\ U_h-a_h].
\]
Because the rectangle separates strata, summing these endpoints with $q_h$ gives the extrema for $R_b$. To obtain the minimum explained variance, which depends only on the mean vector, we solve the convex program
\[
 \min_{x_h\in[l_h,u_h]}\left\{\sum_hq_hx_h^2-(\sum_hq_hx_h)^2\right\},
\]
and the maximum explained variance is attained at a vertex of the mean rectangle. With three strata, we enumerate all eight vertices. We report a valid interval for $V$ as $[0,V_U]$, with
\[
 V_U=\sum_hq_hU_h-\min_{z\in[\sum_hq_hl_h,\sum_hq_hu_h]}z^2.
\]
The upper endpoint is attained within this primitive rectangle by taking all $q_{2h}=U_h$ and a mean vector achieving the displayed weighted mean. The lower endpoint zero is conservative when the full region excludes zero total variance. We report zero as a conservative lower bound rather than the exact lower projection.

\begin{proposition}[Coverage transfer]
If the primitive rectangle covers all $(m_h,q_{2h})$ with limiting probability at least $1-\alpha$, the projected intervals cover $(B_b,R_b,V)$ and every reported $v_h$ jointly with limiting probability at least $1-\alpha$.
\end{proposition}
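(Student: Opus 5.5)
The argument is a deterministic inclusion: on the event that the primitive rectangle covers the truth, every reported interval contains its target. The limiting-probability statement then follows immediately, because joint coverage of all reported intervals is implied by that event. Fix a realization in which $m_h\in[l_h,u_h]$ and $q_{2h}\in[L_h,U_h]$ for every $h$.

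\textbf{Step 1: the exceptional event cannot occur on the coverage event.} Since $m_h^2\le q_{2h}\le 1$ and $-1\le m_h\le1$ hold for the true binary-outcome moments, the true point lies in the clipped, intersected region. That region is therefore nonempty, so the full-range fallback $[0,1]$ is not triggered. Even if it were triggered, $[0,1]$ trivially contains each component, because $0\le v_h\le q_{2h}\le1$ and the same bounds hold for the mixtures $B_b,R_b,V$. Clipping $L_h$ to $\max(0,L_h)$ and $U_h$ to $\min(1,U_h)$, and intersecting the mean interval with $[-\sqrt{U_h},\sqrt{U_h}]$, preserves the truth because $q_{2h}\in[0,1]$ and $m_h^2\le q_{2h}\le U_h$.

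\textbf{Step 2: inclusion for each component.} For each stratum, $v_h=q_{2h}-m_h^2$ with $q_{2h}\le U_h$ and $m_h^2\ge a_h$, which gives the upper bound. Also $q_{2h}\ge L_h$ and $m_h^2\le d_h$, and together with $v_h\ge0$ this gives the lower bound. Summing with the fixed weights $q_h$ gives the $R_b=\sum_hq_hv_h$ bounds. $B_b=\sum_hq_hm_h^2-(\sum_hq_hm_h)^2$ is a function of the mean vector alone, which lies in the rectangle. Its minimum over the rectangle is attained by the convex program, since this is a weighted variance and hence convex. Its maximum over the rectangle is attained at a vertex by convexity, so the truth lies between them. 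For $V=\sum_hq_hq_{2h}-\mu^2$, we have $\sum q_hq_{2h}\le\sum q_hU_h$ and $\mu\in[\sum q_hl_h,\sum q_hu_h]$, so $V\le V_U$, and $V\ge0$.

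\textbf{Step 3: conclude.} All inclusions hold simultaneously on one event whose limiting probability is at least $1-\alpha$, so $\liminf\Pr(\text{joint coverage})\ge1-\alpha$. The only delicate point is confirming that the fallback and clipping steps never exclude the truth. That check uses the support constraints of the binary outcome, which Step 1 handles.
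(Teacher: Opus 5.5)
Your proof is correct and follows the paper's argument. On the event that the primitive rectangle covers the true moments, the feasible true vector keeps the intersection nonempty, so the fallback cannot remove coverage, every reported interval contains its target, and the probability bound passes to the limit; your component-by-component checks of the $v_h$, $R_b$, $B_b$ and $V$ endpoints spell out the paper's one-line claim that each functional lies between its extrema or valid bounds over the intersection.
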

\begin{proof}
On the event that the primitive rectangle covers the true vector, it has a nonempty intersection with the feasible parameter space, since the true vector is feasible. Every functional of that vector lies between the extrema or valid bounds over the intersection. Returning the full feasible range when the intersection is empty cannot remove a coverage event. Taking probabilities proves the assertion.
\end{proof}
Normal intervals for the primitive moments require sufficient sites and a nondegenerate variance estimate for the relevant moments. We assess their accuracy separately from point-estimator unbiasedness by reporting coverage of the projected intervals, their widths, the frequency of empty intersections, and ordinary Wald coverage.

\section{Identification under general observation patterns}
For a quadratic form, only the symmetric part of the target matrix contributes. A cross-term $c\theta_a\theta_b$, with $a\ne b$, is represented by symmetric entries $A_{ab}=A_{ba}=c/2$.
\label{sec:generalpatterns}
Fix $W=w$. A pattern $d$ reveals $\bm{M}_d\bm{\Theta}$ and occurs with positive probability. Let $\bm{\mu}_\Theta=\E(\bm{\Theta}\mid w)$, $\bm{S}_\Theta=\E(\bm{\Theta}\bm{\Theta}^\trans\mid w)$. We characterize which linear and quadratic moment targets are identified by these observation patterns.

\begin{theorem}[Linear and quadratic pattern characterization]
\label{thm:generalid}
If the distribution of each observed $\bm{M}_d\bm{\Theta}$ is known, then $\bm{a}^\trans\bm{\mu}_\Theta$ is identified over the unrestricted finite-first-moment model if and only if $\bm{a}$ is in the span of the rows of the $\bm{M}_d$. For symmetric $\bm{A}$, $\tr(\bm{A}\bm{S}_\Theta)$ is identified over the unrestricted finite-second-moment model if and only if
\[
 \bm{A}\in\operatorname{span}\{\bm{M}_d^\trans \bm{B}_d\bm{M}_d:\bm{B}_d=\bm{B}_d^\trans,\ d\text{ available}\}.
\]
For coordinate selectors and a bounded rectangular support with nonempty interior, the latter holds exactly when every nonzero coordinate pair of $\bm{A}$ is jointly observed by some pattern, in a model containing all sufficiently small bounded-density perturbations on that rectangle.
\end{theorem}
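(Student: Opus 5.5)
Three parts: the linear characterization, the quadratic span characterization, and the coordinate-selector refinement. Each has a sufficiency direction (read off observable moments) and a necessity direction (build two laws that agree on every projected law but disagree on the target). For the linear part, sufficiency is immediate. If $\bm{a}^\trans=\sum_d\bm{c}_d^\trans\bm{M}_d$, then $\bm{a}^\trans\bm{\mu}_\Theta=\sum_d\bm{c}_d^\trans\E(\bm{M}_d\bm{\Theta}\mid w)$, and each term is a mean of a known law. For necessity, suppose $\bm{a}$ lies outside the row span. Then there is $\bm{h}$ with $\bm{M}_d\bm{h}=0$ for all available $d$ and $\bm{a}^\trans\bm{h}\neq0$. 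Compare any finite-mean law for $\bm{\Theta}$ with the same law shifted by $\bm{h}$. All projected laws are unchanged, but $\bm{a}^\trans\bm{\mu}_\Theta$ moves by $\bm{a}^\trans\bm{h}$.

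For the quadratic part I would repeat the argument used for Theorem~\ref{thm:identification}(iv). Sufficiency uses the identity
\[
 \tr(\bm{A}\bm{S}_\Theta)=\sum_d\tr\{\bm{B}_d\E(\bm{M}_d\bm{\Theta}\bm{\Theta}^\trans\bm{M}_d^\trans\mid w)\},
\]
valid whenever $\bm{A}=\sum_d\bm{M}_d^\trans\bm{B}_d\bm{M}_d$. Each expectation on the right is a second moment of an observed projected law, hence known. For necessity, equip the space of symmetric $K\times K$ matrices with the trace inner product. If $\bm{A}$ is outside the subspace $\mathcal S$, take $\bm{H}\in\mathcal S^\perp$ with $\tr(\bm{A}\bm{H})\ne0$. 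Orthogonality to every $\bm{M}_d^\trans\bm{B}_d\bm{M}_d$ with arbitrary symmetric $\bm{B}_d$ forces $\bm{M}_d\bm{H}\bm{M}_d^\trans=0$. Now take centered Gaussian laws with covariances $\bm{I}$ and $\bm{I}+\epsilon\bm{H}$, where $\epsilon$ is small enough that the second is positive definite. Every projection $\bm{M}_d\bm{\Theta}$ is Gaussian with covariance $\bm{M}_d\bm{M}_d^\trans$ under both laws. The target, however, differs by $\epsilon\tr(\bm{A}\bm{H})$.

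For coordinate selectors, each $\bm{M}_d$ picks a subset $D_d\subseteq\{1,\ldots,K\}$. The span $\mathcal S$ is then exactly the set of symmetric matrices supported on $\bigcup_d D_d\times D_d$, so the span condition says that every nonzero entry $(k,\ell)$ of $\bm{A}$ lies in some $D_d\times D_d$. The Gaussian construction cannot be used here, because the model is restricted to bounded densities on a rectangle. I would rescale the rectangle to $[0,1]^K$ and perturb the uniform density; this is the step I expect to need the most care. If an off-diagonal pair $(k,\ell)$ with $A_{k\ell}\neq0$ is never jointly observed, I would use the factor $1+\epsilon h(x_k)h(x_\ell)$ with $h(x)=x-1/2$. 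This stays a valid bounded density for small $\epsilon$. Every $D_d$ omits $k$ or $\ell$, and integrating out the omitted coordinate kills the perturbation, so all projected laws are preserved. Among the second moments, only $E(\Theta_k\Theta_\ell)$ changes, because any other moment contains a factor with mean zero under $h$. That moment changes by $\epsilon(\int xh)^2\ne0$. If instead an unobserved diagonal entry $(k,k)$ has $A_{kk}\neq0$, coordinate $k$ appears in no pattern. I would perturb its marginal by a bounded function orthogonal to $1$ and $x$ but not to $x^2$; a Legendre-type quadratic works. This changes only $\E\Theta_k^2$. Sufficiency in the coordinate case is again direct: each required pair moment is read from a pattern that observes both coordinates.
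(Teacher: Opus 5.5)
Your proposal is correct and follows the paper's proof essentially step for step. The steps that match are the null-space shift for the linear part, the trace identity together with the Gaussian pair $\bm{I}$ versus $\bm{I}+\epsilon\bm{H}$ (where $\bm{M}_d\bm{H}\bm{M}_d^\trans=0$) for the quadratic span condition, and the perturbations $1+\epsilon h(x_k)h(x_\ell)$ and a mean-zero quadratic marginal on the rescaled rectangle for the coordinate case. Your explicit description of the span as the symmetric matrices supported on $\bigcup_d D_d\times D_d$ is a small detail the paper leaves implicit.
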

\begin{proof}
If $\bm{a}=\sum_d\bm{M}_d^\trans \bm{b}_d$, its mean is $\sum_d\bm{b}_d^\trans\E(\bm{M}_d\bm{\Theta})$. If $\bm{a}$ lies outside that span, select $\bm{v}$ in the common null space of $\bm{M}_d$ with $\bm{a}^\trans \bm{v}\ne0$ and shift any integrable distribution by a small multiple of $\bm{v}$. All observations have unchanged laws, while the target mean changes.

The quadratic characterization, including the bounded-coordinate-support case, follows from the proof of Theorem~1(iv) in Supplement Section~\ref{sec:proofs}.
\end{proof}

For target matrices $\bm{A}_r$ and costs $c_d$, we select the observation patterns needed for identification through the integer covering problem
\[
 \min_{y_d\in\{0,1\}}\sum_dc_dy_d,\qquad
 \sum_{d:k,\ell\in S_d}y_d\ge1
 \quad\text{for every pair }(k,\ell)\text{ required by some }\bm{A}_r.
\]
Having selected the identifying observation patterns, we allocate sites to control estimation precision. With finite within-site samples, we use corrected moment features to identify the corresponding moments of the latent site vector.

\paragraph{Oracle information and generated-feature information.}
In the oracle partial-coordinate model, define $\mathcal Kf=\E(f\mid O)$ and $\mathcal A=\mathcal K^*\mathcal K$ on centered square-integrable functions of $(W,\bm{\Theta})$. For a centered target gradient $h_0$, a finite information bound requires $h_0\in\operatorname{Range}(\mathcal A^{1/2})$. Indeed the observation-space gradient solves $\mathcal K^*g=h_0$; polar decomposition gives $\operatorname{Range}(\mathcal K^*)=\operatorname{Range}(\mathcal A^{1/2})$. Its minimum squared norm is the inverse spectral moment $\int_{(0,1]}\lambda^{-1}\,d\langle E_\lambda h_0,h_0\rangle$, where $E_\lambda$ is the spectral resolution. A null-space component prevents a solution. Membership in the closure of the range is insufficient when the inverse moment diverges. If $h_0\in\operatorname{Range}(\mathcal A)$, the minimum-norm expression is $\mathcal K\mathcal A^\dagger h_0$.

We calculate oracle information from the site means. For the scalar experiment with corrected moment features observing $(W,C,CL_f)$, we use the augmented influence function given in the main text. We include the measurement variation in $L_f$ when calculating its variance to obtain the conditional allocation criterion. These calculations do not establish semiparametric efficiency when all participant records are available.

\section{Allocation proofs and implementation}
\label{sec:designproofs}
\begin{proof}[Proof of Theorem~4]
For $a=a_f$, the residual $C(L_f-a)/\rho$ has conditional mean zero and conditional variance $\nu_f^2/\rho$. Its covariance with $a-\mathcal R(f)$ is zero. The law of total variance proves the first formula.

For a pilot predictor $\widetilde a$, condition on the pilot and write $e=\widetilde a-a$. The estimator's conditional mean given $W$ is $a$, and its residual about that mean is $C(L_f-a)/\rho+(1-C/\rho)e$. The covariance of these two terms is zero because $\E(L_f-a\mid W)=0$. The second has variance $(\rho^{-1}-1)e^2$, proving the penalty identity. The penalty identity specifies the augmentation error relevant to allocation for this fixed estimator.

With the oracle predictor, minimize the convex function $\nu_f^2(w)/r+\lambda c(w)r$ at each $w$ over the prescribed interval. The unconstrained stationary point is the square root in the main text; clipping gives the constrained minimizer. Integrating the pointwise optimality inequalities and using equality of expected cost at a binding optimum proves global optimality. Zero-variance locations use their lower bound. Boundary budgets and a nonbinding budget use the corresponding endpoints. For the fixed pilot predictor the only allocation-dependent part is $\E\{(\nu_f^2+e^2)/\rho\}$, so the same proof applies.
\end{proof}

\begin{proposition}[Pilot error and design regret]
For a finite partition with known masses, fixed bounds $0<\ell\le\rho_h\le u$, and a common feasible budget, put $F_s(\rho)=\sum_hq_hs_h/\rho_h$. If $\max_h|\widehat s_h-s_h|\le\epsilon$, an exact minimizer $\widehat\rho$ for the estimated criterion satisfies
\[
 0\le F_s(\widehat\rho)-F_s(\rho^*)\le2\epsilon/\ell.
\]
An optimization error at most $\delta$ adds $\delta$ to this bound.
\end{proposition}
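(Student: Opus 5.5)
The plan is the standard ``estimated-optimizer regret'' argument: compare criteria, exploit optimality of $\widehat\rho$ for the estimated criterion, and pay twice for the criterion error. Write $F_{\widehat s}(\rho)=\sum_hq_h\widehat s_h/\rho_h$ for the estimated criterion. Let $\mathcal F$ be the common feasible set, defined by the bounds $\ell\le\rho_h\le u$ and the budget constraint. Because the budget constraint does not involve $s$, $\mathcal F$ is the same for the true and estimated problems, and both $\widehat\rho$ and $\rho^*$ lie in $\mathcal F$. The lower inequality $F_s(\widehat\rho)-F_s(\rho^*)\ge0$ is then immediate from the definition of $\rho^*$ as a minimizer of $F_s$ over $\mathcal F$.

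For the upper inequality, first establish a uniform criterion bound. For every $\rho\in\mathcal F$,
\[
 |F_{\widehat s}(\rho)-F_s(\rho)|\le\sum_hq_h|\widehat s_h-s_h|/\rho_h\le\epsilon\sum_hq_h/\ell=\epsilon/\ell,
\]
using $\rho_h\ge\ell$, the masses summing to one, and $\max_h|\widehat s_h-s_h|\le\epsilon$.

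Then split the regret into three terms:
\[
 F_s(\widehat\rho)-F_s(\rho^*)=\{F_s(\widehat\rho)-F_{\widehat s}(\widehat\rho)\}+\{F_{\widehat s}(\widehat\rho)-F_{\widehat s}(\rho^*)\}+\{F_{\widehat s}(\rho^*)-F_s(\rho^*)\}.
\]
\begin{enumerate}
\item The first bracket is at most $\epsilon/\ell$ by the uniform bound.
\item The middle bracket is nonpositive, since $\widehat\rho$ minimizes $F_{\widehat s}$ over $\mathcal F$ and $\rho^*\in\mathcal F$.
\item The third bracket is at most $\epsilon/\ell$ by the uniform bound.
\end{enumerate}
Summing gives the bound $2\epsilon/\ell$.

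If $\widehat\rho$ is only a $\delta$-approximate minimizer, meaning $F_{\widehat s}(\widehat\rho)\le\inf_{\mathcal F}F_{\widehat s}+\delta$, the middle bracket is at most $\delta$ instead of nonpositive. This yields $2\epsilon/\ell+\delta$.

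The only step needing care is the claim that the feasible set is common to both problems. The statement says ``a common feasible budget,'' and I would make explicit that the budget constraint depends only on costs and masses, not on $s$. This is what allows $\rho^*$ to be plugged into the estimated criterion. I would also note that $\widehat s_h$ may be negative, since pilot estimates are signed; the argument does not use nonnegativity of $\widehat s$, only the absolute deviation bound. Existence of minimizers is not a concern: the proposition assumes an exact minimizer, and $\mathcal F$ is compact with continuous criteria on it.
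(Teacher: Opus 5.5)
Your proof is correct and is the paper's argument. It establishes the uniform bound $|F_{\widehat s}-F_s|\le\epsilon/\ell$ on the common feasible set, then inserts and subtracts the estimated criterion at $\widehat\rho$ and $\rho^*$, using the minimizing inequality (exact or within $\delta$) to control the middle term.
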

\begin{proof}
Uniformly over the feasible set, $|F_{\widehat s}-F_s|\le\sum_hq_h\epsilon/\ell=\epsilon/\ell$. Insert and subtract the estimated criterion at $\widehat\rho$ and $\rho^*$; its minimizing inequality bounds the middle difference by zero (or $\delta$). The result follows.
\end{proof}
We use this bound to control the objective regret of a pilot-based allocation. An $m^{-1}$ regret rate additionally requires smooth interior solutions, positive curvature, and root-$m$ pilot estimates. If probability floors tend to zero, we must evaluate pilot accuracy relative to those floors. To stabilize variance estimates for an estimated pilot predictor, we can cross-fit within the pilot or use upper confidence bounds. For the numerical rule, we specify a variance floor and evaluate the resulting allocation on independent sites.

\paragraph{Fixed-count assignment.}
In stratum $h$, let fixed residual records be $e_j=L_{fj}-\widetilde a_f(W_j)$, with a simple random sample of $k_h$ paired sites among $J_h$. The estimator of the frame mean is
\[
 \bar a_h+\frac1{k_h}\sum_{j:C_j=1}e_j.
\]
Its conditional design variance is $(1-k_h/J_h)S_{eh}^2/k_h$, where $S_{eh}^2=(J_h-1)^{-1}\sum(e_j-\bar e_h)^2$. Independent strata give variance $\sum_hq_h^2(1-k_h/J_h)S_{eh}^2/k_h$. If the frame itself is a random sample from a site population, the law of total variance adds the variance of the full-frame target. Thus a finite-population correction belongs to the deletion component, not to all population sampling uncertainty.

For Bernoulli assignment in a fixed evaluation frame, as used in the IST information-deletion experiment,
\[
 \Var_C(\widehat{\mathcal R}_f\mid\text{frame,pilot})
 =J^{-2}\sum_j(\rho_j^{-1}-1)(L_{fj}-\widetilde a_f(W_j))^2.
\]
We compute this conditional variance exactly and also report Monte Carlo deletion RMSE. We compare allocations conditional on the realized frame, using the full-information corrected estimate as the deletion reference rather than as the population risk.

\paragraph{Integer allocation and cost accounting.}
For fixed counts, round desired $J_h\rho_h$ within capacity bounds, enforce at least the required paired count in every identifying stratum, and allocate remaining affordable counts to the greatest decrease in the prespecified variance criterion per cost. With a finite number of strata and probabilities bounded away from zero, changing each count by $O(1)$ changes an average risk by $O(J^{-1})$ on the normalized count scale. Actual random selection within each stratum preserves known inclusion probabilities. Bernoulli assignment enforces an expected cost; a hard budget requires the fixed-count procedure.

We account for the independent pilot cost separately from evaluation-site costs. Both IST allocations use the same pilot, and both simulation designs use the same pilot size. We compare incremental allocation while holding the number of evaluation sites and pilot size fixed.

\paragraph{General information program.}
For information matrices $\bm{\mathcal I}_d\succeq0$ corresponding to a specified finite-dimensional local experiment, let $\bm{\mathcal I}(\bm{n})=\sum_dn_d\bm{\mathcal I}_d$. For target weight $\bm{H}\succeq0$, define
\[
 \mathfrak R(\bm{n})=\begin{cases}
 \tr\{\bm{H}\bm{\mathcal I}(\bm{n})^\dagger\},&\operatorname{Range}(\bm{H})\subseteq\operatorname{Range}\{\bm{\mathcal I}(\bm{n})\},\\
 +\infty,&\text{otherwise}.
 \end{cases}
\]
It is convex and lower semicontinuous because it is the supremum over matrices $\bm{Z}$ of
$2\tr(\bm{H}^{1/2}\bm{Z})-\tr\{\bm{Z}^\trans\bm{\mathcal I}(\bm{n})\bm{Z}\}$. On a nonempty compact budget-capacity set with a finite-risk feasible point, a minimizer exists. On a constant-range face its active interior KKT conditions equate
\[
 \tr\{\bm{H}\bm{\mathcal I}(\bm{n})^\dagger\bm{\mathcal I}_d\bm{\mathcal I}(\bm{n})^\dagger\}=\lambda c_d.
\]
For planning, we use information matrices from the finite-site experiment that produces the corrected moment features; latent, noise-free moments correspond to a different experiment. When both the number of sites and paired fraction vary and risk has the special form
$\mathcal B^{-1}(c_C+\rho\Delta c)(V_M+V_X/\rho)$, differentiation yields
$\rho^*=\operatorname{clip}\{\sqrt{c_CV_X/(\Delta cV_M)},\rho_{\min},1\}$ for positive components. This scalar expression applies only when that decomposition is valid for the selected estimator.

\section{Incomplete pairing support and delivery sensitivity}
\label{sec:extensionproofs}
\subsection{Incomplete pairing support}
\begin{proof}[Proof of Theorem~2]
For each $w$, the conditional marginal second moments and $m(w)$ are fixed by the observed marginals. The conditional variance is therefore affine and decreasing in the cross moment. The rearrangement inequality bounds that cross moment between $K_-(w)$ and $K_+(w)$. The lower variance endpoint is attained by $(\theta_0^*,\theta_1^*)=(q_0(U,w),q_1(U,w))$ and the upper variance endpoint by $(q_0(U,w),q_1(1-U,w))$, where $U$ is conditionally uniform. Generalized quantiles are measurable for probability kernels on the real line, so these are admissible conditional couplings. Use the identified coupling on $\mathcal E$ and either extremal coupling on $\mathcal E^c$. This attains each integrated endpoint without changing an observed marginal or an observed paired distribution. Mixtures of the endpoint kernels attain every intermediate integrated variance because the conditional means and marginal second moments are fixed. This establishes sharpness. The width formula follows by subtraction. Its integrand is nonnegative; an integral of a nonnegative integrable function is zero exactly when the integrand is zero almost surely. Adding an identified constant preserves sharpness. Conditional Cauchy--Schwarz gives the moment-only outer bounds.
\end{proof}

\subsection{Sensitivity to a change in intervention delivery}
\label{sec:versionsensitivity}
Whole-site and within-site assignment may change how an intervention is delivered. Suppose pairing identifies a site effect $\tau^{p}$, whereas the scientific target is $\tau^{c}$ under whole-site delivery. Define $\mu^p,B^p,R^p$ and $\mu^c,B^c,R^c$ by \textup{(2)}, using the same $W$ and site distribution. A quantitative sensitivity assumption is
\begin{equation}
 \|\tau^c-\tau^p\|_2\le\delta.
 \label{eq:versionL2}
\end{equation}
The parameter $\delta$ bounds the root-mean-square delivery difference over sites. It can be stated on the same outcome scale as the site effects, and is zero when the delivery versions agree.

\begin{theorem}[Joint sensitivity of the heterogeneity decomposition]
\label{thm:versiondecomp}
Under \eqref{eq:versionL2}, the target decomposition satisfies
\begin{equation}
 (\mu^c-\mu^p)^2+
 (\sqrt{B^c}-\sqrt{B^p})^2+
 (\sqrt{R^c}-\sqrt{R^p})^2\le\delta^2.
 \label{eq:versionball}
\end{equation}
For a fixed paired-effect distribution on an unrestricted square-integrable probability space, this region is sharp if the mean-zero $W$-measurable subspace and its orthogonal residual subspace are both nontrivial. The same result holds for $(B_b,R_b)$ with the corresponding projection subspaces. With bounded outcomes, \eqref{eq:versionball} remains a valid outer region; sharpness then additionally depends on the outcome support. For any fixed predictor $f$,
\begin{equation}
 \big|\sqrt{\mathcal R_c(f)}-\sqrt{\mathcal R_p(f)}\big|\le\delta.
 \label{eq:versionrisk}
\end{equation}
\end{theorem}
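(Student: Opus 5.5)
The plan is to read the decomposition \textup{(2)} as an orthogonal decomposition in $L_2(Q)$ and then apply the reverse triangle inequality to each orthogonal piece.

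\emph{Step 1: orthogonal pieces.} Let $\mathcal C$ be the constants, $\mathcal M$ the mean-zero square-integrable $W$-measurable functions, and $\mathcal N$ the functions with $\E(\cdot\mid W)=0$. These subspaces are mutually orthogonal and together span $L_2$. For any square-integrable site effect $\tau$ we have $\tau=\mu+(m-\mu)+(\tau-m)$, with norms $|\mu|$, $\sqrt B$ and $\sqrt R$; this is exactly the orthogonality argument already used at the start of Section~\ref{sec:proofs}. Write $P_{\mathcal C},P_{\mathcal M},P_{\mathcal N}$ for the orthogonal projections, and set $d=\tau^c-\tau^p$.

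\emph{Step 2: the ball.} Since projections are linear, the reverse triangle inequality gives
\[
|\mu^c-\mu^p|\le\|P_{\mathcal C}d\|,\qquad
|\sqrt{B^c}-\sqrt{B^p}|\le\|P_{\mathcal M}d\|,\qquad
|\sqrt{R^c}-\sqrt{R^p}|\le\|P_{\mathcal N}d\|.
\]
Squaring and summing, Pythagoras gives $\|P_{\mathcal C}d\|^2+\|P_{\mathcal M}d\|^2+\|P_{\mathcal N}d\|^2=\|d\|^2\le\delta^2$ by \eqref{eq:versionL2}. This proves \eqref{eq:versionball}.

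For $(B_b,R_b)$, replace $\mathcal M$ by the mean-zero part of $\operatorname{span}(\bm b)$ and $\mathcal N$ by the orthogonal complement of $\operatorname{span}(\bm b)$ in $L_2$. The identities $B_b=\Var(f_b)$ and $R_b=\E(\tau-f_b)^2$ from Section~\ref{sec:proofs} show these are again the squared norms of the orthogonal projections, so the same argument applies.

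For the risk bound, note $\sqrt{\mathcal R(f)}=\|\tau-f\|_2$. The reverse triangle inequality then gives $|\|\tau^c-f\|-\|\tau^p-f\||\le\|d\|\le\delta$, which is \eqref{eq:versionrisk}.

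\emph{Step 3: sharpness.} Fix $\tau^p$ and any point $(\mu',b',r')$ with $b',r'\ge0$ lying in the ball. Choose unit vectors as follows.
\begin{itemize}
\item Let $u_1=(m^p-\mu^p)/\sqrt{B^p}$ if $B^p>0$; otherwise take any unit vector in $\mathcal M$, which exists because $\mathcal M$ is nontrivial.
\item Define $u_2\in\mathcal N$ similarly from $\tau^p-m^p$.
\end{itemize}
Set $\tau^c=\mu'+b'u_1+r'u_2$. Then $\tau^c$ has components $(\mu',b'^2,r'^2)$, and $\tau^p=\mu^p+\sqrt{B^p}u_1+\sqrt{R^p}u_2$. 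Hence
\[
\|\tau^c-\tau^p\|^2=(\mu'-\mu^p)^2+(b'-\sqrt{B^p})^2+(r'-\sqrt{R^p})^2\le\delta^2,
\]
so every point of the region is attained. Because the probability space is unrestricted, each such $\tau^c$ is an admissible whole-site effect.

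With bounded outcomes, Step 2 is unchanged, so the region stays a valid outer region. The Step 3 construction, however, may leave the outcome support, which is why sharpness then depends on that support.

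\emph{Main obstacle.} I expect the delicate part to be sharpness in degenerate cases, where $B^p=0$ or $R^p=0$. The sign and direction of the paired component are then undetermined, which is exactly why nontriviality of both subspaces is assumed. For the fixed-span version, one must also check that the chosen unit vector $u_1$ lies in $\operatorname{span}(\bm b)$ and is mean-zero; this requires $p\ge2$ and is the fixed-span analogue of the nontriviality hypothesis.
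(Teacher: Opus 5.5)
Your proposal is correct and follows the paper's proof essentially step for step. It uses the same orthogonal decomposition into $P_0$, $P_W-P_0$, $I-P_W$ with Pythagoras and the reverse triangle inequality for the ball and the risk bound, and the same sharpness construction: the paired projection's direction when it is nonzero, an arbitrary unit vector in the nontrivial subspace otherwise. Your remark that fixed-span sharpness needs a nontrivial mean-zero part of $\operatorname{span}(\bm b)$, i.e.\ $p\ge2$, spells out what the paper leaves implicit in ``the corresponding projection subspaces.''
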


We obtain the joint restriction by decomposing each effect into its mean, centered conditional mean, and residual, which lie in mutually orthogonal subspaces. The squared norm of the delivery difference is shared by changes in the overall mean and the square roots of the two heterogeneity components. We use the shared radius $\delta$ to construct joint sensitivity ranges for all three coordinates. We can intersect the region with any mean-effect information supplied by whole-site assignments. If a confidence region covers the paired decomposition, taking the union of \eqref{eq:versionball} over that region transfers its coverage to the delivery-sensitive target, provided the stated bound $\delta$ is valid. Under the specified bound on delivery differences, researchers can report the resulting ranges for heterogeneity components and prediction risk.

\begin{proof}[Proof of Theorem~\ref{thm:versiondecomp}]
Work in the Hilbert space $L_2(Q)$, enlarged to contain the latent site variables. Let $P_0$ be projection onto constants and $P_W$ projection onto $W$-measurable functions. The three mutually orthogonal projections are $P_0$, $P_W-P_0$, and $I-P_W$. Write $d=\tau^c-\tau^p$. Pythagoras gives
\[
 \|d\|_2^2=\|P_0d\|_2^2+\|(P_W-P_0)d\|_2^2+\|(I-P_W)d\|_2^2.
\]
The first term equals $(\mu^c-\mu^p)^2$. The reverse triangle inequality bounds the other two terms below by the squared differences of the corresponding projected norms. These norms are $\sqrt B$ and $\sqrt R$, proving \eqref{eq:versionball}.

For sufficiency, fix any proposed $(\mu^c,b_c,r_c)$ with $b_c,r_c\ge0$ satisfying the norm inequality, where $b_c=\sqrt{B^c}$ and $r_c=\sqrt{R^c}$. In each of the two nonconstant subspaces take the unit vector in the direction of the paired projection when it is nonzero. When it is zero, take any unit vector in that subspace, whose existence is assumed. Set $\tau^c$ equal to $\mu^c$ plus $b_c$ times the first direction plus $r_c$ times the second. Its decomposition has the required values and its squared distance from $\tau^p$ equals the left side of \eqref{eq:versionball}. This construction requires an unrestricted square-integrable effect space; it may violate a bounded outcome support, which is why only outer validity is claimed there. Replacing $P_W$ by projection onto the span of $\bm{b}$ gives the finite-basis result. The risk statement is the reverse triangle inequality applied to $\tau^c-f$ and $\tau^p-f$.
\end{proof}

\section{Small explained components and an estimated reference}
\subsection{Inference near zero explained variation}
\label{sec:weakexplanationdetails}
At a constant conditional mean, the first derivative of $B$ vanishes, so its estimator has a different rate and limiting distribution. We characterize this second-order behavior using the finite-stratum correction. Suppose $H$ is fixed, $J_h/J\to q_h>0$, the target masses $q_h$ are fixed, and site signals in stratum $h$ have variance $\sigma_h^2>0$. Put
\[
 \bm{A}_q=\diag(\bm{q})-\bm{q}\bm{q}^\trans,\qquad
 \bm{\Omega}=\diag(\sigma_h^2/q_h),\qquad
 \bm{\widehat D}=\diag(s_{1h}^2/J_h).
\]
The estimator of the explained component can be written as
$\widehat B_b=\bm{\bar Z}_1^\trans \bm{A}_q\bm{\bar Z}_1-\tr(\bm{A}_q\bm{\widehat D})$. This expression also applies to prescribed mixture masses differing from the sampling fractions if $\bm{\Omega}$ uses the limiting sampling fractions instead of $q_h$.

\begin{theorem}[Zero and local explained variation]
\label{thm:weakB}
Assume a joint central limit theorem for the independent stratum means and consistent estimation of their variances, uniformly along the local sequences below. If $m_h=c$ for every $h$, then
\begin{equation}
 J\widehat B_b\ \rightsquigarrow\
 \sum_{k=1}^{H-1}\lambda_k(\chi_{1,k}^2-1),
 \label{eq:zeroB}
\end{equation}
where $\lambda_k$ are the positive eigenvalues of $\bm{\Omega}^{1/2}\bm{A}_q\bm{\Omega}^{1/2}$ and the chi-squared variables are independent. Along $\bm{m}=c\one+\bm{d}/\sqrt J$,
\begin{equation}
 J B_b=\bm{d}^\trans \bm{A}_q\bm{d},\qquad
 J\widehat B_b\ \rightsquigarrow\
 (\bm{Z}+\bm{d})^\trans \bm{A}_q(\bm{Z}+\bm{d})-\tr(\bm{A}_q\bm{\Omega}),\quad \bm{Z}\sim N(0,\bm{\Omega}).
 \label{eq:localB}
\end{equation}
For a full-row-rank $(H-1)\times H$ contrast matrix $\bm{C}_H$ satisfying $\bm{C}_H\one=0$, the statistic
\begin{equation}
 T_J=(\bm{C}_H\bm{\bar Z}_1)^\trans(\bm{C}_H\bm{\widehat D} \bm{C}_H^\trans)^{-1}
                         (\bm{C}_H\bm{\bar Z}_1)
 \label{eq:contexttest}
\end{equation}
converges under the constant-mean null to $\chi^2_{H-1}$. Its local limit is noncentral chi-squared with noncentrality
$(\bm{C}_H\bm{d})^\trans(\bm{C}_H\bm{\Omega} \bm{C}_H^\trans)^{-1}(\bm{C}_H\bm{d})$.
\end{theorem}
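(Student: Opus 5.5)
The plan is to treat $J\widehat B_b$ as a quadratic form in the centered, scaled stratum means and then apply the joint central limit theorem with the continuous mapping theorem. Write $\bm{\bar Z}_1=\bm{m}+\bm{\varepsilon}_J$, where $\sqrt J\,\bm{\varepsilon}_J\rightsquigarrow \bm{Z}\sim N(0,\bm{\Omega})$. This holds by the assumed CLT, since $\Var(\bar Z_{1h})=\sigma_h^2/J_h$ and $J/J_h\to 1/q_h$, and it holds uniformly along the local sequences.

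First record the algebra. Because $\bm{A}_q\one=\bm{q}-\bm{q}(\bm{q}^\trans\one)=0$, we have $\bm{A}_q(c\one+\bm{x})=\bm{A}_q\bm{x}$. Hence along $\bm{m}=c\one+\bm{d}/\sqrt J$,
\[
 JB_b=J\bm{m}^\trans\bm{A}_q\bm{m}=\bm{d}^\trans\bm{A}_q\bm{d}.
\]
This uses that the fixed-mixture target is $B_b=\sum_hq_hm_h^2-(\sum_hq_hm_h)^2=\bm{m}^\trans\bm{A}_q\bm{m}$. Likewise
\[
 J\bm{\bar Z}_1^\trans\bm{A}_q\bm{\bar Z}_1=(\sqrt J\bm{\varepsilon}_J+\bm{d})^\trans\bm{A}_q(\sqrt J\bm{\varepsilon}_J+\bm{d}),
\]
which converges in law to $(\bm{Z}+\bm{d})^\trans\bm{A}_q(\bm{Z}+\bm{d})$ by continuous mapping. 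For the correction term, $J\tr(\bm{A}_q\bm{\widehat D})=\sum_h(A_q)_{hh}s_{1h}^2J/J_h\to\tr(\bm{A}_q\bm{\Omega})$ in probability, by consistency of $s_{1h}^2$. Slutsky's theorem then gives \eqref{eq:localB}. Setting $\bm{d}=0$ gives the null limit $\bm{Z}^\trans\bm{A}_q\bm{Z}-\tr(\bm{A}_q\bm{\Omega})$.

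To put the null limit in the form \eqref{eq:zeroB}, write $\bm{Z}=\bm{\Omega}^{1/2}\bm{N}$ with $\bm{N}\sim N(0,\bm{I})$. Then $\bm{Z}^\trans\bm{A}_q\bm{Z}=\bm{N}^\trans\bm{\Omega}^{1/2}\bm{A}_q\bm{\Omega}^{1/2}\bm{N}$, and spectral decomposition gives $\sum_k\lambda_k\chi^2_{1,k}$. The trace $\tr(\bm{A}_q\bm{\Omega})$ equals $\sum_k\lambda_k$. It remains to count the positive eigenvalues. $\bm{A}_q$ is positive semidefinite, being the covariance of a multinomial indicator. Its null space is $\operatorname{span}(\one)$ when all $q_h>0$, since $\bm{x}^\trans\bm{A}_q\bm{x}=\sum q_h(x_h-\bar x)^2$. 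So it has rank $H-1$, and congruence with the invertible $\bm{\Omega}^{1/2}$ preserves rank and semidefiniteness. This gives exactly $H-1$ positive eigenvalues.

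For the test, $\bm{C}_H\bm{\bar Z}_1=\bm{C}_H\bm{\varepsilon}_J+\bm{C}_H\bm{d}/\sqrt J$ because $\bm{C}_H\one=0$. Therefore $\sqrt J\bm{C}_H\bm{\bar Z}_1\rightsquigarrow N(\bm{C}_H\bm{d},\bm{C}_H\bm{\Omega}\bm{C}_H^\trans)$. This covariance is nonsingular, since $\bm{C}_H$ has full row rank and $\bm{\Omega}\succ0$. Also $J\bm{C}_H\bm{\widehat D}\bm{C}_H^\trans\to_p\bm{C}_H\bm{\Omega}\bm{C}_H^\trans$. So $T_J$ is a continuous function of a converging pair, and its limit is a quadratic form of a Gaussian vector in the inverse of its own covariance. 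That limit is central $\chi^2_{H-1}$ under the null and noncentral with the stated noncentrality under the local alternative.

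The only delicate point is uniformity along the local sequences, which the hypotheses supply directly. The remark about prescribed masses that differ from the sampling fractions is handled by the same argument: only $\bm{\Omega}$ changes, through the limiting fractions $J_h/J$.
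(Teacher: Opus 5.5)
Your proof is correct and follows the paper's argument essentially step for step. The shared steps are: using $\bm{A}_q\one=0$ to rewrite $J\bm{\bar Z}_1^\trans\bm{A}_q\bm{\bar Z}_1$ as a quadratic form in $\sqrt J(\bm{\bar Z}_1-\bm{m})+\bm{d}$; continuous mapping with Slutsky for the trace correction; diagonalizing $\bm{\Omega}^{1/2}\bm{A}_q\bm{\Omega}^{1/2}$, which has rank $H-1$, at $\bm{d}=0$; and the Gaussian limit of $\sqrt J\bm{C}_H\bm{\bar Z}_1$ for the contrast test. You only add details the paper leaves implicit, namely that the trace equals the eigenvalue sum and the rank count via $\bm{x}^\trans\bm{A}_q\bm{x}=\sum_h q_h(x_h-\bar x)^2$.
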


We obtain a $J^{-1}$ scale for explained variance near zero, corresponding to $J^{-1/2}$ differences in conditional means. A first-order normal interval for $\widehat B_b$ cannot represent this limit; we can use the centered quadratic limit for diagnostics and local power calculations. To construct an interval covering zero and positive explained variation, we can form a confidence ellipsoid for $\bm{C}_H\bm{m}$ and project it onto $\bm{m}^\trans \bm{A}_q\bm{m}$, a functional invariant to the intercept. The endpoints can be obtained by minimizing and maximizing the quadratic form over the confidence ellipsoid constructed from the estimated contrast covariance. For inference on $B_b$ alone, we can use the mean contrasts directly; joint inference on $(B_b,R_b,V)$ also uses the second moments and their covariance.

For inference on the explained fraction, researchers can specify a minimum total variation and project the joint moment confidence region onto the ratio. Given a confidence region $\mathcal C$ for the first and second moments, they can report
\begin{equation}
 \mathcal I_\eta(v_*)=
 \{B/V:(B,R,V)\in\mathcal C,\ B,R\ge0,\ V=B+R\ge v_*\},
 \label{eq:ratioinversion}
\end{equation}
where $v_*>0$ is a substantively stated minimum total variation. When the true $V\ge v_*$, coverage follows by projection. If the scientific question permits arbitrarily small $V$, the ratio can remain uninformative even when $B$ is accurately estimated. In that case, we report the components and mean contrasts and retain the signed point estimates. When strata contain few sites, we interpret the coverage and contrast-test results using their asymptotic approximations.

\begin{proof}[Proof of Theorem~\ref{thm:weakB}]
The stratum central limit theorem gives $\sqrt J(\bm{\bar Z}_1-\bm{m})\rightsquigarrow \bm{Z}\sim N(0,\bm{\Omega})$ and $J\bm{\widehat D}\to_p\bm{\Omega}$. Since $\bm{A}_q\one=0$, along $\bm{m}=c\one+\bm{d}/\sqrt J$,
\[
 J\bm{\bar Z}_1^\trans \bm{A}_q\bm{\bar Z}_1
 =\{\sqrt J(\bm{\bar Z}_1-\bm{m})+\bm{d}\}^\trans \bm{A}_q
   \{\sqrt J(\bm{\bar Z}_1-\bm{m})+\bm{d}\}.
\]
Subtract the convergent trace correction and apply the continuous mapping theorem. At $\bm{d}=0$, diagonalize $\bm{\Omega}^{1/2}\bm{A}_q\bm{\Omega}^{1/2}$. Its rank is $H-1$ and its positive eigenvalues give \eqref{eq:zeroB}. For $\bm{d}\ne0$, the same calculation gives \eqref{eq:localB}. For the contrast statistic, $\sqrt J \bm{C}_H\bm{\bar Z}_1$ converges to $N(\bm{C}_H\bm{d},\bm{C}_H\bm{\Omega} \bm{C}_H^\trans)$. The covariance is positive definite because $\bm{C}_H$ has full row rank and $\bm{\Omega}\succ0$. Slutsky's theorem then gives the stated central or noncentral chi-squared law.
\end{proof}

\paragraph{Conditions for a uniform local approximation.}
A convenient sufficient condition is independent sampling within each of a fixed number of strata, a uniform $(2+\eta)$ moment bound for the centered first-moment signals for some $\eta>0$, variances uniformly bounded away from zero, and $J_h/J$ bounded away from zero. The stratum Lyapunov bounds then vanish uniformly. Consistency of $s_{1h}^2$ follows from the uniform moment bound. We may condition on independent-pilot augmentation when the signal moment and variance conditions hold with probability tending to one. The approximation requires a fixed number of strata; a growing-$H$ setting requires a different limit argument.

\paragraph{Projection and ratio coverage.}
If an ellipsoid covers $\bm{C}_H\bm{m}$, projecting the ellipsoid onto the quadratic functional includes the value at the covered vector. Because $\bm{A}_q\one=0$, a contrast vector determines $B_b$ uniquely. We therefore obtain coverage by optimizing over the contrast ellipsoid, without requiring a first-order derivative of $B_b$. The same event argument proves coverage of \eqref{eq:ratioinversion} at every true parameter with $V\ge v_*$. We retain and report an empty projected set when projection yields no feasible values. The ratio is undefined at $V=0$.

\subsection{An estimated common participant reference}
\begin{proof}[Proof of Proposition~\ref{prop:referenceBR}]
Let $P_W$ and $P_0$ denote the projections in the preceding proof, and let $\delta$ be the reference perturbation viewed as a function of the site. The population functional expansions are exact:
\begin{align*}
 B(P_M^*)-B(P^*)&=2\langle (P_W-P_0)\tau,\delta\rangle+
                         \|(P_W-P_0)\delta\|_2^2,\\
 R(P_M^*)-R(P^*)&=2\langle (I-P_W)\tau,\delta\rangle+
                         \|(I-P_W)\delta\|_2^2.
\end{align*}
Substitution of $\delta_j=M^{-1}\sum_k\{d_j(\bm{X}_k^*)-\tau_j\}$ gives the reference gradients. Since projections are contractions and
$\E\|\delta\|_2^2=M^{-1}\E_Q\Var_{P^*}\{d_j(\bm{X}^*)\}$, both quadratic terms are $O_p(M^{-1})$. Add the assumed fixed-reference site expansion. The assumed derivative and feature remainder conditions justify replacing empirical fitted derivatives by their population values at the reported scale. Independence of reference and site samples makes the two leading sums independent, hence their covariance matrices add. If the reference comes from the same sampled sites, write its contribution as a sum of site-level reference influences $\bm{\xi}_j$ and use the covariance of $\bm{\phi}_j+\bm{\xi}_j$, including both cross terms.
\end{proof}

\section{Joint allocation proofs}
\begin{proof}[Proof of Theorem~5]
At a fixed site put $d=\tau-f(W)$ and $\varepsilon_i=T_i-\tau$. Then
\[
 L_{f,n}=d^2+\frac{2d}{n}\sum_i\varepsilon_i+
                  \frac1{n(n-1)}\sum_{i\ne k}\varepsilon_i\varepsilon_k.
\]
Conditional centering and independence make the last two terms mean zero and uncorrelated. The linear variance is $4d^2s_j^2/n$. In the ordered-pair sum, only a pair and itself or its reversal contribute to the second moment. Its variance is $2s_j^4/[n(n-1)]$. This uses finite second moments of individual scores; the integrated moments are specified in \textup{(27)}. Total variance over sites with the same $W$ gives \textup{(28)}. The conditional mean $a_f(W)$ does not depend on $n$.

Put $e(W)=a_f(W)-\widetilde a_f(W)$. By \textup{(25)}, the total variance for the pilot-augmented record is
\[
 \Var\{a_f(W)\}-\E e(W)^2+
        \E\{\eta_{f,n(W)}^2(W)/\rho(W)\}.
\]
The first two terms are constant across designs because the predictor and augmentation are fixed across candidate counts. For any $\lambda>0$, pointwise minimization of \textup{(29)} makes its integral no larger than at any competing design. If the pointwise minimizer spends exactly $b_0$, subtracting $\lambda b_0$ and using that a competitor spends at most $b_0$ proves global optimality of its variance. For fixed $n$ with $\eta_{f,n}^2>0$, strict convexity in $r$ yields the clipped square-root solution. If all candidate minimizers are interior, substitution gives $2\sqrt{\lambda c_n\eta_{f,n}^2}$; minimizing this over $n$ proves \textup{(30)}. Zero numerators and binding probability bounds are covered by direct minimization of \textup{(29)}.

If $A_f+e^2>0$ and $c_1>0$, then $c_n\eta_{f,n}^2\sim c_1(A_f+e^2)n\to\infty$, giving a finite integer minimizer in the interior product criterion. Randomization between Lagrangian minimizers preserves the integrated minimizing inequality and can fill an expected-budget gap at an integer switch. A fixed-count or other hard-budget implementation uses its corresponding finite-frame variance.
\end{proof}

\begin{proof}[Proof of Theorem~\ref{thm:minimax}]
For every vector $\bm{v}$, $\bm{v}^\trans \bm{S}(w)\bm{v}\ge0$. Consequently $\bm{S}(w)/r$ is convex in the Loewner order for $r>0$, and integration preserves that order. The largest eigenvalue is convex and monotone in this order, proving convexity of the objective. Equivalently, its epigraph has the matrix constraint $\bm{\Sigma}_D(\rho)\preceq t\bm{I}$.

Introduce a positive-semidefinite matrix multiplier $\bm{H}$ for that constraint and a nonnegative scalar multiplier $\lambda$ for the budget. Minimization of the Lagrangian in $t$ requires $\tr(\bm{H})=1$. The terms depending on $\rho(w)$ are
$\tr\{\bm{H}\bm{D}\bm{S}(w)\bm{D}^\trans\}/\rho(w)+\lambda c(w)\rho(w)$, whose pointwise minimizer is \eqref{eq:spectraldesign}. Matrix complementary slackness gives
$\tr[\bm{H}\{t\bm{I}-\bm{\Sigma}_D(\rho)\}]=0$. Both matrices are positive semidefinite, so $\bm{H}$ has support only on the top-eigenvalue eigenspace at an optimum. The strictly feasible budget and a sufficiently large $t$ give Slater feasibility, establishing the multiplier characterization in finite strata. In the measurable-context formulation, the same characterization follows when an integrable dual optimizer exists. Supplied multipliers satisfying the displayed conditions give a sufficient certificate. Pointwise Lagrangian minimization and complementary slackness directly prove sufficiency without an existence assertion. The bounded probability interval and integrable covariance make the displayed objective finite.
\end{proof}

\paragraph{Dual characterization for finite and measurable contexts.}
We use the theorem's existence result for finite context partitions, including every implemented design in this paper. For arbitrary measurable $W$, the displayed rule certifies global optimality when the required dual optimizers exist; their existence is an additional regularity condition. Discretizing $W$ gives the finite-stratum convex program over the corresponding class of implementable designs. Refining the partition changes the design class and requires reassessing pilot precision.

\paragraph{Regret bound.}
For every feasible $\rho$, the operator-norm difference between estimated and true covariance is at most $\epsilon_0+\sum_hq_h\epsilon/\rho_h\le\epsilon_0+\epsilon/\ell$. Weyl's eigenvalue bound transfers it to the objective. Insert and subtract the estimated objective at the estimated and true optimal designs. The estimated minimizing inequality is at most $\delta_{\rm opt}$, and the two approximation errors yield \eqref{eq:spectralregret}.

\subsection{Allocation with unequal independent cell samples}
\label{sec:cellallocation}
We extend the i.i.d. effect-score construction in Theorem~5 to separate participant counts for independent arm-by-reference cells, using the corrected features in Proposition~\ref{prop:independent_cell_covariance}. We choose site-selection probabilities and participant counts to minimize the variance of the same augmented risk record.

\begin{proposition}[Allocation with cell-specific participant counts]
\label{prop:cell_count_allocation}
Condition on independent training and pilot data and fix $f$. Suppose Proposition~\ref{prop:independent_cell_covariance} applies at each site for every count vector $\mathbf n$ in a prespecified finite menu $\mathcal N(w)$, with $n_h\ge2$ in every cell. Choose the menus, counts, and pairing probabilities before observing evaluation outcomes. Keep the site law, cell response distributions, interventions, and signed reference weights fixed across menus. Assume the following conditional moments and design criteria are finite. Then the corrected feature $L_{f,\mathbf n}$ has conditional mean $a_f(w)=\E\{(\tau-f(w))^2\mid W=w\}$ and variance
\begin{equation}
 \nu_{f,\mathbf n}^2(w)=\Var\{(\tau-f(w))^2\mid W=w\}
 +\E\{4(\tau-f(w))^2v_j(\mathbf n)+d_j(\mathbf n)\mid W=w\},
 \label{eq:cell_count_variance}
\end{equation}
where $v_j(\mathbf n)$ and $d_j(\mathbf n)$ are defined in \eqref{eq:cell_variance_terms}.
Let $\widetilde a_f$ be one fixed pilot augmentation common to all menus, and put
\[
 \eta_{f,\mathbf n}^2(w)=\nu_{f,\mathbf n}^2(w)
       +\{a_f(w)-\widetilde a_f(w)\}^2,
 \qquad
 c_{\mathbf n}(w)=c_0(w)+\sum_{h=1}^K c_h(w)n_h,
\]
with positive cost coefficients. For the record
\[
 \widetilde a_f(W)
 +\frac{C}{\rho(W)}\{L_{f,\mathbf n(W)}-\widetilde a_f(W)\},
\]
assume $C$ is Bernoulli with known probability $\rho(W)$, independently of site responses conditional on $W$. Over $0<\ell(w)\le\rho(w)\le u(w)\le1$ and expected budget $\E\{c_{\mathbf n(W)}(W)\rho(W)\}\le b_0$, choose a pointwise minimizer of
\begin{equation}
 \frac{\eta_{f,\mathbf n}^2(w)}{r}+\lambda c_{\mathbf n}(w)r,
 \quad \mathbf n\in\mathcal N(w),\quad \ell(w)\le r\le u(w),
 \label{eq:cell_count_lagrangian}
\end{equation}
for some $\lambda>0$. If these choices attain the budget with equality, they globally minimize the variance of this record over the stated menus and constraints. For each candidate vector the minimizing probability is
\[
 r_{\mathbf n}(w)=\min\left\{u(w),\max\left\{\ell(w),
 \sqrt{\frac{\eta_{f,\mathbf n}^2(w)}{\lambda c_{\mathbf n}(w)}}\right\}\right\}.
\]
At a zero numerator, the minimizing probability is the lower bound.
\end{proposition}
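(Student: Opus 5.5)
The plan follows the proof of Theorem~5, with Proposition~\ref{prop:independent_cell_covariance} in place of the i.i.d.\ effect-score decomposition.

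\emph{Step 1: conditional mean and variance of the feature.} Fix $w$ and a count vector $\mathbf n\in\mathcal N(w)$. Conditional on the site, the training and pilot data, the weights and the counts, Proposition~\ref{prop:independent_cell_covariance} gives
\[
\E(L_{f,\mathbf n}\mid j)=(\tau_j-f(w))^2,
\qquad
\Var(L_{f,\mathbf n}\mid j)=4(\tau_j-f(w))^2v_j(\mathbf n)+d_j(\mathbf n).
\]
Apply the law of total variance over sites with $W=w$. The between-site part is $\Var\{(\tau-f(w))^2\mid W=w\}$. The within-site part is the conditional expectation of the displayed variance. Together they give \eqref{eq:cell_count_variance}.

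Because the site law, the cell distributions and the weights are held fixed across menus, the conditional mean $a_f(w)$ does not depend on $\mathbf n$. The counts are chosen before outcomes are observed, so they are functions of $(w,\text{pilot})$ only, and conditioning on them is legitimate.

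\emph{Step 2: variance of the pilot-augmented record.} Condition on the pilot and write $e=\widetilde a_f-a_f$. The record has conditional mean $a_f(W)$ given $W$. Its residual about that mean splits as
\[
\frac{C}{\rho}\,(L_{f,\mathbf n}-a_f)\;+\;\Bigl(1-\frac{C}{\rho}\Bigr)e .
\]
These two terms are uncorrelated because $C$ is independent of the responses given $W$ and $\E(L_{f,\mathbf n}-a_f\mid W)=0$. This is the argument behind \eqref{eq:pilotpenalty}. The total variance is therefore
\[
\Var(a_f)-\E e^2+\E\{\eta^2_{f,\mathbf n(W)}(W)/\rho(W)\}.
\]
The first two terms do not depend on the design, since $\widetilde a_f$ is common to all menus. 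Only the last term needs to be minimized.

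\emph{Step 3: Lagrangian optimality.} For a fixed $\lambda>0$, take a pointwise minimizer of \eqref{eq:cell_count_lagrangian}. For any feasible competitor $(\mathbf n',\rho')$, integrate the pointwise inequality over $W$. This gives
\[
\E(\eta^2/\rho)+\lambda\E(c\rho)\;\le\;\E(\eta'^2/\rho')+\lambda\E(c'\rho').
\]
If the minimizer spends exactly $b_0$ and the competitor spends at most $b_0$, then $\E(\eta^2/\rho)\le\E(\eta'^2/\rho')$, which is global optimality.

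For each candidate $\mathbf n$, the map $r\mapsto \eta^2/r+\lambda c_{\mathbf n}r$ is convex on $(0,\infty)$, strictly so when $\eta^2>0$. Its stationary point is $\sqrt{\eta^2/(\lambda c_{\mathbf n})}$, and clipping to $[\ell,u]$ gives $r_{\mathbf n}$. When $\eta^2=0$ the map is increasing in $r$, so the lower bound $\ell$ is the minimizer.

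\emph{Main obstacle.} The delicate point is measurability and integrability of the pointwise minimizer. The menus are finite, so a measurable selection exists: choose the minimizing $\mathbf n$ by a fixed tie-breaking order among finitely many measurable criteria. Integrability comes from the assumed finiteness of the moments and design criteria together with $\rho\ge\ell>0$.

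A second point to check is that $v_j(\mathbf n)$ and $d_j(\mathbf n)$ need only finite second response moments, which Proposition~\ref{prop:independent_cell_covariance} guarantees. This keeps $\nu^2_{f,\mathbf n}$ finite without fourth moments.
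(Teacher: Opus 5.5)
Your proposal is correct and follows the paper's proof essentially step for step: the law of total variance combined with Proposition~\ref{prop:independent_cell_covariance}, then the pilot-penalty decomposition that leaves only $\E\{\eta^2_{f,\mathbf n(W)}(W)/\rho(W)\}$ design-dependent, then integration of the pointwise Lagrangian inequality at a binding budget; the paper's extra remark on randomizing tied menu--probability pairs is not needed for the stated claim. One small correction to your closing remark: finite second response moments suffice for the within-site term, but finiteness of $\Var\{(\tau-f(w))^2\mid W=w\}$ still requires site-level fourth moments of $\tau$, which the proposition assumes rather than derives.
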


\begin{proof}
The law of total variance and \eqref{eq:cell_risk_variance} give \eqref{eq:cell_count_variance}. The conditional mean is the same for every menu. Consequently, conditional on the pilot, the augmented record has variance
\[
 \Var\{a_f(W)\}
 -\E\{a_f(W)-\widetilde a_f(W)\}^2
 +\E\!\left\{\frac{\eta_{f,\mathbf n(W)}^2(W)}{\rho(W)}\right\}.
\]
The first two terms are constant across designs. Minimizing \eqref{eq:cell_count_lagrangian} over $r$ at each candidate count gives the clipped rule; a zero numerator gives $r=\ell(w)$ because the cost coefficient and multiplier are positive. Integrate the pointwise minimizing inequality. The proposed design spends $b_0$, whereas any feasible competitor spends at most $b_0$; subtracting $\lambda b_0$ proves the variance inequality.

If tied menu--probability pairs are randomized before evaluation outcomes, independently of site responses conditional on $W$ and the pilot, record the selected menu $N$ and use its conditional pairing probability $\rho(W,N)$ in the inverse-probability record. The expectation in the last variance term then includes $N$. Both the expected cost and Lagrangian value average over the mixture, preserving the same proof whenever the mixture attains the expected budget.
\end{proof}

For a fixed site, increasing $n_h$ makes that cell's $w_h^2\sigma_h^2/n_h$ contribution to $v_j$, its $2w_h^4\sigma_h^4/\{n_h(n_h-1)\}$ contribution to $d_j$, and every cross-cell term containing $n_h$ nonincreasing. We therefore incorporate the reference weights and arm imbalance when calculating the variance for each candidate count vector. Honduras uses two native-composition cells; IST uses four arm-by-age cells with fixed reference weights. The IST information-deletion experiment keeps those realized cell counts fixed and varies site selection. For a future experiment under the same response-sampling conditions, we can use the proposition to plan participant counts.

For estimated inputs, the finite-partition regret argument in Supplement Section~\ref{sec:designproofs} also applies uniformly over finite menus. With known stratum masses, fixed costs and probability bounds, let $\ell=\min_s\ell_s>0$, where $s$ indexes context strata. If the error in $\eta_{f,\mathbf n}^2$ is at most $\epsilon$ for every stratum and menu, a design optimized to tolerance $\delta$ has objective regret at most $2\epsilon/\ell+\delta$. To obtain the bound, we control the criterion error uniformly by $\epsilon/\ell$ and use the minimizing inequality for the estimated criterion. The regret bound controls the increase in the variance criterion caused by errors in the planning inputs. The corrected feature conditions and the known recorded probabilities determine the conditional mean of the evaluation record.

\section{Specified references and finite-site implementation}
\label{sec:targetextensions}
\begin{proposition}[Transfer to a specified site reference]
\label{prop:sitereference}
Let $\Pi_j$ be specified before experimental-pattern assignment and applied to both arms. Suppose the randomization and first- and second-moment feature equations in Assumption~1 hold for $\theta_{aj}^{\Pi}=\int m_{aj}\,d\Pi_j$, conditional on the site reference and independent feature-training data. Suppose the target moments agree across every pattern in which a feature is used. Then the decomposition, projection, identification, and signal-mean identities hold for $\tau_j^{\Pi}=\theta_{1j}^{\Pi}-\theta_{0j}^{\Pi}$. The inference and allocation results transfer under their respective moment, feature-error, and independence conditions.
\end{proposition}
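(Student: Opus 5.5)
The plan is a relabeling argument: every earlier result uses the common-reference construction only through the first- and second-moment feature equations of Assumption~\ref{ass:main} and the pattern-invariance of their target moments. So I would replace $\theta_{aj}^*$ by $\theta_{aj}^{\Pi}$ throughout and check that each proof goes through unchanged. Concretely, I would define $\bm{\Theta}_j^{\Pi}=(\theta_{0j}^{\Pi},\theta_{1j}^{\Pi})$ and $\tau_j^{\Pi}$. Because $\Pi_j$ is fixed before pattern assignment and applied to both arms, $\bm{\Theta}_j^{\Pi}$ is a site-level random vector determined by the site response population and $\Pi_j$. Conditional on $W$, it is therefore independent of $(C_j,A_j)$ by the randomization in Assumption~\ref{ass:main}.

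For the decomposition and projection identities, I would repeat the orthogonality argument of Supplement Section~\ref{sec:proofs} with $\tau^{\Pi}$ in place of $\tau$. Those proofs use only square integrability of $\tau$ and the normal equations for $\bm{b}$; in particular they never use that the reference is common across sites. For identification, the proof of Theorem~\ref{thm:identification} needs only four things: the feature equations $\E(S_a\mid\bm{\Theta}^{\Pi},W,I_a=1)=\theta_a^{\Pi}$ and the analogous equations for $U_{ab}$; known observation probabilities; independence of the pattern from $\bm{\Theta}^{\Pi}$ given $W$; and pattern invariance of the target moments. Under these, the inverse-probability identities $\E\{I_aS_a/p_a\mid W\}=m_a^{\Pi}(W)$ and $\E\{CU_{01}/\rho\mid W\}=\E(\theta_0^{\Pi}\theta_1^{\Pi}\mid W)$ follow by iterated expectations. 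The nonidentification constructions act only on the law of $\bm{\Theta}$, so they transfer verbatim. The signal-mean identities for $Z_1,Z_2$ and $L_f$ then follow by the same tower-property computation as in \eqref{eq:signals} and \eqref{eq:pairedrisk}, conditional on independent training.

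The step I expect to be the main obstacle is justifying the conditioning. The feature equations must hold conditional on $\bm{\Theta}^{\Pi}$ and $W$ jointly with the realized reference $\Pi_j$, which may itself be random (for example $\Pi_j=P_j$). The risk is that $\Pi_j$ depends on something affected by pattern assignment, such as native composition shifting under whole-site delivery. To rule this out, I would condition first on $(\Pi_j,W_j)$ and the training data, and use that $\Pi_j$ is specified pre-assignment, so it is measurable with respect to pre-assignment information independent of $C_j$ given $W$. Pattern invariance of target moments then ensures a single $\theta^{\Pi}$ is estimated across patterns. After integrating out $\Pi_j$, the unconditional equations hold.

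For inference and allocation, I would observe that Theorems~\ref{thm:rates}, \ref{thm:allocation}, \ref{thm:jointdesign} and Proposition~\ref{prop:finite} take as inputs only signal records satisfying these conditional mean equations, together with stated moment, Lindeberg, feature-error and independence conditions. Hence they transfer once those conditions are assumed for the $\Pi$-features, which is exactly the proviso in the statement.
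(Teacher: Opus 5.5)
Your proposal is correct and takes essentially the same route as the paper. You condition on the site reference and independent training, rerun the randomized-observation calculation to obtain $\E(Z_1\mid W)=\E(\tau^{\Pi}\mid W)$ and $\E(Z_2\mid W)=\E\{(\tau^{\Pi})^2\mid W\}$, and observe that the decomposition, projection, identification, inference and allocation proofs use only these equations and their own stated conditions, never a common reference; your extra step of integrating out a random $\Pi_j$ is covered by the hypothesis that randomization and the feature equations hold conditional on the site reference.
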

\begin{proof}
Condition on the site reference and independent training data. The randomized-observation calculation in \textup{(16)} gives $\E(Z_1\mid W)=\E(\tau^{\Pi}\mid W)$ and $\E(Z_2\mid W)=\E\{(\tau^{\Pi})^2\mid W\}$. The projection identities and subsequent proofs use these equations and their stated additional conditions; participant references can differ across sites.
\end{proof}

\begin{proposition}[Finite-site projection correction]
\label{prop:finiteprojection}
Conditional on the site response populations, the recorded basis and any independent feature-training data, and evaluation-arm counts, suppose $\E(\bm{z})=\bm{\tau}$, $\Cov(\bm{z})=\bm{C}$, and $\E(\bm{\widehat C})=\bm{C}$. Let $\bm{P}$ be a fixed orthogonal projection containing the intercept and $\bm{P}_0=\one\one^\trans/J$. Then
\begin{align*}
 \widehat B_F&=\{\bm{z}^\trans(\bm{P}-\bm{P}_0)\bm{z}-\tr[(\bm{P}-\bm{P}_0)\bm{\widehat C}]\}/J,\\
 \widehat R_F&=\{\bm{z}^\trans(\bm{I}-\bm{P})\bm{z}-\tr[(\bm{I}-\bm{P})\bm{\widehat C}]\}/J
\end{align*}
are unbiased for $B_F=\bm{\tau}^\trans(\bm{P}-\bm{P}_0)\bm{\tau}/J$ and $R_F=\bm{\tau}^\trans(\bm{I}-\bm{P})\bm{\tau}/J$. Their sum is the corresponding corrected finite-site total variance.
\end{proposition}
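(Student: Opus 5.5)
We will prove the two unbiasedness claims with the standard identity for the expectation of a quadratic form, and then obtain the sum by linearity. All expectations are taken conditional on the site response populations, the recorded basis, the independent feature-training data and the evaluation-arm counts. This conditioning makes $\bm{P}$ and $\bm{P}_0$ fixed matrices. It also fixes $\bm{\tau}$ as a constant vector, while $\bm{z}$ and $\bm{\widehat C}$ remain random.

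\textbf{Quadratic-form identity.} For any fixed symmetric $J\times J$ matrix $\bm{M}$, write $\bm{z}=\bm{\tau}+\bm{\varepsilon}$ with $\E\bm{\varepsilon}=0$ and $\Cov(\bm{\varepsilon})=\bm{C}$. Expanding gives
\[
 \bm{z}^\trans\bm{M}\bm{z}=\bm{\tau}^\trans\bm{M}\bm{\tau}+2\bm{\tau}^\trans\bm{M}\bm{\varepsilon}+\bm{\varepsilon}^\trans\bm{M}\bm{\varepsilon}.
\]
The middle term has mean zero. The last term satisfies $\E(\bm{\varepsilon}^\trans\bm{M}\bm{\varepsilon})=\E\tr(\bm{M}\bm{\varepsilon}\bm{\varepsilon}^\trans)=\tr(\bm{M}\bm{C})$. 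The trace is linear and $\E\bm{\widehat C}=\bm{C}$, so $\E\tr(\bm{M}\bm{\widehat C})=\tr(\bm{M}\bm{C})$. Combining these,
\[
 \E\{\bm{z}^\trans\bm{M}\bm{z}-\tr(\bm{M}\bm{\widehat C})\}=\bm{\tau}^\trans\bm{M}\bm{\tau}.
\]
This identity needs only finite second moments of $\bm{z}$ and integrability of $\bm{\widehat C}$. Both are implicit in the hypotheses, since $\Cov(\bm{z})$ and $\E\bm{\widehat C}$ are assumed to exist.

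\textbf{The two estimators.} Take $\bm{M}=\bm{P}-\bm{P}_0$ and divide by $J$; this gives $\E\widehat B_F=B_F$. Take $\bm{M}=\bm{I}-\bm{P}$ and divide by $J$; this gives $\E\widehat R_F=R_F$. Both matrices are symmetric, which is all the identity requires. The fact that $\bm{P}$ contains the intercept, so that $\bm{P}_0\le\bm{P}$, is used only to interpret the two matrices as orthogonal projections and hence $B_F,R_F\ge0$. It is not needed for unbiasedness.

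\textbf{The sum.} Since $(\bm{P}-\bm{P}_0)+(\bm{I}-\bm{P})=\bm{I}-\bm{P}_0$, adding the two estimators gives
\[
 \widehat B_F+\widehat R_F=\{\bm{z}^\trans(\bm{I}-\bm{P}_0)\bm{z}-\tr[(\bm{I}-\bm{P}_0)\bm{\widehat C}]\}/J.
\]
This is the corrected finite-site total variance: the centered sum of squares of $\bm{z}$ minus its noise trace. It is unbiased for $V_F=\bm{\tau}^\trans(\bm{I}-\bm{P}_0)\bm{\tau}/J$, and the equality holds recordwise, not only in expectation.

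\textbf{Main obstacle.} The only real issue is the conditioning. The argument requires $\bm{P}$ to be nonrandom given the conditioning information, which holds because the basis is recorded rather than built from outcomes. It also requires $\E\bm{\widehat C}=\bm{C}$ under that same conditioning. For the Honduras diagonal $\bm{\widehat\Omega}$ this follows from between-village independence, which makes $\bm{C}$ diagonal, together with the unbiasedness of the within-village sample variances.
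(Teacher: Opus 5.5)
Your proposal is correct and follows essentially the same route as the paper. The paper's proof is the quadratic-form identity $\E(\bm{z}^\trans\bm{A}\bm{z})=\bm{\tau}^\trans\bm{A}\bm{\tau}+\tr(\bm{A}\bm{C})$, applied to $\bm{P}-\bm{P}_0$ and $\bm{I}-\bm{P}$, with the sum obtained from $(\bm{P}-\bm{P}_0)+(\bm{I}-\bm{P})=\bm{I}-\bm{P}_0$; you additionally spell out the expansion and the trace-linearity step $\E\tr(\bm{M}\bm{\widehat C})=\tr(\bm{M}\bm{C})$, which the paper leaves implicit.
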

\begin{proof}
For a fixed symmetric $\bm{A}$, $\E(\bm{z}^\trans \bm{A} \bm{z})=\bm{\tau}^\trans \bm{A}\bm{\tau}+\tr(\bm{A}\bm{C})$. Apply this identity to $\bm{P}-\bm{P}_0$ and $\bm{I}-\bm{P}$; addition gives $\bm{I}-\bm{P}_0$.
\end{proof}
For independent arm samples from each site response population, $\bm{C}$ is diagonal and $\widehat C_{jj}=s_{1j}^2/n_{1j}+s_{0j}^2/n_{0j}$ is unbiased. When sites share estimated coefficients or a participant reference, we use the full covariance matrix to account for the resulting dependence. We establish unbiasedness from the expectation identity using only the specified means and covariances, and assess confidence-region coverage through a separate distributional argument.

Let $\bm{H}$ have orthonormal columns spanning the basis, with first column $\one/\sqrt J$, and define $\bm{\gamma}=\bm{H}^\trans\bm{\tau}/\sqrt J$, $q_2=\|\bm{\tau}\|^2/J$. Then $B_F=\sum_{k\ge2}\gamma_k^2$, $R_F=q_2-\|\bm{\gamma}\|^2$, and $V_F=q_2-\gamma_1^2$. Pythagoras gives $q_2\ge\|\bm{\gamma}\|^2$. For unbounded log outcomes, we constrain the primitive moments using this inequality; the upper bound used for binary outcomes pertains to their bounded support. By projecting the primitive moment confidence region, we obtain intervals with its coverage. In the application, we construct that region using a working normal approximation and an estimated covariance matrix.

For a fixed finite cohort under complete randomization, the Neyman subtraction instead obeys
\[
 \E_A\!\left(\widehat\tau^2-s_1^2/n_1-s_0^2/n_0\right)
   =\tau_N^2-S_{\tau,N}^2/N.
\]
For the fixed-cohort target, the Neyman subtraction differs from the squared mean effect by the displayed finite-cohort effect-variance term. To apply Proposition~1, we also require equal conditional means for the independent unit scores. We use independent feature training to control dependence from estimation and specify the sampling law separately to distinguish response-population sampling from treatment randomization in a fixed finite cohort.

\section{Further reference and multiple-target results}
\label{sec:furtherextensions}
We account for uncertainty in an estimated common participant reference and derive an allocation rule for several simultaneous risk targets.

\subsection{Uncertainty in a common participant reference}
\label{sec:commonreference}
When we estimate the participant reference from a common sample, the standardized site effects share the resulting estimation error. Let $\bm{X}_1^*,\ldots,\bm{X}_M^*$ be an independent reference sample and write $d_j(\bm{x})=m_{1j}(\bm{x})-m_{0j}(\bm{x})$, so $\tau_j=\int d_j\,dP^*$. Substituting its empirical distribution changes every site effect by
$\delta_j=M^{-1}\sum_{k=1}^M\{d_j(\bm{X}_k^*)-\tau_j\}$. We derive the first derivatives of the two variance components with respect to that shared distribution as
\begin{align}
 \psi_B(\bm{x})&=2\E_Q[(m(W)-\mu)\{d_j(\bm{x})-\tau_j\}],\nonumber\\
 \psi_R(\bm{x})&=2\E_Q[(\tau_j-m(W))\{d_j(\bm{x})-\tau_j\}],\qquad
 \psi_V=\psi_B+\psi_R.
 \label{eq:referencegradients}
\end{align}
All three functions have mean zero under $P^*$. In the residual-component gradient, the centered response contrast is weighted by the site effect remaining after conditioning on site characteristics. Estimating the common reference therefore contributes to residual-component uncertainty even when the conditional mean is estimated accurately.

\begin{proposition}[Two sources of uncertainty]
\label{prop:referenceBR}
Suppose the estimator at a fixed reference has a site influence expansion with vector $\bm{\phi}_j=(\phi_{Bj},\phi_{Rj})^\trans$. Suppose empirical-reference substitution has the first-order derivative \eqref{eq:referencegradients}, finite integrated squared response contrasts, and remainder negligible relative to the standard error of every reported contrast. Then
\begin{equation}
 \begin{pmatrix}\widehat B-B\\\widehat R-R\end{pmatrix}
 =\frac1J\sum_j\bm{\phi}_j+
   \frac1M\sum_k\begin{pmatrix}\psi_B(\bm{X}_k^*)\\\psi_R(\bm{X}_k^*)\end{pmatrix}
   +\text{remainder}.
 \label{eq:referenceexpansion}
\end{equation}
The first-order covariance is the sum of the site covariance divided by $J$ and the reference covariance divided by $M$. The empirical-reference functional itself has a quadratic remainder $O_p(M^{-1})$. Generated response fits must additionally satisfy the feature and derivative consistency conditions used for \eqref{eq:referenceexpansion}.
\end{proposition}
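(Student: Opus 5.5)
The plan is to treat $\widehat B$ and $\widehat R$ as functionals of two independent empirical measures: the site records and the reference sample $\bm{X}_1^*,\ldots,\bm{X}_M^*$. We then add their linear expansions. Write $B(P^*)$ and $R(P^*)$ for the population targets evaluated at the reference. The error splits as
\[
\widehat B-B(P^*)=\{\widehat B-B(P_M^*)\}+\{B(P_M^*)-B(P^*)\},
\]
and similarly for $R$. The first bracket is the fixed-reference estimation error evaluated at the empirical reference. By assumption it has the site expansion $J^{-1}\sum_j\bm{\phi}_j$, up to a remainder. The assumed derivative-consistency conditions are exactly what let us replace the fitted, reference-dependent influence terms by their population values at $P^*$.

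The second bracket is handled exactly, using the projection decomposition from the proof of Supplement Theorem~\ref{thm:versiondecomp}. Let $P_0$ and $P_W$ be the projections in $L_2(Q)$, and let $\delta_j$ be the induced perturbation of site effects, so that $\tau_j(P_M^*)=\tau_j+\delta_j$. Since $B=\|(P_W-P_0)\tau\|_2^2$ and $R=\|(I-P_W)\tau\|_2^2$, expanding the squares gives
\[
B(P_M^*)-B(P^*)=2\langle (P_W-P_0)\tau,\delta\rangle+\|(P_W-P_0)\delta\|_2^2,
\]
and the analogous identity for $R$ with $I-P_W$.

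The inner products are linear in the empirical measure. Substituting $\delta_j=M^{-1}\sum_k\{d_j(\bm{X}_k^*)-\tau_j\}$ and exchanging $\E_Q$ with the finite sum gives $M^{-1}\sum_k\psi_B(\bm{X}_k^*)$ and $M^{-1}\sum_k\psi_R(\bm{X}_k^*)$, with $\psi_B,\psi_R$ as in \eqref{eq:referencegradients}. Each has mean zero because $\int d_j\,dP^*=\tau_j$.

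For the quadratic terms, note that projections are contractions. Moreover, $\E\|\delta\|_2^2=M^{-1}\E_Q\Var_{P^*}\{d_j(\bm{X}^*)\}$, which is finite under the integrated squared-contrast assumption. Markov's inequality then gives $O_p(M^{-1})$ for both quadratic terms.

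For the covariance, independence of the reference sample from the site records makes the two leading sums independent, so their covariances add: $\bm{V}_{\rm site}/J+\bm{V}_*/M$. If the reference is internal, I would instead rewrite its contribution as site-level influences and keep the cross terms.

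I expect the main obstacle to be the first bracket. Its influence vectors $\bm{\phi}_j$ are computed with fitted quantities at $P_M^*$, not at $P^*$, so the remainder must be shown negligible relative to every reported standard error. I would first try to bound the difference $\bm{\phi}_j(P_M^*)-\bm{\phi}_j(P^*)$ by a Cauchy--Schwarz product of $\|\delta\|_2$ and the signal moments. That product is $o_p$ of the $\sqrt{J^{-1}+M^{-1}}$ scale by the stated derivative and remainder conditions. Where those conditions fail, I would fall back on invoking the hypothesis directly.
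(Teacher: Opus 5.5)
Your proposal is correct and follows essentially the same argument as the paper. Both use the exact projection expansions of $B(P_M^*)-B(P^*)$ and $R(P_M^*)-R(P^*)$, substitute $\delta_j$ to obtain $\psi_B,\psi_R$, bound the quadratic terms by contraction and $\E\|\delta\|_2^2$, add the fixed-reference site expansion, and add covariances by independence (or use site-level reference influences with cross terms); the only difference is that you sketch a Cauchy--Schwarz bound for replacing influence terms at $P_M^*$ by their values at $P^*$, whereas the paper simply invokes the assumed derivative and feature-remainder conditions for that step.
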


For the explained component the site contribution is ordinarily of order $J^{-1}$; for the residual component it can be of order $(Jr_J)^{-1}$. The reference contribution is of order $M^{-1}$ for both. When reference uncertainty dominates, we improve precision by increasing the reference sample; increasing the number of paired sites alone then has little effect. Near a zero component its derivative can vanish; a local quadratic analysis with an estimated reference also requires the reference perturbation. We derive the nondegenerate first-order expansion in this proposition and the boundary limit for a fixed reference in Theorem~\ref{thm:weakB}. When we estimate the reference from the same sampled sites, we also include the covariance between the site and reference contributions, as detailed in the proof.

\subsection{Allocation for multiple contrasts}
\label{sec:multitarget}
An experiment may need to estimate a vector of regional residual risks or several outcomes. Minimizing the variance of their average can leave one scientifically important contrast imprecise. Let $\bm{L}\in\mathbb R^K$ be an unbiased vector of paired-site features with conditional mean $\bm{a}(W)$, conditional covariance $\bm{S}(W)$, and target $\bm{\zeta}=\E \bm{L}$. Let $\bm{D}$ be a prespecified nonsingular diagonal matrix converting the coordinates to scientifically comparable units. Under oracle augmentation the scaled influence covariance is
\begin{equation}
 \bm{\Sigma}_D(\rho)=\bm{D}\Var\{\bm{a}(W)\}\bm{D}^\trans+
               \E\{\bm{D}\bm{S}(W)\bm{D}^\trans/\rho(W)\}.
 \label{eq:multicov}
\end{equation}
The largest eigenvalue is the largest variance among unit-length linear contrasts of $\bm{D}\bm{\zeta}$. We minimize this largest variance to improve precision for the least precise standardized contrast.

\begin{theorem}[Allocation for the least precise contrast]
\label{thm:minimax}
Suppose $W$ takes values in a fixed finite partition, $\E\|\bm{S}(W)\|<\infty$, costs are positive, and the probability bounds have a positive lower bound. Under a strictly feasible budget, minimizing $\lambda_{\max}\{\bm{\Sigma}_D(\rho)\}$ is a convex problem. At a binding optimum with a positive budget multiplier $\lambda$, there exists $\bm{H}\succeq0$, $\tr(\bm{H})=1$, supported on the eigenspace of the largest eigenvalue of $\bm{\Sigma}_D(\rho^*)$, such that
\begin{equation}
 \rho^*(w)=\min\left[u(w),\max\left\{\ell(w),
   \sqrt{\frac{\tr\{\bm{H}\bm{D}\bm{S}(w)\bm{D}^\trans\}}{\lambda c(w)}}\right\}\right].
 \label{eq:spectraldesign}
\end{equation}
These conditions, the budget constraint, and the complementary-slackness conditions are also sufficient. If the numerator is zero and $\lambda>0$, the lower bound is optimal. The finite-stratum problem is a semidefinite-constrained convex program.
\end{theorem}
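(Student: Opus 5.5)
The plan treats the finite-partition problem as a convex program in the vector $\rho=(\rho_h)$, with strata $h$ of mass $q_h$. First I show convexity. For each stratum, $\rho_h\mapsto \bm{D}\bm{S}_h\bm{D}^\trans/\rho_h$ is Loewner-convex on $\rho_h>0$, because for any vector $\bm{v}$ the scalar $\bm{v}^\trans\bm{D}\bm{S}_h\bm{D}^\trans\bm{v}/\rho_h$ is a nonnegative multiple of $1/\rho_h$. Summing with weights $q_h$ and adding the constant $\bm{D}\Var\{\bm{a}(W)\}\bm{D}^\trans$ keeps $\bm{\Sigma}_D(\rho)$ Loewner-convex. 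The map $\lambda_{\max}$ is convex and Loewner-monotone, being a supremum of linear functionals $\bm{v}^\trans\cdot\bm{v}$ over unit $\bm{v}$. The composition is therefore convex on the box $\ell\le\rho\le u$ intersected with the linear budget set. Introducing an epigraph variable $t$ with $\bm{\Sigma}_D(\rho)\preceq t\bm{I}$ writes the problem as a semidefinite-constrained convex program; a Schur-complement form for the $1/\rho_h$ terms is available but not needed.

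Next I form the Lagrangian
\[
t+\tr[\bm{H}\{\bm{\Sigma}_D(\rho)-t\bm{I}\}]+\lambda\Bigl(\sum_hq_hc_h\rho_h-b_0\Bigr),
\]
with $\bm{H}\succeq0$ and $\lambda\ge0$, keeping the box constraints explicit. Stationarity in $t$ forces $\tr(\bm{H})=1$. The $\rho$-dependent part separates across strata as $q_h[\tr\{\bm{H}\bm{D}\bm{S}_h\bm{D}^\trans\}/\rho_h+\lambda c_h\rho_h]$. This is the same scalar problem as in the proof of Theorem~4, so pointwise minimization over $[\ell_h,u_h]$ gives the clipped square root \eqref{eq:spectraldesign}. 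When the numerator is zero and $\lambda>0$, the minimizer is the lower bound. Complementary slackness $\tr[\bm{H}\{t\bm{I}-\bm{\Sigma}_D(\rho^*)\}]=0$ holds with both factors positive semidefinite, so $\bm{H}\{t\bm{I}-\bm{\Sigma}_D\}=0$. This forces $t=\lambda_{\max}$ and restricts the range of $\bm{H}$ to the top eigenspace.

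For necessity I need strong duality and dual attainment, which I expect to be the main obstacle. I would verify Slater's condition. Strict budget feasibility supplies $\rho$ in the box with expected cost strictly below $b_0$; if the box is degenerate, I would use the relative-interior version of Slater for the affine and box constraints. Taking $t>\lambda_{\max}\{\bm{\Sigma}_D(\rho)\}$ then makes the semidefinite constraint strict. Finiteness of the objective follows from $\E\|\bm{S}\|<\infty$ and $\ell>0$. In the finite-dimensional conic setting, Slater gives zero duality gap and existence of a dual optimum $(\bm{H},\lambda)$. The binding-budget hypothesis with $\lambda>0$ is then used only to write the square-root form.

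Sufficiency is a direct certificate argument. Suppose $\rho^*$, $\bm{H}$ and $\lambda$ satisfy the displayed rule, the budget and complementary slackness. For any feasible $\rho$, I would chain three inequalities:
\[
\lambda_{\max}\{\bm{\Sigma}_D(\rho)\}\ \ge\ \tr\{\bm{H}\bm{\Sigma}_D(\rho)\}\ \ge\ \tr\{\bm{H}\bm{\Sigma}_D(\rho)\}+\lambda\Bigl(\sum_hq_hc_h\rho_h-b_0\Bigr)\ \ge\ \text{the same expression at }\rho^*.
\]
The first holds because $\tr(\bm{H})=1$ and $\bm{H}\succeq0$. The second holds because $\lambda\ge0$ and $\rho$ satisfies the budget. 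The third follows from pointwise minimization. At $\rho^*$ the budget term is zero, whether because the budget binds or because $\lambda=0$. Complementary slackness turns the remaining trace into $\lambda_{\max}\{\bm{\Sigma}_D(\rho^*)\}$, which proves optimality.
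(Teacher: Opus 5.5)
Your proposal is correct and takes essentially the same route as the paper's proof. Both arguments use Loewner convexity of $\bm{S}(w)/r$ composed with the convex, monotone $\lambda_{\max}$; the epigraph constraint $\bm{\Sigma}_D(\rho)\preceq t\bm{I}$; a matrix multiplier $\bm{H}$ with $\tr(\bm{H})=1$ forced by stationarity in $t$; stratumwise clipped square-root minimization; complementary slackness with two positive-semidefinite factors; Slater feasibility from the strictly feasible budget and large $t$; and a direct certificate for sufficiency, which your three-inequality chain simply writes out in more detail than the paper does.
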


The matrix $\bm{H}$ places weight on contrasts that are least precise under the chosen design. A simple largest eigenvalue gives a rank-one $\bm{H}$; a tie can require a mixture of directions. The theorem extends the fixed importance-matrix calculation after Theorem~4 by determining those weights from the precision objective. The conditional covariance of the paired features incorporates lower-level measurement variation and shared-outcome correlations.

For finite strata, let $\bm{S}_h$ be the true conditional covariance and $\bm{\widehat S}_h$ a positive-semidefinite pilot estimate. If $\max_h\|\bm{D}(\bm{\widehat S}_h-\bm{S}_h)\bm{D}^\trans\|_{\mathrm{op}}\le\epsilon$ and the baseline scaled covariance is estimated within $\epsilon_0$ in operator norm, then the plug-in objective differs uniformly from the true objective by at most $\epsilon_0+\epsilon/\ell$. A design optimized to tolerance $\delta_{\rm opt}$ has objective regret at most
\begin{equation}
 2(\epsilon_0+\epsilon/\ell)+\delta_{\rm opt}.
 \label{eq:spectralregret}
\end{equation}
We obtain this bound by applying the eigenvalue perturbation inequality and the minimizing inequality for the estimated criterion. A small probability floor increases the effect of pilot covariance error on the objective, particularly for several residual targets. Alternatively, we can minimize the largest objective over a pilot confidence set of covariance matrices. On the coverage event for that set, the resulting objective bounds the achieved criterion simultaneously over those matrices.

\section{Simulation design and numerical results}
\label{sec:numericalprotocol}
We use NumPy's \texttt{SeedSequence(20260923)} for the numerical experiments and IST analysis, with four independent child streams, in order: IST, the main simulation, the allocation simulation, and the projection simulation. We run each stochastic experiment for 1,000 replications per setting. We save replicate records, summary CSV files, and seeds in the replication package. We calculate the IST features from the supplied raw file and generate the simulation data from the distributions below.

\subsection{Bounded data-generating mechanisms}
We generate sites in equally weighted strata $h=1,2,3$ with $w_h=(-1,0,1)$. At each site, we independently draw $U,T$ taking values $-1,+1$ with equal probabilities and set
\[
 \theta_0=.40+\gamma w_h+.04U,\qquad
 \tau=m_h+\sqrt{v_h}T,\qquad \theta_1=\theta_0+\tau.
\]
The parameter grid is
\begin{center}\small
\begin{tabular}{lccc}
\toprule
Mechanism & $(m_1,m_2,m_3)$ & $(v_1,v_2,v_3)$ & $\gamma$\\
\midrule
Mean modifier & $(-.04,.04,.12)$ & $(.004,.004,.004)$ & .04\\
Variance modifier & $(.04,.04,.04)$ & $(.0016,.0064,.0168)$ & .04\\
Prognostic modifier & $(.04,.04,.04)$ & $(.008267,.008267,.008267)$ & .16\\
Zero heterogeneity & $(.04,.04,.04)$ & $(0,0,0)$ & .04\\
\bottomrule
\end{tabular}\end{center}
For the prognostic mechanism, we use the double-precision representation of $124/15000$. All arm probabilities lie strictly inside $[0,1]$. In each arm, we independently sample $n_h=(80,40,20)$ Bernoulli outcomes with the specified site probability. We compute the first- and second-moment features from their sums. In these two-point mechanisms, the conditional squared residual is constant within each latent stratum, while its estimated feature varies because of participant sampling. We evaluate the allocation criterion with this sampling variation included, rather than an oracle criterion based only on latent squared residuals.

We use an independent pilot with 60 paired sites per stratum to estimate sample-mean predictors of $S_0,S_1,Q$. We generate $J_h=J/3$ evaluation sites per stratum and select paired sites independently with Bernoulli probability $\rho$. For each remaining site, we select one arm with equal probabilities. Marginal arm observation probability is $(1+\rho)/2$. We hold the number of participants in each observed arm fixed across observation patterns; whole-site assignment provides one arm sample of that size. We thereby compare information acquisition at fixed arm sample sizes; to evaluate a different participant budget, we would update the feature covariance and cost inputs together.

The four evaluation settings are $(J,\rho)=(300,.2),(300,.5),(900,.2),(300,0)$. Under $\rho=0$, we estimate and evaluate only $B_b$; we do not treat the pilot's paired second moments as information identifying a residual effect variance in the pure-cluster evaluation population. Complete-pair estimation uses $Z_1=C\widetilde\tau/\rho$, $Z_2=CQ/\rho$ with all $J$ records and the finite-stratum correction. The uncorrected method replaces $Q$ by $\widetilde\tau^2$. Each estimator's theoretical signal and uncertainty are recomputed under its own observation pattern.

The leading influence vectors for conditional fixed-stratum inference are centered within each stratum. With $m_h$ and $q_{2h}$ replaced by their signal means, they are
\[
 \phi_{Bj}=2(m_h-\mu)(Z_{1j}-m_h),\qquad
 \phi_{Rj}=Z_{2j}-q_{2h}-2m_h(Z_{1j}-m_h).
\]
The estimated variance is $\sum_hq_h^2s_{\phi,h}^2/J_h$. This Wald approximation is diagnostic at a zero explained component, where its first-order derivative vanishes. The primary boundary intervals use the primitive region described in Section~\ref{sec:projectioninference}.

\subsection{Coverage of projection confidence intervals}
Using an independent child stream, we repeat each mechanism at $J=300,900$ and $\rho=.2$ with a fresh pilot and evaluation sample in each replication. We evaluate coverage of the projected intervals constructed from the normal primitive moment confidence region used in the application, including the empty-region fallback. Table~\ref{tab:projectionperformance} reports coverage of all three components together, coverage of $R$, its mean interval width, and the empty-region frequency. Monte Carlo standard error of a proportion $p$ is $\sqrt{p(1-p)/1000}$.

\begin{table}[htbp]\centering\small
\caption{Implemented projection inference, 1,000 replications per setting and paired probability .2. Joint refers to simultaneous inclusion of $B_b,R_b,V$. Width is in squared-outcome units.}
\label{tab:projectionperformance}
\begin{tabular}{lrrrrr}
\toprule
Scenario & $J$ & Joint cov. & Cov. $R$ & Width $R$ & Empty rate \\
\midrule
Mean modifier & 300 & 1.000 & 1.000 & 0.0268 & 0.001 \\
Mean modifier & 900 & 0.999 & 1.000 & 0.0179 & 0.000 \\
Prognostic modifier & 300 & 1.000 & 1.000 & 0.0277 & 0.001 \\
Prognostic modifier & 900 & 0.999 & 1.000 & 0.0198 & 0.000 \\
Variance modifier & 300 & 0.998 & 1.000 & 0.0297 & 0.002 \\
Variance modifier & 900 & 0.997 & 1.000 & 0.0201 & 0.000 \\
Zero heterogeneity & 300 & 0.992 & 1.000 & 0.0735 & 0.060 \\
Zero heterogeneity & 900 & 0.991 & 1.000 & 0.0296 & 0.021 \\
\bottomrule
\end{tabular}

\end{table}

We evaluate coverage under the stated bounded mechanisms and site counts. For the application, we interpret the regional intervals using the primitive-moment normal approximation at the smaller sample size of its smallest geographic stratum, which contains fewer sites than a simulation stratum.

\subsection{Corrected measurement and information rates}
\begin{table}[htbp]\centering\footnotesize
\caption{Residual-variation comparison at $J=300$, $\rho=.2$ (1,000 replications). Bias and RMSE refer to the same residual target. The primary interval procedure is evaluated separately in Table~\ref{tab:projectionperformance}.}
\label{tab:allcomparators}
\begin{tabular}{llrr}
\toprule
Scenario & Method & Bias & RMSE \\
\midrule
Mean modifier & Fusion & -0.00004 & 0.00445 \\
Mean modifier & Complete pair & 0.00005 & 0.00367 \\
Mean modifier & Uncorrected & 0.01397 & 0.01449 \\
Variance modifier & Fusion & 0.00028 & 0.00464 \\
Variance modifier & Complete pair & -0.00007 & 0.00465 \\
Variance modifier & Uncorrected & 0.01413 & 0.01515 \\
Prognostic modifier & Fusion & -0.00019 & 0.00416 \\
Prognostic modifier & Complete pair & 0.00011 & 0.00423 \\
Prognostic modifier & Uncorrected & 0.01372 & 0.01448 \\
Zero heterogeneity & Fusion & 0.00007 & 0.00320 \\
Zero heterogeneity & Complete pair & 0.00001 & 0.00309 \\
Zero heterogeneity & Uncorrected & 0.01417 & 0.01460 \\
\bottomrule
\end{tabular}

\end{table}
\begin{table}[htbp]\centering\footnotesize
\caption{Corrected fusion: precision as the number of sites and the pairing probability vary. Rows are extracted from the completed simulation summaries for the four reported mechanisms.}
\label{tab:allrates}
\begin{tabular}{lrrrr}
\toprule
Scenario & $J$ & $\rho$ & RMSE $B$ & RMSE $R$ \\
\midrule
Mean modifier & 300 & 0.2 & 0.00139 & 0.00445 \\
Mean modifier & 300 & 0.5 & 0.00129 & 0.00257 \\
Mean modifier & 900 & 0.2 & 0.00085 & 0.00253 \\
Variance modifier & 300 & 0.2 & 0.00031 & 0.00464 \\
Variance modifier & 300 & 0.5 & 0.00023 & 0.00281 \\
Variance modifier & 900 & 0.2 & 0.00010 & 0.00264 \\
Prognostic modifier & 300 & 0.2 & 0.00026 & 0.00416 \\
Prognostic modifier & 300 & 0.5 & 0.00021 & 0.00261 \\
Prognostic modifier & 900 & 0.2 & 0.00009 & 0.00234 \\
Zero heterogeneity & 300 & 0.2 & 0.00019 & 0.00320 \\
Zero heterogeneity & 300 & 0.5 & 0.00015 & 0.00199 \\
Zero heterogeneity & 900 & 0.2 & 0.00006 & 0.00181 \\
\bottomrule
\end{tabular}

\end{table}
We provide all evaluated mechanisms, methods, and targets, including $V$, in the machine-readable summaries. We report four mechanisms to compare the effects of information acquisition and measurement correction, including settings in which complete-pair estimation has smaller RMSE. We identify the residual target from the cross-arm moment and compare finite-sample RMSE using each method's observation probabilities and feature variances.

\subsection{Independent-pilot allocation experiment}
For the allocation experiment, we use the residual-variance mechanism, an independent pilot of 180 paired sites, and 600 evaluation sites, equally divided among the contexts. From the pilot, we estimate $f_h$ by the mean effect and estimate $a_h,\nu_h^2$ from $L_f=Q-2f_h\widetilde\tau+f_h^2$. We minimize the estimated conditional-variance criterion at expected pairing fraction .30, equal incremental costs, and bounds [.10,.90]. We evaluate estimation of the absolute risk
$\frac13\sum_h\{v_h+(m_h-f_h)^2\}$ conditional on that replication's pilot. We use the same evaluation features and uniform random draws for paired-site selection under both designs, so the comparison is paired across replications. In each replication, we re-estimate the pilot inputs, including the conditional variances, using the independent pilot outcomes. Across replications, the estimated rule has mean pairing probabilities of about $(.112,.252,.535)$.

\subsection{Joint site and participant allocation example}
\label{sec:extensionnumerics}
For two contexts with masses $(0.4,0.6)$, we specify
$\bm{A}=(0.01,0.06)$, $\bm{D}=(0.4,0.8)$, $\bm{E}=(1,2)$,
fixed costs $(5,8)$, participant costs $(0.2,0.3)$,
and candidate participant counts $2,\ldots,40$.
The expected budget is 2 per eligible site and the pairing bounds are
$[0.02,0.90]$. All candidate Lagrangian solutions are interior.
We obtain counts $(35,22)$ and pairing probabilities
$(0.09222,0.17778)$ from the joint rule, with design variance 0.43643.
The best common count is 24, with variance 0.44152.
Compared with requiring a common count, allowing context-specific
participant counts reduces the variance by 1.15\% for these contexts,
costs, and budget. We verify the optimum by
exhaustively enumerating all $39^2$ count pairs and independently
optimizing the pairing probabilities for each pair.
We jointly choose which sites to pair and how many participants to
measure, using the calculation implemented in
\path{replication/code/v28/theory_checks.py}. We specify the costs and budget in normalized planning units. These comparisons use the
design-dependent criterion $\E\{\nu_{f,n(W)}^2(W)/\rho(W)\}$ under
oracle augmentation; the common $\Var\{a_f(W)\}$ term does not affect
the ranking.

To check a binding probability constraint, we retain the same contexts,
variance inputs, count menus, costs, budget, and upper bound, and raise
only the lower bound to 0.10. Direct minimization of
\textup{(29)} gives counts $(32,22)$ and probabilities
$(0.10000,0.17626)$, with criterion 0.43680 and expected cost 2.
The lower bound binds in the first context. The interior product rule
would still choose $(35,22)$; thus clipping can change the optimal count.
For each of the $39^2$ count pairs, we independently minimize the
criterion over its feasible one-dimensional probability interval,
including both endpoints. The exhaustive minimum agrees with the
Lagrangian result to $6\times10^{-17}$, and the primal--dual gap is
below $2\times10^{-16}$. The best common count remains 24, with
criterion 0.44152. We retain the original interior comparison and save
both checks, including all enumerated candidates, in
\path{replication/v38/results/joint_allocation_boundary_checks.json}.
The reproducible calculation is
\path{replication/v38/code/check_joint_allocation.py}; it checks budget
equality explicitly and reports an integer-menu cost jump when tied
choices would require mixing. In both examples, we constrain the expected cost of Bernoulli
site selection.

\section{Application details}
\paragraph{Targets and measurement records.}
We use the same corrected cross-arm features for the population
results and the application calculations, holding different quantities
fixed in each analysis. Table~\ref{tab:targetmap} records the targets,
the quantities held fixed, and the corresponding inferential results. The predictor $m$ is the population
conditional mean; $f_b$ is its projection onto a specified span; a pilot
predictor $f$ is fixed before evaluation. We can estimate the risk of a predictor fixed before evaluation
without requiring that predictor to be consistent for $m$.

Theorem~1 describes observation requirements in
the projected-law experiment. Propositions~1 and
\ref{prop:independent_cell_covariance} recover the needed first and
quadratic moments from finite independent response samples. Using the known pattern probabilities, we construct the unbiased
augmented signals in \textup{(16)}, then average and transform
those moments to estimate the chosen target.

\begin{table}[htbp]\centering\small
\caption{Targets, evaluation records, and inferential results. All
corrected products use distinct independent response units under the
specified response-sampling model.}
\label{tab:targetmap}
\begin{tabularx}{\textwidth}{R{.25\textwidth}Y Y}
\toprule
Target & Evaluation and quantities held fixed & Result\\
\midrule
Population $R$ & Sampled sites; conditional-mean predictor estimated
on separate site folds & Theorem~3: total-site and
paired-site information rates\\
Population $B_b,R_b$ & Prespecified explanatory span; fixed stratum
mixture when using a partition & Theorem~3 and
Proposition~2\\
IST $\mathcal R(f)$ & Independent pilot predictor and augmentation;
151-hospital evaluation frame & Conditional deletion reconstruction;
Theorem~4 supplies the planning criterion\\
Honduras $B_F,R_F$ & 40-village frame and recorded attribute basis;
household response sampling & Proposition~\ref{prop:finiteprojection}:
unbiased trace correction; working primitive intervals\\
\bottomrule
\end{tabularx}
\end{table}

For a future hybrid experiment, investigators also fix the site law,
participant reference, and intervention versions across assignment
patterns. A candidate participant-count menu changes measurement
precision while preserving these targets, as specified in
Proposition~\ref{prop:cell_count_allocation}.

\label{sec:istprotocol}
\subsection{IST data definitions, corrected features, and allocation}
\paragraph{Source and analysis frame.}
The raw file is \texttt{IST\_corrected.csv}, included in the reproducibility package and attributed to Edinburgh DataShare \citep{ISTdata}. Its SHA-256 is recorded in \texttt{ist\_results.json} and the package manifest. We read it with Latin-1 encoding. The source contains 19,435 records and 466 distinct values of \texttt{HOSPNUM}. A patient is assigned aspirin when \texttt{RXASP=Y}; the comparison is \texttt{RXASP=N}. Age group is \texttt{AGE>80}.

For 14-day mortality, we use \texttt{ID14} when \texttt{SET14D=1} and otherwise mark the outcome missing. For the six-month composite, \texttt{OCCODE} in $\{1,2\}$ gives 1 and in $\{3,4\}$ gives 0; other codes are missing. The trial's factorial design assigned heparin avoidance, 5,000 IU twice daily, and 12,500 IU twice daily with planned marginal probabilities $(.50,.25,.25)$, crossed with aspirin assignment \citep{IST1997}. The public field \texttt{RXHEP} codes these as N, L, and M, with H denoting the 12,500-IU arm in the trial's initial pilot phase \citep{Sandercock2011}. We average over the randomized heparin law, allowing an aspirin--heparin interaction. For age group $g$ at hospital $j$, write $\eta_{agj}(h)$ for the response mean under aspirin arm $a$ and heparin assignment $h$, and let $\lambda=(.50,.25,.25)$. Our standardized arm target is
\[
 \theta_{aj}^{*}=\sum_g\pi_g^*\sum_{h\in\{N,L,M\}}\lambda_h\eta_{agj}(h),
 \qquad \tau_j^{\mathrm{benefit}}=\theta_{0j}^{*}-\theta_{1j}^{*}.
\]
Under the response-population sampling and availability conditions below, each aspirin-arm mean averages over this same factorial law. We do not condition the target on either arm's realized heparin proportions or post-stratify to their finite-cohort average. Hospital-specific patient distributions within age groups remain part of the target. The information-deletion calculation instead holds the observed patient arrays fixed and reconstructs their corrected risk estimate.

We require at least two observed outcomes in each hospital by aspirin by age cell for both endpoints, applying this rule once to form the common 226-hospital frame. Inclusion depends on counts and outcome availability, not on response values or estimated treatment effects. We assume outcome-independent availability within those cells for complete-case means and weight each retained hospital equally. For regional component inference, we treat hospitals as independent response-population draws within the three strata, condition on their counts 70, 129, and 27, and fix mixture masses at $J_h/226$. These intervals concern response-population sampling of hospitals; exact treatment-randomization inference for a fixed vector of 226 hospital effects has a different sampling target. In the deletion experiment, we instead condition on the realized 151-hospital evaluation frame and the pilot.

We start from the public frame of 466 hospital identifiers and 19,435 patients and exclude 240 identifiers and 3,388 patients through the eligibility rule. We retain a 226-hospital frame with 16,047 patients, of whom 5,541 belong to the 75 pilot hospitals and 10,506 to the 151 evaluation hospitals. We interpret the estimates within this eligible analysis population, defined by response availability.

Other Europe is coded by the country labels \texttt{ITAL, SWIT, POLA, NETH, SWED, NORW, SPAI, CZEC, PORT, BELG, AUST, GREE, HUNG, SLOK, FINL, EIRE, SLOV, DENM, ROMA, FRAN}. United Kingdom is \texttt{UK}; all remaining labels form the third group. We use these descriptive groups without selecting site moderators through treatment-interaction tests.

\begin{table}[htbp]\centering\small
\caption{Regional residual variation in age-standardized 14-day aspirin benefit on the common 226-hospital frame. Signed variance estimates are retained. Upper limits use the simultaneous primitive construction across regions for this endpoint; they rely on the primitive approximation.}
\label{tab:istcontext}
\begin{tabular}{lrrr}
\toprule
Region & Sites & Residual variance & Upper limit \\
\midrule
United Kingdom & 70 & -0.00095 & 0.00338 \\
Other Europe & 129 & 0.00555 & 0.01381 \\
Other countries & 27 & -0.00222 & 0.00531 \\
\bottomrule
\end{tabular}

\end{table}

\paragraph{Exact age-mixture products.}
For group $g\in\{0,1\}$ and arm $a$, let $\bar Y_{agj}$ be the complete-case mean and
\[
 U_{agj}=\frac{(\sum_iY_{iagj})^2-\sum_iY_{iagj}^2}
                  {n_{agj}(n_{agj}-1)}.
\]
With fixed participant-reference weights $(\pi_0^*,\pi_1^*)=(.75,.25)$, distinct from the whole-site arm-assignment probabilities $\pi_a(W)$,
\[
 S_{aj}=\sum_g\pi_g^*\bar Y_{agj},\qquad
 U_{aa,j}=\sum_g\pi_g^{*2}U_{agj}
         +2\pi_0^*\pi_1^*\bar Y_{a0j}\bar Y_{a1j},
\]
and $U_{01,j}=S_{0j}S_{1j}$. Under independent patient sampling within the response-population model and the stated randomized arm allocation, these are unbiased for the corresponding age-standardized moments conditional on cell sample sizes and the hospital. We compute all four cells from disjoint patients and use the same specified age reference in both arms. When we change the participant reference, we compare different site-effect targets and interpret their variance difference as a target comparison rather than an additive composition contribution.

\paragraph{Pilot and information deletion.}
We draw a stratified independent pilot comprising 23 UK hospitals, 43 other-European hospitals, and nine hospitals from the third region. We save the selected IDs, remaining evaluation IDs, and seed stream in the result file. The enrollment-size strata use cutoffs 38 and 77 computed from baseline counts in the entire eligible frame; evaluating these covariates does not use response values. The pilot's regional mean predictor is fixed during deletion. Its corrected squared-error features are grouped by enrollment-size stratum to estimate an augmentation mean and variance. For planning, we impose a variance floor of $10^{-6}$ and restrict pairing probabilities to [.10,.90]. Costs are one per additional paired hospital. Both designs have expected evaluation cost $151\times.30=45.3$ paired-hospital units plus the same fully paired pilot.

\begin{table}[htbp]\centering\small
\caption{Independent-pilot allocation for the 151 evaluation hospitals. Share refers to the evaluation frame; pilot SD is the sample standard deviation of the corrected prediction-error feature. Expected paired counts sum to 45.3.}
\label{tab:istallocation}
\begin{tabular}{lrrrr}
\toprule
Enrollment stratum & Share & Pilot SD & Pair probability & Expected pairs \\
\midrule
Small & 0.311 & 0.0234 & 0.401 & 18.8 \\
Middle & 0.351 & 0.0236 & 0.403 & 21.4 \\
Large & 0.338 & 0.0054 & 0.100 & 5.1 \\
\bottomrule
\end{tabular}

\end{table}

In each deletion replicate, we retain one arm, selected with equal probabilities, in each nonpaired hospital. The code first creates a dictionary of the available arm arrays, then computes means and products only from those arrays. We estimate all outcome-dependent functions in the independent pilot, so evaluation requires no patient-level regression refitting or access to hidden-arm outcomes. To check this restriction, we arbitrarily change the hidden outcomes and verify that the observable features remain unchanged.

The scientific target is the absolute risk of the pilot predictor in the eligible evaluation population. The realized full-information corrected estimate is the deletion reference, $0.0032237$. Conditional on the fixed frame and pilot, the augmented deletion estimator is exactly unbiased for this reference. We use the pilot to fix the predictor, augmentation, and allocation inputs. The evaluation-frame second moment is estimated from the retained experimental records. The two exact conditional variances are $1.7110\times10^{-5}$ (uniform) and $1.1248\times10^{-5}$ (pilot allocation). Population sampling and uncertainty of the reference are separate from this design-deletion calculation. The simulation evaluates population-risk error with a known target.

The small, middle, and large enrollment strata contain 27, 22, and 26 pilot hospitals and 47, 53, and 51 evaluation hospitals. Their pilot sample variances of $L_f$ are .00054908, .00055565, and .00002949; none reaches the $10^{-6}$ planning floor. The allocation masses are therefore $(47,53,51)/151$, whereas the regional heterogeneity mixture uses $(70,129,27)/226$. The protocol file \texttt{ist\_protocol\_inputs.json} records these counts, the fixed age weights, pairing probabilities, common budget, and unrounded RMSE values. We calculate the percentage reductions in the main text from the unrounded results, rather than the displayed table values. These diagnostics concern the specified pilot; the simulation experiment re-estimates its pilot inputs in each replication.

\subsection{Honduras measurement protocol}
\label{sec:hondurasprotocol}
\paragraph{Frame, source variables, and observed responses.}
We use the supplied six-country file for \citet{Banerjee2015}, \path{pooled_hh.dta}, beginning with its 2,403 Honduras households in 80 villages. We retain the 40 intervention villages, selected by \texttt{country=3} and nonmissing \path{invillage_treat}; the 794 households in pure-control villages are excluded. Within this frame, \texttt{treatment} equals \path{invillage_treat} and records household assignment. We target the direct household assignment contrast in these intervention villages, weighting villages equally, rather than the total village-policy contrast between the 80 villages. For a causal interpretation, we assume consistency for the specified Graduation package, a stable assignment environment, and no within-village interference in the simple response model. To allow spillovers, researchers would specify exposure averaging compatible with the feature independence conditions in addition to household randomization.

Endline monthly per-capita consumption is \path{ctotal_pcmonth_end}. We convert its raw numerical values to double precision before taking the natural logarithm. Of 1,609 households in the within-village randomized frame, 1,529 have observed consumption: 762 of 810 controls and 767 of 799 treated households. Missingness is therefore 48 controls (5.93\%) and 32 treated households (4.01\%). Every observed consumption value is positive; no zero or negative response is removed. All 40 villages have at least three observations per arm, with minima eight controls and seven treated households. No village is dropped by this outcome-availability check. To identify the native response-population contrast, we assume response-independent observation within village and arm, in addition to the independent-household model below. Under this condition and the independent-household model, we identify the native response-population contrast from the complete cases despite differential missingness.

The fixed additive basis is an intercept, log \path{control_h_population}, and the unchanged zero/one \path{control_h_dist_municipio}. The raw labels identify a village-population field and a municipality-distance indicator. We use the recorded fields as fixed descriptors: population units and the original measurement record remain unverified, and we retain the binary distance coding associated with the ambiguous label ``1--5 km'' without assigning a precise cutoff or treating it as continuous distance. Each variable has exactly one distinct nonmissing value in each of the 40 villages. Both have 27 missing household-level copies (two controls and 25 treated households), all with observed outcomes. We assign each household its village's recorded unique value and retain these households, using the observed village attribute without imputing from outcomes or excluding households. For a prospective interpretation as baseline context, researchers would verify the original measurement timing as well as the field definitions.

\paragraph{Exact point correction under the specified response model.}
Condition on the 40-village frame, recorded $W$, and observed arm counts. We use these recorded native descriptors directly; they require no household composition pilot and do not share an outcome-trained fit with the response features. The arm means $\theta_{aj}=\int m_{aj}\,d\Pi_j$ use the same native population of eligible households $\Pi_j$ within village $j$, without imposing a common reference across villages. We assume identically distributed response samples within village and arm, with these means, variances $\sigma_{aj}^2$, and $\tau_j=\theta_{1j}-\theta_{0j}$. We assume independence among distinct households within arms, between the two arm samples, and across villages. Write $s_{aj}^2$ for the sample variance with denominator $n_{aj}-1$. Then
\[
 Z_j=\bar Y_{1j}-\bar Y_{0j},\qquad
 \widehat v_j=s_{1j}^2/n_{1j}+s_{0j}^2/n_{0j},\qquad
 v_j=\sigma_{1j}^2/n_{1j}+\sigma_{0j}^2/n_{0j}.
\]
For each arm, form the product across distinct households,
\[
 U_{aj}=\frac{(\sum_iY_{iaj})^2-\sum_iY_{iaj}^2}
                  {n_{aj}(n_{aj}-1)},\qquad
 Q_j=U_{1j}+U_{0j}-2\bar Y_{1j}\bar Y_{0j}
     =Z_j^2-\widehat v_j.
\]
We verify the algebraic equality by comparing the two implementations, which agree within $1.3\times10^{-14}$. Proposition~1 gives $\E(Q_j)=\tau_j^2$ under the stated response model, while $\E(Z_j)=\tau_j$ and $\E(\widehat v_j)=v_j$. We obtain these point identities under finite second moments. For a fixed finite cohort, averaging over treatment randomization gives the additional subtraction $S_{\tau,j}^2/N_j$ stated in Section~\ref{sec:targetextensions}; we distinguish that target from the response-population effect square estimated here.

Let $\bm{X}$ be the fixed $40\times3$ additive design, $\bm{P}=\bm{X}(\bm{X}^\trans \bm{X})^{-1}\bm{X}^\trans$, $\bm{P}_0=\one\one^\trans/J$, and $\bm{C}=\diag(v_j)$, with $J=40$. The finite-site targets and estimators are
\begin{align*}
 B_F&=\bm{\tau}^\trans(\bm{P}-\bm{P}_0)\bm{\tau}/J,&
 \widehat B_F&=\{\bm{Z}^\trans(\bm{P}-\bm{P}_0)\bm{Z}-\tr[(\bm{P}-\bm{P}_0)\bm{\widehat C}]\}/J,\\
 R_F&=\bm{\tau}^\trans(\bm{I}-\bm{P})\bm{\tau}/J,&
 \widehat R_F&=\{\bm{Z}^\trans(\bm{I}-\bm{P})\bm{Z}-\tr[(\bm{I}-\bm{P})\bm{\widehat C}]\}/J,
\end{align*}
where $\bm{\widehat C}=\diag(\widehat v_j)$ and $V_F=B_F+R_F$. These are projection components for this recorded village basis, not the unrestricted population components $B$ and $R$.

For either $\bm{A}=\bm{P}-\bm{P}_0$ or $\bm{A}=\bm{I}-\bm{P}$, the uncorrected quadratic includes response-sampling noise:
\[
 \E\{\bm{Z}^\trans \bm{A}\bm{Z}/J\}=\bm{\tau}^\trans \bm{A}\bm{\tau}/J+\tr(\bm{A}\bm{C})/J.
\]
Indeed, put $\bm{Z}=\bm{\tau}+\bm{\varepsilon}$, use $\E(\bm{\varepsilon})=0$ and $\E(\bm{\varepsilon}\bm{\varepsilon}^\trans)=\bm{C}$, and take the expectation of the quadratic expansion. Both choices of $\bm{A}$ are positive semidefinite, so uncorrected OLS explained and residual sums contain nonnegative measurement-noise terms. Subtracting the unbiased trace terms proves the point-expectation assertion in Proposition~\ref{prop:finiteprojection}. The observed uncorrected total is .04745231, whereas $\widehat V_F=.00744680$; their difference is $(1-1/J)\overline{\widehat v}=.04000551$. We retain the signed corrected estimates; their unbiasedness is a point-estimation property, while the interval approximation is described below.

\paragraph{Working uncertainty for the primitive moments.}
We save the basis $\bm{H}_F\in\mathbb R^{J\times r}$ with first column $\one/\sqrt J$, followed by the left singular vectors of the centered, standardized specified $W$; it spans $\bm{X}$ and is fixed without using outcomes. Define $\bm{\gamma}=\bm{H}_F^\trans\bm{\tau}/\sqrt J$ and $q=\|\bm{\tau}\|^2/J$. Then
\[
 B_F=\sum_{k=2}^r\gamma_k^2,\qquad
 R_F=q-\|\bm{\gamma}\|^2,\qquad V_F=q-\gamma_1^2,
 \qquad q\ge\|\bm{\gamma}\|^2.
\]
We estimate the primitive moments by $(\bm{H}_F^\trans \bm{Z}/\sqrt J,\bar Q)$. Proposition~\ref{prop:independent_cell_covariance} gives the exact identities $\Cov(Z_j,Q_j)=2\tau_jv_j$ and $\Var(Q_j)=4\tau_j^2v_j+d_j$ under the independent response model with finite second moments, where
\[
 d_j=\frac{2\sigma_{0j}^4}{n_{0j}(n_{0j}-1)}+
     \frac{2\sigma_{1j}^4}{n_{1j}(n_{1j}-1)}+
     \frac{4\sigma_{0j}^2\sigma_{1j}^2}{n_{0j}n_{1j}}.
\]
Thus the exact primitive covariance has blocks $\bm{H}_F^\trans \bm{C}\bm{H}_F/J$, $\bm{H}_F^\trans(2\tau_jv_j)_j/(J\sqrt J)$, and $\sum_j(4\tau_j^2v_j+d_j)/J^2$. These identities hold for non-Gaussian responses as well. For interval construction, we substitute $Z_j$ and sample variances, retaining the off-diagonal primitive-moment covariances in the saved matrix. This plug-in calculation estimates the unknown covariance ingredients; it does not inherit the exact unbiasedness of the point correction. We use the marginal standard errors to construct a Bonferroni normal confidence rectangle with critical value 2.4977 for the four additive coordinates. The normal approximation concerns the primitive estimates, rather than the distribution of individual household responses.

For the reported containing bounds, let $[\ell_k,u_k]$ be the rectangle for $\gamma_k$ and $[\ell_q,u_q]$ the interval for $q$. Define $a_k=\min_{t\in[\ell_k,u_k]}t^2$ and $b_k=\max_{t\in[\ell_k,u_k]}t^2$. If $u_q\ge\sum_ka_k$, intersecting with $q\ge\|\bm{\gamma}\|^2$ yields the containing ranges
\begin{align*}
 B_F&\in[\sum_{k\ge2}a_k,\ \min\{\sum_{k\ge2}b_k,u_q-a_1\}],\\
 R_F&\in[\max\{0,\ell_q-\sum_kb_k\},\ u_q-\sum_ka_k],\\
 V_F&\in[\max\{0,\sum_{k\ge2}a_k,\ell_q-b_1\},\ u_q-a_1].
\end{align*}
We report containing ranges for the feasible projection. If the primitive intersection is empty, we report $[0,\infty)$ for each component; the main additive basis has a nonempty intersection. We calculate the component ranges using the unrestricted log-consumption scale and report the explanation ratio as a point estimate. We interpret the nominal 95\% label through the working normal approximation with estimated moment ingredients, small arm samples, and unadjusted historical selection, rather than finite-sample randomization inference. Because the rectangle and containing ranges can depend on a rotation of $\bm{H}_F$, we save the exact outcome-independent basis in \path{Native2W_additive_frozen_basis.csv}.

\paragraph{Basis sensitivity and reproducibility.}
When we add the single interaction between log population and the recorded distance indicator, the point explanation fraction changes from 62.14\% to 49.47\%, while maximum leverage rises from .209 to .548. We compare these fixed bases to assess sensitivity across different projection targets. We interpret the component ranges under this fixed reanalysis specification and report the explanation fraction descriptively, with historical application and endline screening unadjusted. We retain the main additive component estimates and uncertainty: the lower bound for $B_F$ is small, and the residual range includes zero.

The reproducibility files contain the household flow, native-attribute frame, arm means and variances, distinct-household products, fixed bases, leverage, covariance matrices, transformation code, and raw-file checksum.

\subsection{Finite-sample calibration with the Honduras village sizes}
\label{sec:honduras_smallJ_stress}
We assess the implemented primitive rectangle and containing ranges using the 40 recorded villages' fixed arm counts, saved additive basis, and observed arm sample variances as simulation inputs. We set the control mean to zero and define artificial village effects by
\[
 \tau_j^{(a)}=\overline Z+a(Z_j-\overline Z),\qquad a\in\{0,.35,1\}.
\]
We use the observed differences to specify these simulation means; the original trial effects remain unknown. Independent household errors have either a standard normal law, centered $\chi^2(3)/\sqrt6$, or $t(5)\sqrt{3/5}$, scaled to the recorded arm variances. Each setting uses 2,000 replications, with seed 20261006 and a single generator consumed in the saved order. We recompute $Z_j,Q_j$, sample variances, the four-coordinate rectangle, and the containing ranges in every replication using the application code's formulas.

\begin{table}[htbp]\centering\small
\caption{Calibration of the Honduras interval procedure under independent response sampling with fixed village sizes, 2,000 replications per setting. Primitive refers to simultaneous coverage of $(\bm{\gamma},q)$; components refers to simultaneous inclusion of $B_F,R_F,V_F$. The artificial residual targets are 0, .00500, and .04081 at $a=0,.35,1$, respectively. Width has squared log-consumption units.}
\label{tab:honduras_smallJ_stress}
\begin{tabular}{lrrrr}
\toprule
Error distribution & $a$ & Primitive & Components & Mean $R_F$ width\\
\midrule
Normal & 0 & .9650 & .9750 & .04979\\
Normal & .35 & .9575 & .9905 & .05773\\
Normal & 1 & .9635 & .9925 & .11064\\
Centered $\chi^2(3)$ & 0 & .9560 & .9740 & .05373\\
Centered $\chi^2(3)$ & .35 & .9545 & .9930 & .06219\\
Centered $\chi^2(3)$ & 1 & .9660 & .9945 & .11369\\
Standardized $t(5)$ & 0 & .9705 & .9770 & .05404\\
Standardized $t(5)$ & .35 & .9665 & .9920 & .06208\\
Standardized $t(5)$ & 1 & .9655 & .9945 & .11322\\
\bottomrule
\end{tabular}

\end{table}

We obtain primitive coverage from .9545 to .9705 and simultaneous component inclusion from .9740 to .9945. Every residual range contains its artificial target, and no primitive intersection is empty in these replications. For a reported inclusion proportion $p$, the Monte Carlo standard error is $\sqrt{p(1-p)/2000}$; it is at most .0047 for the primitive and joint-component proportions in the table. We obtain wide residual ranges, with mean widths .04979--.11369, and interpret their high inclusion rates together with their limited resolution for separating small positive residual variation from zero. We use this experiment to assess the interval implementation at the stated sizes and distributions, imposing outcome availability and household independence as part of the simulation model. We retain outcome availability and household independence as assumptions for interpreting the original data, distinct from the interval calibration reported here.

\end{document}